\documentclass[12pt]{article} 
\usepackage[utf8]{inputenc}
\usepackage[a4paper]{geometry}
\geometry{left=25mm,top=30mm,
bottom=30mm,right=25mm} 
\usepackage[T1]{fontenc}
\usepackage{setspace}
\doublespace  
\usepackage{amsthm,amsmath}
\usepackage{amssymb}
\usepackage{natbib}
\usepackage[colorlinks=true,linkcolor=black, citecolor=black, urlcolor=black]{hyperref}
\usepackage{bbm}
\usepackage{multicol}
\usepackage{multirow}
\usepackage{mathtools}
\usepackage{url}          
\usepackage{float}
\usepackage{bm}
\usepackage{xcolor}
\usepackage{graphicx} 
\usepackage{caption} 
\usepackage{subcaption} 
\usepackage{algpseudocode}
\usepackage{algorithm}
\usepackage[noEnd=false]{algpseudocodex}

\usepackage{booktabs}
\usepackage{soul}


\renewcommand*\d{\mathop{}\!\mathrm{d}}

\newcommand{\VaR}{\mathrm{VaR}}
\newcommand{\ES}{\mathrm{ES}}
\newcommand{\MES}{\mathrm{MMES}}
\newcommand{\DCTE}{\mathrm{DCTE}}
\newcommand{\e}{\mathrm{e}}

\newtheorem{theorem}{Theorem}[section]

\newtheorem{proposition}[theorem]{Proposition}

\newtheorem{lemma}[theorem]{Lemma}
\theoremstyle{definition}

\title{Assessing Extreme Risk using Stochastic Simulation of Extremes}
\author{Nisrine Madhar$^{1}$, Juliette Legrand$^{2}$ and Maud Thomas$^{3}$}

\begin{document}

\maketitle

{\small \noindent
$^1$ Universit\'e Paris Cit\'e, CNRS, Laboratoire de Probabilit\'es, Statistique et Mod\'elisation, LPSM, F-75013 Paris, France,\\
$^2$ Univ Brest, CNRS UMR 6205, Laboratoire de Mathématiques de Bretagne Atlantique, France \\
$^3$ Sorbonne Universit\'e, CNRS, Laboratoire de Probabilit\'es, Statistique et Mod\'elisation, LPSM, 4 place Jussieu, F-75005 Paris, France,\\

E-mails: madhar@lpsm.paris, juliette.legrand@univ-brest.fr, maud.thomas@sorbonne-universite.fr}

\begin{abstract}
Risk management is particularly concerned with extreme events, but analysing these events is often hindered by the scarcity of data, especially in a multivariate context. This data scarcity complicates risk management efforts. Various tools can assess the risk posed by extreme events, even under extraordinary circumstances. This paper studies the evaluation of univariate risk for a given risk factor using metrics that account for its asymptotic dependence on other risk factors. Data availability is crucial, particularly for extreme events where it is often limited by the nature of the phenomenon itself, making estimation challenging. To address this issue, two non-parametric simulation algorithms based on multivariate extreme theory are developed. These algorithms aim to extend a sample of extremes jointly and conditionally for asymptotically dependent variables using stochastic simulation and multivariate Generalised Pareto Distributions. The approach is illustrated with numerical analyses of both simulated and real data to assess the accuracy of extreme risk metric estimations.
\end{abstract}

\textbf{Keywords --} Multivariate Generalised Pareto distributions, Risk management, Simulation of multivariate extremes, Tail risk metrics

\section{Introduction}\label{sec:Intro}

Risk management is of crucial importance in various sectors. It involves the identification, assessment, monitoring and mitigation of potential risks, regardless of the area of application. Risk is a common factor to various domains, yet its manifestation is specific to each sector. In climatology, meteorological and marine hazards such as droughts, floods and landslides can cause significant material damages and endanger the affected areas. The identification and anticipation of these events is thus of paramount importance in order to prevent them.
In finance, market movements can result to substantial losses. It is thus essential to anticipate and mitigate these risks, which can be achieved through in-depth analysis of market trends and the implementation of tailored risk management strategies. In the sequel, any variable that could result in a loss or damage is referred to as a risk factor.  For instance, in climatology, a risk factor could be any physical quantity, such as wave heights \citep[][]{legrand2023joint}, wind gusts \citep[][]{goegebeur2024dependent} or precipitation levels \citep[][]{grundemann2023extreme}.  In finance, risk factors are typically market parameters,  such as interest rates or exchange rates, which may induce potential losses for financial institutions if they experience unfavourable fluctuations \citep[see e.g.][]{mcneil2015quantitative}. The risk factors can be represented by random variables that quantify the magnitude of potential losses. It is typically the case that the magnitudes with the strongest impact result from events that occur with a very low probability of occurrence. This type of event is often referred to as a tail risk \citep{mcneil2015quantitative}. A primary objective of risk managers is to quantify, for a given target risk factor, the associated tail risk.

Modern risk management has witnessed a surge in the development of tail risk measures and methodologies for their estimation. The simplest and most common measure often considered for evaluating tail risks is the Value at Risk (VaR), which represents the quantile at a level $\alpha \in (0,1)$ of a given random variable. Nevertheless, \citet{artzner1999coherent} have identified significant shortcomings of the $\VaR$ approach, indicating that tail risk assessment through $\VaR$ estimation could result in an inaccurate quantification of the risk. 
 Several limitations of the $\VaR$ have been identified in the literature. In particular, this risk metrics does not account for the severity of the target risk factor, which may be a loss in financial applications \citep{yamai2005value}, or the magnitude of a climatic variable in environmental applications, e.g. the magnitude of daily precipitation \citep{grundemann2023extreme}.
In the literature, a number of alternative approaches have been proposed \citep{bellini2017risk,daouia2018estimation,tasche2002expected}. In this paper, we focus on three risk metrics (introduced in Section~\ref{sec:TRM}), referred to as tail risk metrics (TRMs), which are defined conditionally on the tail distribution of a given random variable, or target risk factor. 

Extreme Value Theory (EVT) provides the theoretical foundations for the analysis and modelling of tail distributions, making it an ideal framework for the estimation of TRMs \citep[see e.g.][]{resnick2007heavy}. In this paper, we propose two non-parametric simulation schemes for multivariate extremes by extending the procedures developed by \citet{legrand2023joint} in the bivariate case to higher dimensions.  The first algorithm---joint simulation of multivariate extremes---allows for the estimation of TRMs and the second one---conditional simulation of multivariate extremes---enables the inference of quantities involving some conditional tail distribution. 

Several parametric approaches based on EVT have been proposed for the estimation of TRMs \citep[see e.g.][]{mcneil2015quantitative,singh2013extreme}. However, the parametric setting is quite restrictive in the multivariate framework. Indeed, in contrast to the univariate framework,  the limiting distribution of multivariate extremes is no longer parametric \citep{rootzen2006multivariate}. To address this issue, \citet{rootzen2018multivariate2} have introduced several parametrisations and stochastic representations that enable the calibration of parametric models whose densities could be expressed through explicit formulas. Given the diversity of potential models, which may include nested models, model selection is necessary and can prove to be challenging and time-consuming. An alternative approach would be to model the univariate margins using univariate extreme value theory, and then  to model the dependence structure of extremes through the typical tools for dependence modelling, such as copulas. For extreme modelling, Clayton and Gumbel copula allow for the modelling of lower and upper tail dependence, respectively and are characterised by a single parameter  \citep[see e.g.][]{nelsen2006introduction}. This implies uniform dependence across the considered risk factors, which may not be a valid assumption. Another limitation of empirical estimates is that for high risk scenarios, the number of available tail observations becomes exceedingly limited.

In this paper, we thus rely on a non-parametric approach to expand the number of observations above the extreme level of interest based on joint simulation of multivariate extremes, by generalising the approach of \citet{legrand2023joint}.  Consequently, in addition to ensuring more reliable estimations, it is also possible to extrapolate beyond the range of observed data, which is not feasible when considering only historical data, and to go beyond the bivariate case, which is typically discussed in the literature.

A by-product of this first procedure concerns the estimation of quantities describing the tail of some conditional distribution, which could be analogous to an extremal regression approach. Existing approaches that address the issue of linear regression in the context of extreme values include quantile regression \citep[see e.g.][for recent developments]{pasche2022neural, gnecco2022extremal}. However, the performance of these approaches is highly dependent on the available data. It is crucial to acknowledge that the inherent limitations of the existing methodologies can be attributed to two key factors: the dimensionality of the random vector of risk factors and the number of available observations of this random vector. This latter issue is, in fact, the very  same problem that was encountered for TRMs estimation. However, the joint simulation algorithm does not apply in this context. Once more, we build upon the work of \citet{legrand2023joint}, who introduced a second algorithm for conditional simulation of bivariate extremes. In this paper, we extend their approach to higher dimensions. This second non-parametric algorithm enables the generation of new samples of a given random variable conditionally on joint event of simultaneous extreme risk factors, thus facilitating the estimation of quantities involving the conditional distribution of a  target risk factor given the observation of other risk factors with which the target risk factor has a strong tail dependence.

The outline of this paper is as follows. After a formal presentation of the TRMs of interest in Section~\ref{sec:TRM}, the joint and conditional simulation algorithms for multivariate extremes are developed in Section~\ref{sec:algos}. Their performances are evaluated first on synthetic data in Section~\ref{sec:DataSynthe} and then on real data in Section~\ref{sec:DataReal}.

\paragraph{Notations.} Throughout the paper, we use the following notations for vectors. Symbols in bold denote $d$-dimensional vectors.  For example, a $d$-dimension random vector is denoted by $\bm a = \left(a_1,\ldots,a_d\right)$, and $\bm a_{-j}$ denotes the vector $\bm a$ deprived from its $j$-th component for $j=1,\ldots,d$.  Operations and relations are meant component-wise, that is, for example, $\bm a\geq\bm b = \left(a_1\geq b_1,\ldots,a_d \geq b_d\right)$. 

\section{Tail Related Risk metrics}\label{sec:TRM}

In this section, we formally define the three TRMs considered in this paper. 
Hereinafter, $\bm X=(X_1,\ldots, X_d)$ denotes a $d$-dimensional vector with marginal continuous probability density function $f_j$, for $j=1,\dots,d$.

As all TRMs are based on the $\VaR$,  it is first necessary to recall its definition. The $\VaR$ at level $\alpha\in (0,1)$ of $X_j$, denoted $\textrm{VaR}_{\alpha}(X_j)$, is defined as the $(1-\alpha)$-quantile of the distribution of $X_j$, that is 
\[
\VaR_{\alpha}(X_j) = \inf\{x \in \mathbb{R} : \mathbb{P} \left(X_j > x\right)\leq \alpha\} \, 
\]
with $\alpha=\alpha_n \xrightarrow[n \to \infty]{} 0$. The $\VaR_\alpha$ represents the quantity that will be exceeded with probability $\alpha$. In other words, it corresponds to an extreme quantile of $X_j$. It should be noted that the term ``Value at Risk'' is primarily used in the field of financial risk management \citep[e.g.][]{mcneil2015quantitative}; in other domains, it is often referred to as the ``return level'' \citep[for more details, see][]{coles2001}.

 Several limitations of the $\VaR$ have been identified in the literature. In particular, this TRMs does not account for the severity of the random variable $X_j$, which may be a loss in financial applications \citep{yamai2005value}, or the magnitude of a climatic variable in environmental applications, e.g. the magnitude of daily precipitation \citep{grundemann2023extreme}. To address this shortcoming, alternative TRMs have been proposed, including the {\it expected shortfall} ($\ES$) \citep{artzner1999coherent}. The $\ES$  at level $\alpha\in(0,1)$ is defined as 
\begin{equation}\label{eq:ES}
    \ES_{\alpha}(X_j) = \mathbb{E}\left[X_j \mid X_j>\VaR_{\alpha}(X_j)\right] = \frac{1}{\alpha}\int_{\VaR_{\alpha}(X_j)}^{\infty}x f_j(x)\d x \,. 
\end{equation}
In words, the $\ES_\alpha$ corresponds to the expected loss above the $\VaR_\alpha$  \citep[for more details, see e.g.][]{artzner1999coherent,rockafellar2000optimization}. The $\ES$ is a useful tool for gaining insights into the severity of losses above a high quantile (i.e. the $\VaR$). It is therefore commonly used 
 in financial institutions for the calculation of the minimum capital requirements for market risk as specified by financial regulators \citep[see e.g.][]{basel2013}. Given its high profile, numerous techniques proposed for $\ES$ estimation have been proposed  \citep[see e.g.][for a comprehensive review]{nadarajah2014estimation}. In a parametric modelling approach, $\ES$ is computed as the empirical mean value of observations above the $\VaR$ which is estimated as the quantile of an extreme value distribution \citep[see e.g.][]{coles2001}. Subsequently, more advanced estimation approaches were suggested \citep[see e.g.][]{mcneil2000estimation, singh2013extreme}. One limitation of $\ES$ lies in its univariate risk assessment that ignores the potential asymptotic dependence that $X_j$ could exhibit with other risk factors. To address this issue, we consider two alternative risk metrics accounting for the asymptotic dependence in their univariate risk evaluation of a target risk factor $X_j$.

We thus introduce a second TRM, namely the {\it marginal expected shortfall} (MES) that has been intensively studied \citep[see e.g.][]{cai2015estimation,acharya2017measuring}. \citet{cai2015estimation} define the MES as the following conditional expectation $\mathbb{E}\left[X_j \mid S(\bm X) \geq v\right]$, where $S$ is a given statistic of the random vector $\bm X$ (e.g. the sum, the minimum, the maximum of $\bm X$)  and $ v\in\mathbb{R}$ a threshold defining the occurrence of an extreme event.  Building on the aforementioned definition, we propose to study a novel version by considering a conditioning on a joint event of simultaneous extreme risk factors deprived from the target risk factor $X_j$. That is, we define the {\it multivariate  marginal expected shortfall} ($\MES$) of $X_j$ at level $\alpha$ as follows
\begin{equation}\label{eq:MES}
    \MES_{\alpha}(X_j;\bm X) = \mathbb{E}\left[X_j \mid \bm X_{-j}\geq \bm v^\alpha_{-j}\right] = \int_{\mathbb R} x_j f_{X_j\mid\bm X_{-j}\geq \bm v^\alpha_{-j}}(x_j)\d x_j, 
\end{equation}
where $\bm v^\alpha =\VaR_\alpha(\bm X)\in\mathbb{R}^{d}$ and $f_{X_j\mid\bm X_{-j}\geq \bm v^\alpha_{-j}}$ denotes the conditional density of $X_j$ given the event $\bm X_{-j}\geq \bm v^\alpha_{-j}$. Through this formulation the aim is to capture the behaviour of $X_j$ when similar risk factors reach extreme levels.

In a bivariate setting, authors have investigated the incorporation of a dependent risk factor in order to improve the quantification of tail risk associated with a target  risk factor \citep[see e.g.][]{Josaphat2021,goegebeur2024dependent}. In this paper, we adopt the definition of {\it dependent conditional tail expectation} ($\DCTE$), as defined in the bivariate case by \citet{goegebeur2024dependent}. We extend this definition to the general multivariate case as follows 
\begin{equation}\label{eq:DCTE}
    \DCTE_{\alpha}(X_j; \bm X) = \mathbb{E}\left[X_j | \bm X\geq \bm v^\alpha \right]= \int_{v^\alpha_j}^{\infty}x_j f_{X_j \mid \bm X\geq \bm v^\alpha }(x_j)\d x_j.
\end{equation}
This third TRMs quantifies the risk of a target  risk factor when similar risk factors and the target risk itself reach extreme levels simultaneously. This metric could be relevant in a wide range of applications, including wind gust analysis \citep{goegebeur2024dependent} and car insurance policies \citep{syuhada2022estimating}.

  \citet{Josaphat2021} have proposed a parametric estimator of $\DCTE$ in the bivariate setting based on a model combining a Pareto distribution and a specific copula. However, the conditioning event  considered by \citet{Josaphat2021} constrains the fluctuations of $X_j$ to a specific interval, which hinders the extrapolation of the estimates beyond the observed data range.   \cite{syuhada2022estimating} have showed that the non-parametric estimators were more accurate than parametric ones at estimating DCTE. Finally, \citet{goegebeur2024dependent} have introduced an estimator using EVT arguments through a two-step approach, where the $\DCTE$ is first estimated at some intermediate level in order to then extrapolate the estimation at extreme levels. However, this approach is limited to the bivariate case.

In this paper, we propose two non-parametric simulation schemes for multivariate extremes. The first scheme is for the joint simulation of multivariate extreme events, which will  allow for the empirical estimation of our three TRMs of interest (Equations~\eqref{eq:ES},~\eqref{eq:MES} and~\eqref{eq:DCTE}).  The second scheme is for the conditional simulation of multivariate extreme events, which will be applied to the inference of quantities involving some conditional tail distribution.

\section{Extreme Scenario Simulation} \label{sec:algos}

This section outlines the two simulation algorithms developed in this study. These algorithms are natural extensions of the algorithms  presented in  \citep{legrand2023joint}, where the analysis is restricted to the bivariate setting.  

The  simulation procedure presented in \cite{legrand2023joint} is based on the stochastic representation of a standard multivariate generalised Pareto (MGP) vector proposed by \citet{rootzen2018multivariate2}. Building on the stochastic representation presented in Section~\ref{sec:MGP}, we propose two simulation algorithms. A joint simulation algorithm generating simulated samples of a MGP vector, while a conditional simulation algorithm is presented for simulating samples of a component of the MGP vector conditionally on the other components. Both algorithms are illustrated through numerical applications on synthetic data in Section~\ref{sec:DataSynthe}, and on real data in Section~\ref{sec:DataReal}.

\subsection{Multivariate Generalised Pareto Vectors}\label{sec:MGP}

Let $\bm X$ be a $d$-dimensional random vector. For a vector of suitably chosen thresholds $\bm u \in \mathbb R^d$, we define the vector of excesses above $\bm u$ as $\bm X - \bm u$ given that $\bm X \not \leq \bm u$, where $\bm X \not \leq \bm 0$ means that at least one of the components of $\bm X$ is positive. If $\bm X \not \leq \bm u$, $\bm X$ is said to be extreme. 
\citet{rootzen2006multivariate} have shown that, under certain conditions, the conditional excesses $\bm X - \bm u \mid \bm X \not \leq \bm u$ converge asymptotically to a non-degenerate distribution $H$ that belongs to the family of MGP distributions as $\bm u \to \bm \infty$. As in the univariate EVT \citep[e.g.][]{coles2001}, for $\bm u$ large enough, we may approximate the conditional excesses, namely $\bm X -\bm u\mid \bm X \not \leq \bm u$, by a random vector $\bm Z$ which is distributed as a MGP distribution, $H$.

For such vector $\bm Z$, the marginals $H_j$, $j=1,\ldots, d$, are in general not univariate generalised Pareto (GP) distributions. However, their restrictions to the positive subset are GP distributed \citep[see e.g.][]{rootzen2018multivariate2}. That is, for all $j=1,\ldots, d$,
\begin{equation}\label{eq:MGPmargin}
H_j^+(x) = \mathbb{P} \left[Z_j > x \mid Z_j >0 \right] = \left(1 + \gamma_j x/\sigma_j\right)_+^{-1/\gamma_j}, \mbox{ for }  x\geq 0 \mbox{ such that } \sigma_j+\gamma_j x>0,
\end{equation}
where $a_+ = \max(a,0)$, $\sigma_j>0$ is the j-th marginal scale parameter and $\gamma_j \in \mathbb R$ is the j-th marginal shape parameter. If $\gamma_j = 0$, we use the classical convention by taking the limit as $\gamma_j\to 0$ in Equation~\eqref{eq:MGPmargin}.

In the sequel, we denote $\bm \sigma = (\sigma_1,\ldots, \sigma_d)$ and $\bm \gamma = (\gamma_1,\ldots, \gamma_d)$ the vectors of marginal scale and shape parameters of a MGP vector. When $\bm \sigma = \bm 1$ and $\bm \gamma = \bm 0$, we say that $\bm Z$ is distributed according to a standard MGP distribution. Conversely, a standard MGP vector can be transformed into a general MGP vector using the following simple transformation: $Y$ is a MGP vector with parameters $(\bm \sigma,\bm\gamma)$ if and only if $Y=\bm \sigma (e^{\bm\gamma \bm Z}-1)/\bm\gamma$, with $\bm Z$ a standard MGP vector. 
In light of this result, we can therefore concentrate solely on the special case of standard MGP vectors. 

Note that any vector $\bm X$ can be transformed into a standard MGP vector by following the two steps outlined below:
\begin{enumerate}
  \item Standardise the data $\bm X$ to exponential margins $\bm X^{E}$, using the probability integral transform on each component
  \[
  X^E_j = \log \left(1- F_{j}(X_j)\right), \quad \mbox{for } j=1,\ldots,d, 
  \]
  where $F_{j}$ is the cumulative distribution function (c.d.f.) associated with the $j$-th risk factor for $j=1,\ldots,d$. $F_j$ can be either a parametric c.d.f. that fits the data or the empirical version of c.d.f.. 
  \item Compute the multivariate excess vector as defined in \cite{rootzen2006multivariate} as follows \begin{equation}\label{eq:Z_XE}
      \bm Z = \bm X^E - \bm u^E \mid \bm X^E  \not \leq \bm u^E \, ,
  \end{equation} 
where $\bm u^E$ is a suitably chosen threshold on the exponential scale. 
\end{enumerate}

The choice of the thresholds $u_j$, for $j=1,\ldots,d$ can be understood as a compromise between bias and variance. The smaller the threshold, the less valid the asymptotic approximation, which leads to bias. On the other hand, a threshold  too high  will generate few excesses to fit the model, which leads to high variance. In practice, threshold selection is a challenging task. The existing methods for the choice of the threshold $u$ rely on graphical diagnostics or on computational approaches based on supplementary conditions (that depend on unknown parameters) on the underlying distribution function $F$  \citep[see][]{scarrott2012review}.

\cite{rootzen2018multivariate2} have derived different stochastic representations of MGP distributions for which explicit density formulas can be obtained \citep[see also][]{kiriliouk2019peaks}. In particular, they have shown that a standard MGP vector $\bm Z$  can be decomposed as follows
\begin{equation}\label{eq:SMGPRep}
    \bm Z = E + \bm T - \max\left(\bm T\right),
\end{equation}
where  $E$ is a unit exponential variable and $\bm T$ a $d$-dimensional random vector independent of $E$. 

In Equation~\eqref{eq:SMGPRep}, it is necessary to ensure that the components of  the original vector $\bm X$ are asymptotically dependent \citep[see e.g.][]{coles2001}, simply meaning that the large values of the components $X_j$ occur simultaneously. In the applications that we have in mind, this hypothesis will be verified since the risk factors that will be considered are precisely those exhibiting asymptotic dependence. 

From Equation~\eqref{eq:SMGPRep}, we derive the cornerstone of our two simulation algorithms by simply considering the following multivariate difference

\begin{equation}\label{eq:wk_w1}
    \Delta^{j,k}  = Z_j - Z_k= T_j- T_k, \mbox{ for all } j,k=1,\ldots, d.
\end{equation}

Subsequently, Equation~\eqref{eq:SMGPRep} can be rewritten as follows 
\begin{equation}\label{eq:MGPEq}
    Z_j = E +  \sum_{k=1,k\neq j}^d \Delta^{j,k} \prod_{\ell=1,\ell\neq k}^d \mathbf{1}_{\Delta^{\ell,k}<0}, \mbox{ for all } j=1,\ldots, d,
\end{equation}
where $\mathbf{1}_\cdot$ denotes the indicator function.

Furthermore,  for a fixed $q\in\{1,\dots,d\}$, the $q$-th vector of differences is defined as follows
\[
\bm\Delta^{(q)}=\left(\Delta^{q,1},\dots,\Delta^{q,d}\right)\in\mathbb{R}^{d}.
\]

With Representation~\eqref{eq:MGPEq} at our disposal, we may proceed to the simulation of a MGP vector $\bm Z$, via the simulation of $\bm \Delta^{(q)}$ for a fixed $q \in \{1,\dots,d\}$ using bootstrapping techniques combined with a rejection algorithm in some specific cases. 

\subsection{Non-parametric joint MGP simulation}\label{subsec:joint}

In this section, without any loss of generality, we fix $q$ equal to 1. We then describe the joint simulation procedure, which is a generalisation of the algorithm originally proposed by \citet{legrand2023joint} for the case where $d=2$. 

For a given sample $(\bm Z_i)_{1\leq i \leq n}$ of independent and identically distributed (i.i.d.)  replicates of a MGP distributed vector $\bm Z$, the joint MGP simulation algorithm (see Algorithm~\ref{alg:JointMGP}) performs the stochastic generation of $m\in\mathbb{N}$ new i.i.d. copies of $\bm Z$, denoted $(\widetilde{\bm Z}_\ell)_{1\leq \ell \leq m}$. 

Algorithm~\ref{alg:JointMGP} is as follows. First, we simulate unit exponential random variables. Then, independently on this first step, we simulate new realisations $(\widetilde{\bm \Delta}^{(1)}_\ell)_{1\leq \ell \leq m}$ of the differences $\bm \Delta^{(1)}$ using a non-parametric bootstrap approach, i.e. by resampling among the observed differences  $\bm \Delta^{(1)}$. Finally, we use the stochastic relation in Equation~\eqref{eq:MGPEq} to merge both simulation steps, in order to generate new realisations of MGP vectors, noting that $\Delta^{r,s}=  \Delta^{1,s}- \Delta^{1,r}$, for all $1\leq r,s \leq d$.

\begin{algorithm}
\caption{Non-parametric joint MGP simulation.}
\label{alg:JointMGP}
\begin{algorithmic}[1]
\Require Observations  $\left(\bm Z_i\right)_{1\leq i \leq n} = \left(Z_{i,1},\ldots,Z_{i,d}\right)_{1\leq i \leq n}$ from a standard  MGP distribution;
\Ensure{A standard MGP simulated sample $\left(\widetilde{\bm Z}_m\right)_{1\leq \ell \leq m} = \left(\widetilde{Z}_{\ell,1},\ldots,\widetilde{Z}_{\ell,d}\right)_{1\leq \ell \leq m}$}
\Procedure{}{}
\State $\Delta_{i}^{1,k} \leftarrow Z_{i,1}-Z_{i,k}$, for $1\leq i\leq n$ and $1\leq k \leq d$
\State Generate $E_1,\ldots, E_m \overset{\text{i.i.d.}}{\sim} \mathrm{Exp}(1)$, and independent of $(\Delta_{i}^{1,k})_{1\leq i\leq n, 1\leq k \leq d}$ 

\State Generate a $m$-bootstrap sample  $\widetilde{\bm \Delta}^{(1)} = \left(\widetilde{\bm\Delta}^{(1)}_{\ell}\right)_{\ell=1,\ldots,m}$  from $\left(\bm \Delta_{i}^{(1)}\right)_{1\leq i\leq n}$
\State $\widetilde \Delta_{\ell}^{r,s} \leftarrow \widetilde \Delta_{\ell}^{1,s}-\widetilde \Delta_{\ell}^{1,r}$, for $1\leq \ell\leq m$ and all $1\leq r,s \leq d$
\State $\widetilde{Z}_{\ell,j} \leftarrow E_\ell + \sum_{s=1,s\neq j}^d \widetilde{\Delta}^{j,s}_{\ell} \prod_{r=1,r\neq s}^d \mathbf{1}_{\widetilde{\Delta}^{r,s}_{\ell}<0}$ for all $1 \leq \ell \leq m$ and $1\leq j \leq d$
\EndProcedure
\end{algorithmic}
\end{algorithm}

 Proposition~\ref{l:jointSim} (see Section \ref{sec:theo}) guarantees that the samples simulated with  Algorithm~\ref{alg:JointMGP} are indeed distributed according to a standard MGP distribution, for all $q \in \{1,\dots,d\}$, and in particular for $q=1$. The proof can be found in Section~\ref{sec:theo} with the statement of the proposition.

We then provide a numerical illustration of Algorithm~\ref{alg:JointMGP} in which we consider a 3-dimensional MGP vector $\bm Z$, with $\bm T$ (see Equation~\eqref{eq:SMGPRep}) distributed according to a centred multivariate Gaussian distribution with correlation coefficients $\rho_{1,2}=0.4,\rho_{1,3}=0.8$ and $\rho_{2,3}=0.1$.

\begin{figure}[h]
 \centering
  \begin{tabular}{ccc}\includegraphics[width=0.3\textwidth]{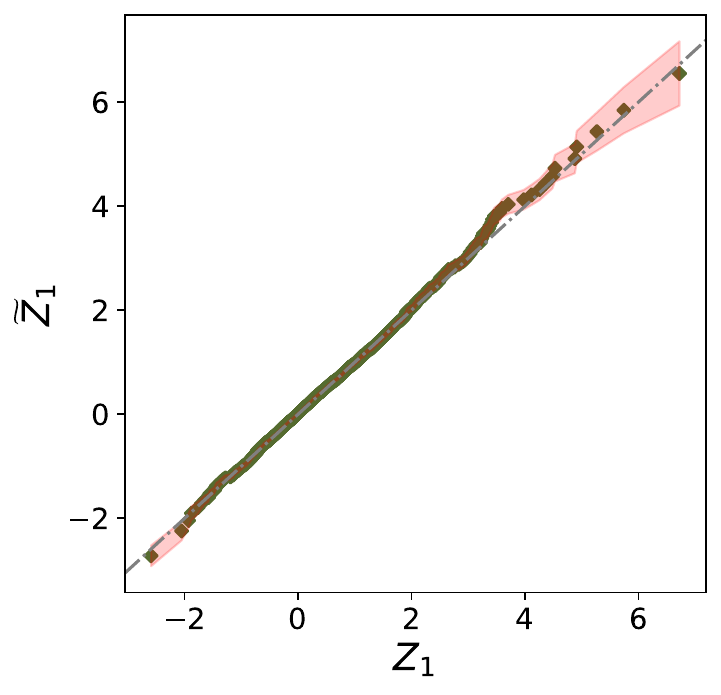} &  \includegraphics[width=0.3\textwidth]{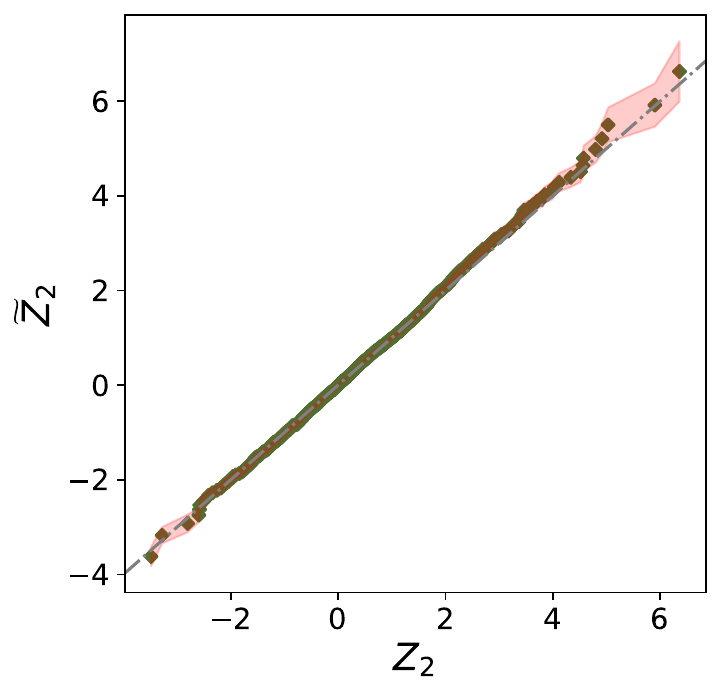} &
 \includegraphics[width=0.3\textwidth]{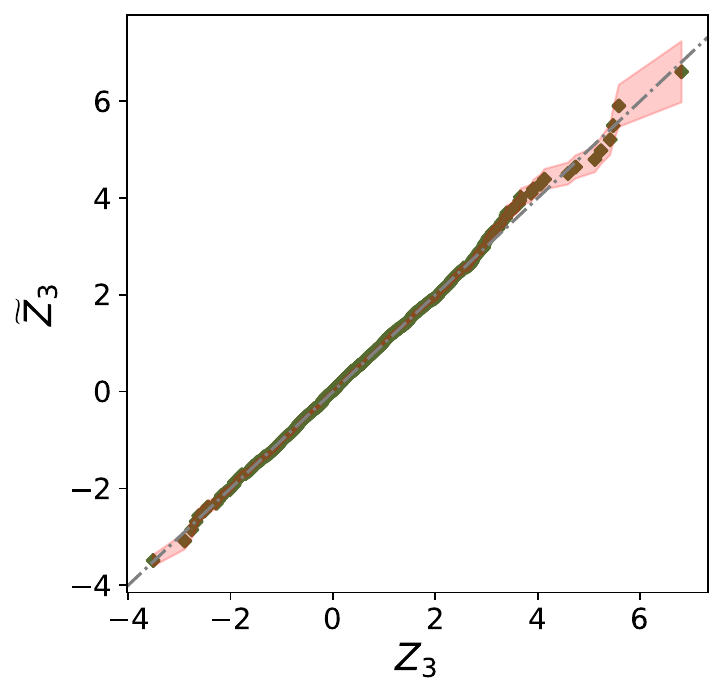} \\~\\
 a) & b)&  c) \\~\\
 
  \includegraphics[width=0.3\textwidth]{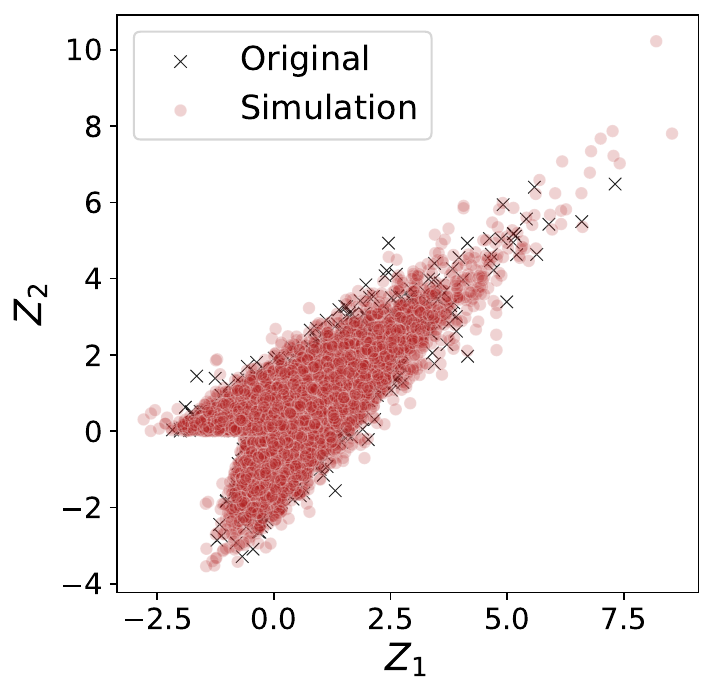}
& \includegraphics[width=0.3\textwidth]{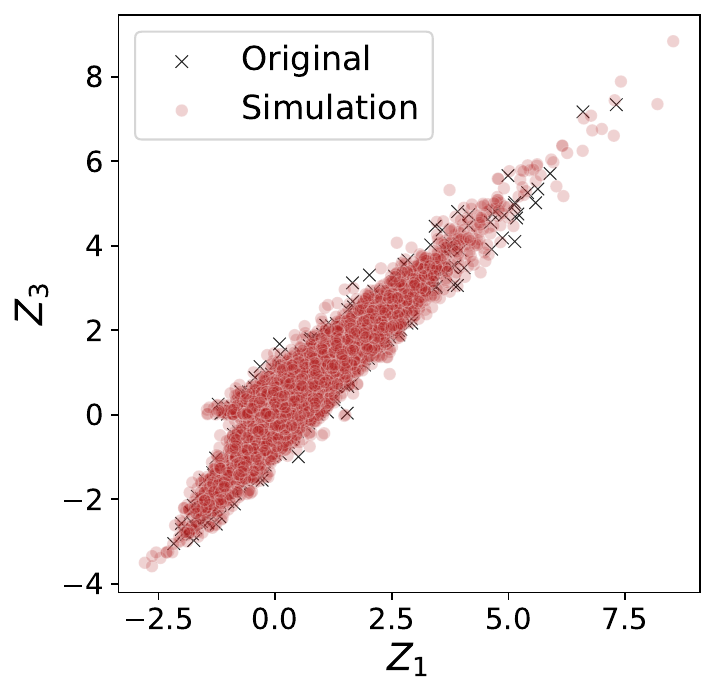}
&  \includegraphics[width=0.3\textwidth]{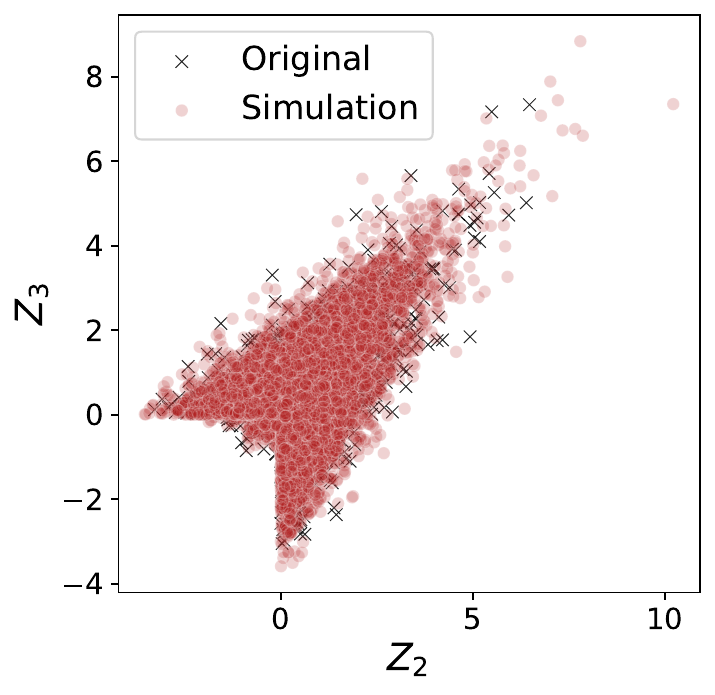}
\\~\\
 d) & e) & f) \\ 
\end{tabular}
\caption{Bivariate representations of the original sample of size 2,000 of a MGP vector $\bm Z \in \mathbb{R}^3$ with $\bm T$ (see Equation~\eqref{eq:SMGPRep}) distributed according to a centred multivariate Gaussian distribution with correlation coefficients $\rho_{1,2}=0.4,\rho_{1,3}=0.8$ and $\rho_{2,3}=0.1$ and the simulated sample $\widetilde{\bm Z}$ of size $10,000$. Figures a), b) and c) display the QQ plots of ($Z_1$, $\widetilde{Z}_1$), ($Z_2$, $\widetilde{Z}_2$) and ($Z_3$, $\widetilde{Z}_3$), respectively, and their associated 95\% point-wise confidence intervals based on $1,000$ bootstrap replications. Figures d), e) et f) represent scatter plots of the components  ($Z_1$, $Z_2$), ($Z_1$, $Z_3$) and ($Z_2$, $Z_3$) of the original sample (black crosses). On each scatter plot, the points of the simulated sample (red circles) are added.}
  \label{fig:Jointk3}
\end{figure}

Figure~\ref{fig:Jointk3} presents a bivariate representation of a simulated sample of size $10,000$ and the original sample of the 3-dimensional vector $\bm Z$  of size $2,000$. The top row (Figures a)-c)) of Figure~\ref{fig:Jointk3} depicts the quantile-quantile (QQ) plots  between the original observations and simulated observations for each component.
The QQ plots show a good fit between the marginal distributions of the simulated samples $\widetilde{Z}_j$ and the marginal distributions of the original sample $Z_j$ for $j \in \{1,2,3\}$. Consequently, the QQ plots indicate that the joint simulation algorithm is capable of generating samples with the same marginal distributions as the ones of the original sample. The scatter plots in the bottom row of Figure~\ref{fig:Jointk3} show the simulated samples of $\widetilde{\bm Z}$ in red and the original sample of $\bm Z$ in black. Each column of the second row focuses on a single plane of the $3$-dimensional space, and hence on a bundle of the component of $\bm Z$ and $\widetilde{\bm Z}$.  With regard to the dependence structure,  the scatter plots of Figure~\ref{fig:Jointk3} show that the shape displayed by the simulated sample seems to be similar to the one of the original sample. It can be observed that the simulated sample comprises a greater number of observations in the extreme regions.

\subsection{Non-parametric conditional MGP simulation}\label{sec:CondSim}

Thanks to the representation of the univariate components $Z_j$ of a MGP vector $\bm Z$ by Equation~\eqref{eq:MGPEq}, we were able to derive our joint simulation algorithm. In the same way, but using the definition of $\Delta^{j,k}$ in Equation~\eqref{eq:wk_w1}, we propose a second algorithm, namely the conditional MGP simulation algorithm (see Algorithm~\ref{algo:CondAlgo}). In fact, for practical application, we are also interested in generating observations of $Z_j$, when $\bm Z_{-j}$ is observed at some extreme level $\bm z_{-j}$. To be able to simulate $Z_j$ conditionally on the rest of the components $\bm Z_{-j}$, the conditional distribution of $\Delta^{q,j}$ given $\bm Z_{-j}$ must be computed for any fixed $q=1,\ldots,d$ with $q\neq j$. Then,  realisations of $\Delta^{q,j}$ are drawn, which can be used to compute the new observations of $Z_j$ according to Equation~\eqref{eq:MGPEq}.

The procedure of non-parametric conditional MGP simulation is based on the conditional distribution of $\Delta^{q,j}$ given $\bm Z_{-j}$.  Denoting $z_\star \coloneqq \max \bm z_{-j}$, the conditional distribution of $\Delta^{q,j}$ given $\bm Z_{-j}$ can then be expressed as follows 
\begin{enumerate}
    \item If $z_\star>0$ and $z_\star=z_q$, 
    \[
    f_{\Delta^{q,j} \mid \bm Z_{-j}=\bm z_{-j}}(\delta^{q,j} ) = \frac{1}{I_1(0) + I_2(0)} \left(\mathbf{1}_{\delta^{q,j}>0} + \e^{\delta^{q,j}}
         \mathbf{1}_{\delta^{q,j} \leq 0}\right)f_{\bm \Delta^{(q)}}\left(\bm{\delta}^{(q)} \right).
         \]
    \item If $z_\star>0$ and $z_\star\neq z_q$, denoting $\delta_\star \coloneqq z_q - z_\star$, 
    \[
    f_{\Delta^{q,j} \mid \bm Z_{-j}=\bm z_{-j}}(\delta^{q,j} ) =\frac{1}{I_1(z_q-z_\star) + I_2(z_q-z_\star)} \left( \e^{\delta^{q,j}}  \mathbf{1}_{\delta_{\star}>\delta^{q,j}} + \e^{\delta_{\star}}\mathbf{1}_{\delta_{\star}\leq\delta^{q,j}}\right) f_{\bm \Delta^{(q)}}\left(\bm{\delta}^{(q)} \right).
    \]
    \item If $z_\star\leq0$, 
\[
f_{\Delta^{q,j} \mid \bm Z_{-j}=\bm z_{-j}}(\delta^{q,j} ) = \frac{1}{I_1(z_q)} \e^{\delta^{q,j}}  f_{\bm \Delta^{(q)}}\left(\bm{\delta}^{(q)} \right) \mathbf{1}_{\delta^{q,j} <z_q}.
\]
\end{enumerate}
where \begin{itemize}
    \item  $\bm \delta^{(q)}=(\delta^{(q,1)},\dots,\delta^{(q,d)})$
    \item for any $x\in\mathbb{R}$, 
\[
I_1(x)\coloneqq \int_{-\infty}^x \e^{\zeta} f_{\bm \Delta^{(q)}}\left(\bar{\bm{\delta}}^{(q)} \right) \d \zeta \quad \text{ and } \quad
     I_2(x)\coloneqq \e^x\int_x^{\infty}  f_{\bm \Delta^{(q)}}\left(\bar{\bm{\delta}}^{(q)} \right) \d \zeta
 \]
\end{itemize}

where $\bar{\bm{\delta}}^{(q)}$ has the same components as $\bm{\delta}^{(q)}$ except for $\delta^{q,j}$ which is replaced by $\zeta= z_q - z_j$.

The theoretical result and the proof can be found in Section~\ref{subsec:theo:cond} (Proposition~\ref{prp:cond:deltaj}).  It can be observed that in Cases 1 and 3, the conditional distribution of $\Delta^{q,j}\mid\bm Z_{-j}= \bm z_{-j}$ is independent of $\bm z_{-j}$. Hence, in these specific cases, bootstrapping will be used for the conditional simulation of $\Delta^{q,j} \mid\bm Z_{-j}= \bm z_{-j}$. However, in Case 2 ($z_{\star}> 0$ and $z_{\star}\neq z_q$), the conditional distribution of $\Delta^{q,j}\mid\bm Z_{-j}= \bm z_{-j}$ does depend on $\bm z_{-j}$. In that case, bootstrapping needs to be combined with the rejection sampling method. 
However, it is not common practice to use the rejection algorithm to sample a univariate variable $\Delta^{q,j}$ through observations of random vector $\bm \Delta^{(q)}$. Therefore, Propositions~\ref{prop:RS} and~\ref{prop:RS2} are devoted to the rejection algorithm and its application in the conditional MGP simulation procedure. Finally, Algorithm~\ref{algo:CondAlgo} outlines the procedure in question.

\begin{algorithm}[H]
\caption{Non-parametric conditional MGP simulation}
\label{algo:CondAlgo}
\begin{algorithmic}[1]
\Require Observations $\left(\bm{\Delta}^{(1)}_i \right)_{1\leq i \leq n}$; $j \in \{2,\dots,d\}$ the index of risk factor of interest; a realisation $\bm z_{-j}$ of $\bm Z_{-j}$
\Ensure{A simulated sample $\left(\widetilde{Z}_{\ell,j}\right)_{1\leq \ell \leq m}$ conditionally on $\bm Z_{-j} =\bm z_{-j} $}
\Procedure{}{}
\If{$\exists k \neq j \mid z_k >0$}
    \If{$z_{\star}=z_1$}
    \State  Define $\bm{\Delta}^{(1)}_{| z_{\star}=z_1}$, the subset of $\left(\bm{\Delta}_i^{(1)} \right)_{1\leq i \leq n }$ such that $\max\bm z_{-j} = z_1$
    \State Bootstrap $m$ realisations $\left(\Delta_{\ell}^{1,j}\right)_{1\leq \ell \leq m}$, from $\left(\Delta^{1,j}_{| z_{\star}=z_1}\right)$ independently of $\bm Z_{-j}$
    \Else

    \For{$1\leq \ell \leq m$}
    \State Sample a realisation $\Delta^{1,j}_m$ from  $\left(\Delta^{1,j}_i \right)_{1\leq i \leq n}$ independently of $\bm Z_{-j}$
    \State Compute $\Delta_{\star}:=z_1 - z_{\star}$
    \State Generate a random number $u \in [0,1]$
    \While{$u> \left(e^{\Delta^{1,j}_\ell}  \mathbf{1}_{\Delta_{\star}>\Delta^{1,j}_\ell}+ e^{\Delta_{\star}} \mathbf{1}_{\Delta_{\star} \leq \Delta^{1,j}_\ell}\right)$}
     \State Repeat last two steps 
      \EndWhile
   \EndFor
\EndIf
\Else
\For{$1\leq \ell \leq m$}
\State Sample a realisation $\Delta^{1,j}_\ell$ from  $\left(\Delta^{1,j}_i \right)_{1\leq i \leq n}$ independently of $\bm Z_{-j}$
\State Generate a random number $u \in [0,1]$ 
\While{$u> e^{\Delta^{1,j}_\ell}  \mathbf{1}_{\Delta^{1,j}_\ell <z_q}$}
\State  Repeat last two steps 
\EndWhile
\EndFor
\EndIf
\State $\widetilde{Z}_{\ell,j} \leftarrow  z_1 - \Delta_{\ell}^{1,j}$ for all $1 \leq \ell \leq m$
\EndProcedure
\end{algorithmic}
\end{algorithm}

We illustrate the performances of the conditional MGP simulation Algorithm~\ref{algo:CondAlgo} through the same parametric example than in Section~\ref{sec:CondSim}, that is a 3-dimensional MGP vector $\bm Z$, with $\bm T$ (see Equation~\eqref{eq:SMGPRep}) distributed according to a centred multivariate Gaussian distribution, but with different correlation coefficients $\rho_{1,2}=0.6,\rho_{1,3}=0.8$ and $\rho_{2,3}=0.5$. We consider the conditional distribution of $Z_2 \mid \bm Z_{-2} = \bm z_{-2}$. The numerical experiment consists in generating realisations of the $2$-th component of a 3-dimensional standard MGP vector given observations $\bm z_{-2}=(z_1,z_3)$ for the original sample $\bm Z$ of size 2,000 and a sample simulated using Algorithm~\ref{algo:CondAlgo} of size 10,000.  We illustrate each case (Cases 1, 2 and 3) depending on $\bm z_\star$. For each simulation case, the empirical density of simulated samples of $Z_2\mid\bm Z_{-2}=\bm z_{-2}$ is compared to the theoretical density. Figure~\ref{fig:CondSimParametric} shows that regardless the value of $\bm z_{-2}$, the histogram shape and the kernel estimates of the density of the simulated samples of $Z_2\mid\bm Z_{-2}=\bm z_{-2}$ match systematically the theoretical density curve.

\begin{figure}[H]
    \centering
    \begin{tabular}{ccc}
        \includegraphics[width=0.3\linewidth]{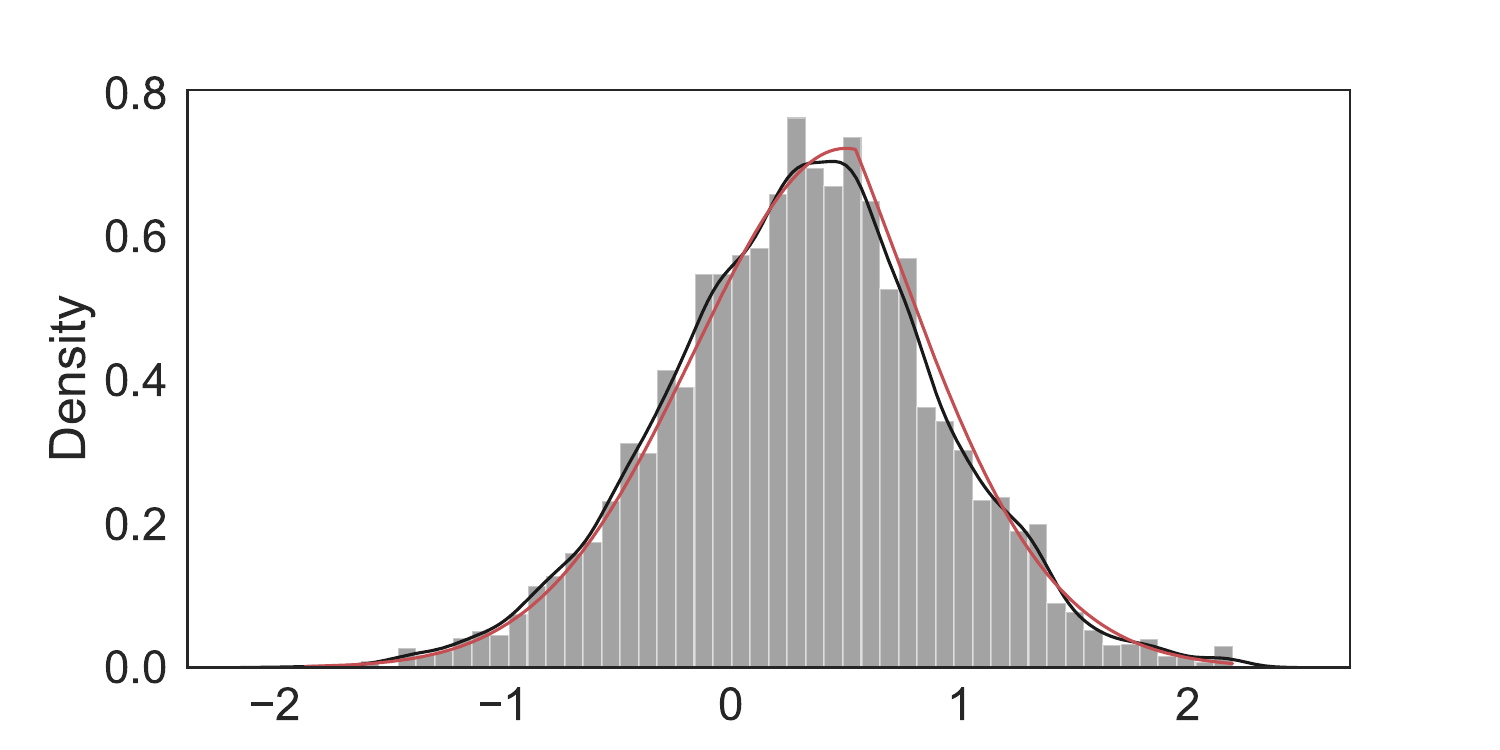}
   &
        \includegraphics[width=0.3\linewidth]{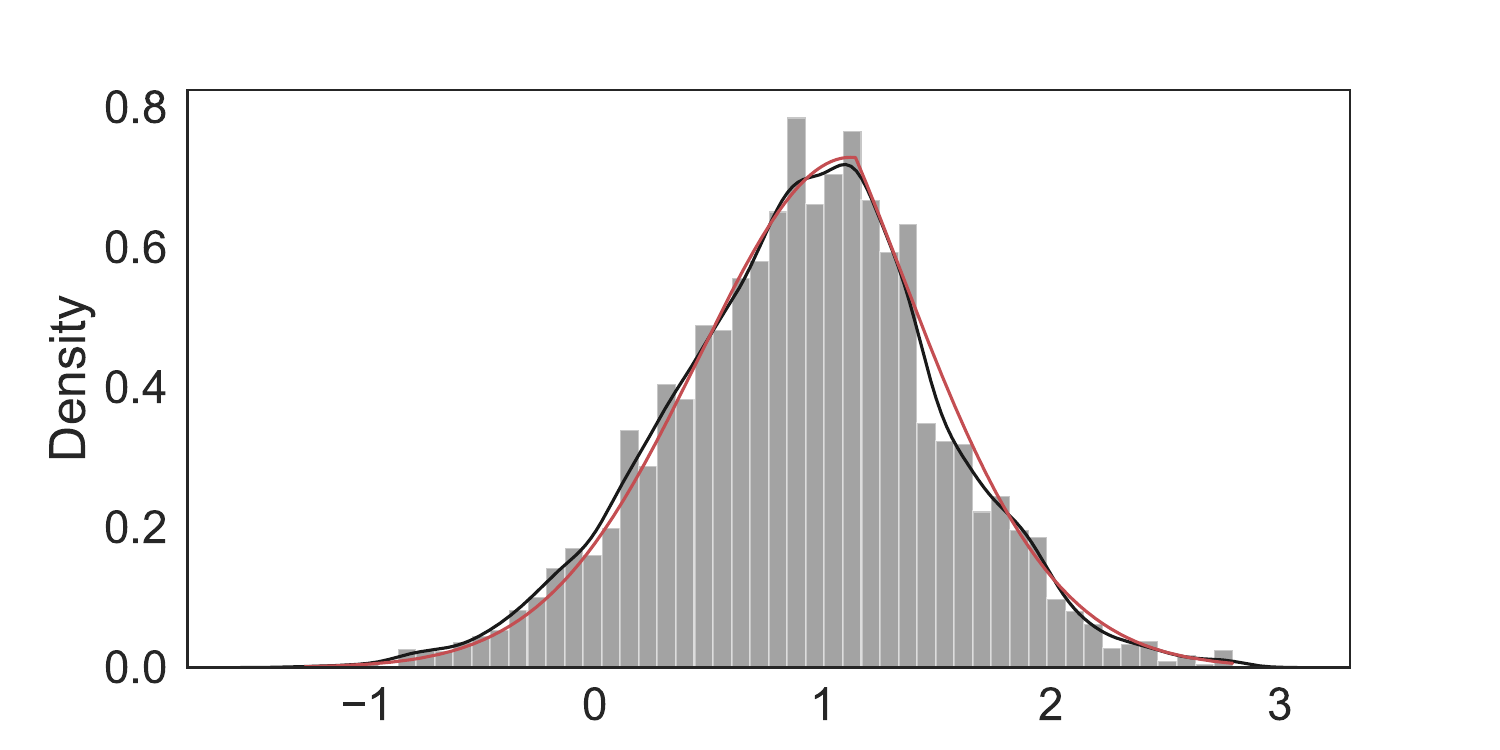}

 &
        \includegraphics[width=0.3\linewidth]{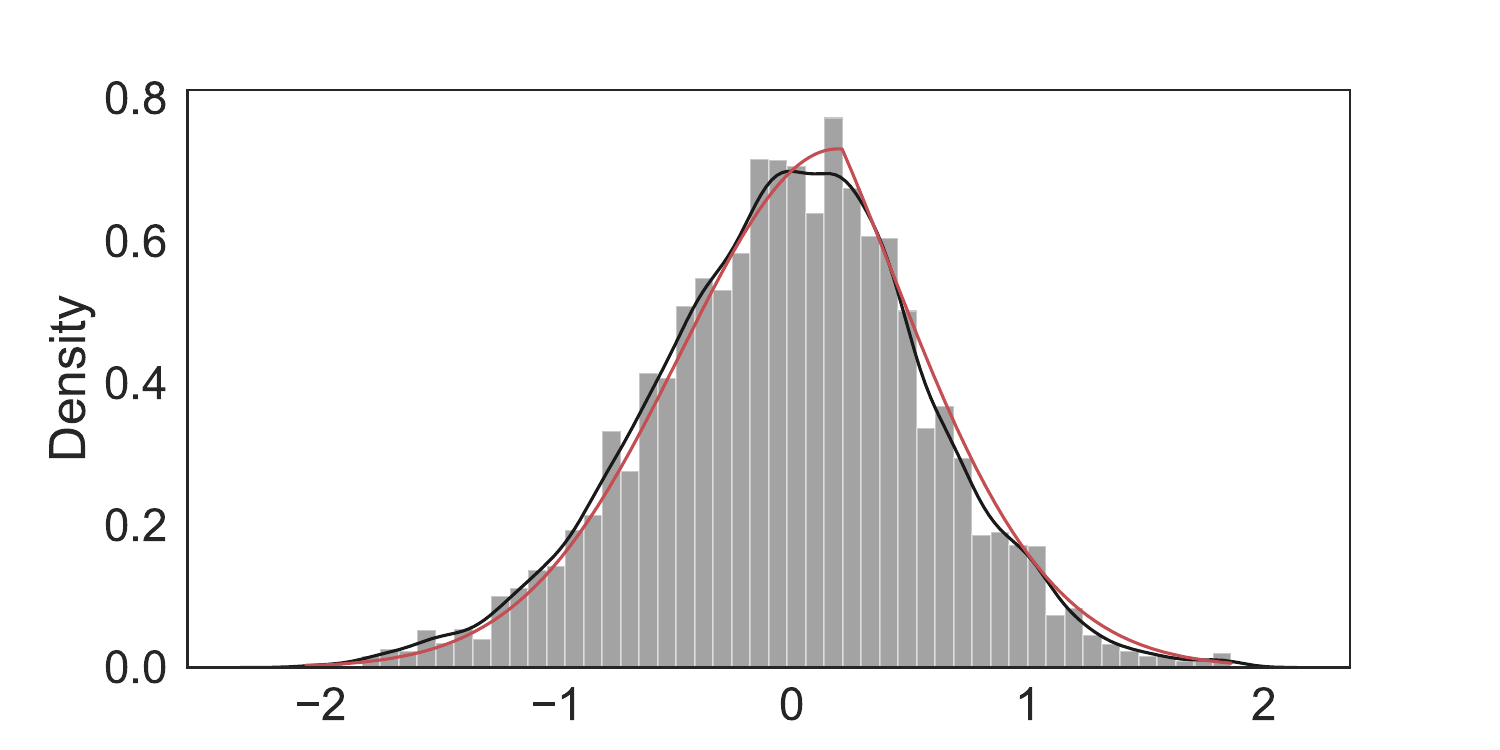} \\~\\
  $\bm z_{-2}=(0.54,0.31)$ &  $\bm z_{-2}=(1.14,1.03)$ & $\bm z_{-2}=(0.21,0.11)$
    \\~\\
  &   a) Case 1: $z_\star >0$ and $z_\star=z_1$& \\~\\

        \includegraphics[width=0.3\linewidth]{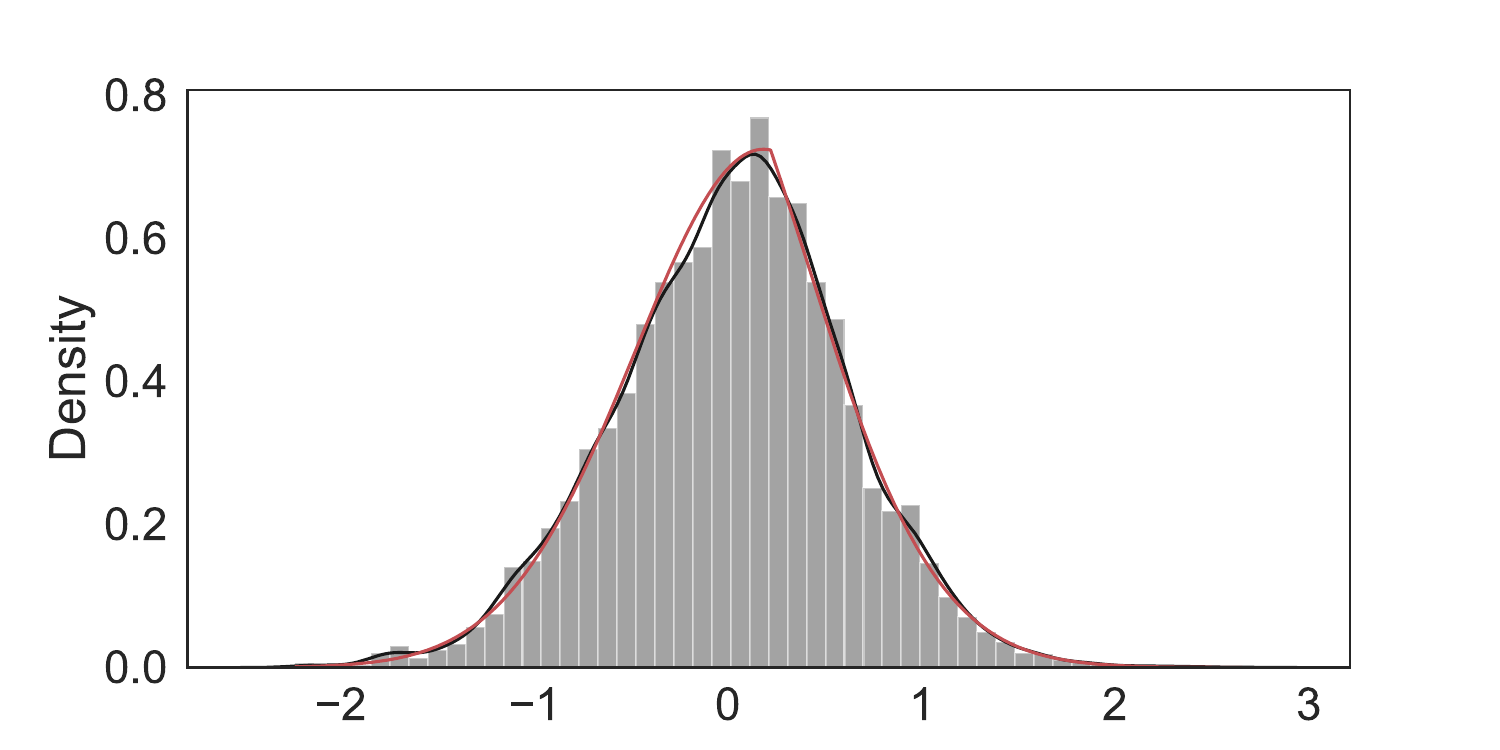}&

        \includegraphics[width=0.3\linewidth]{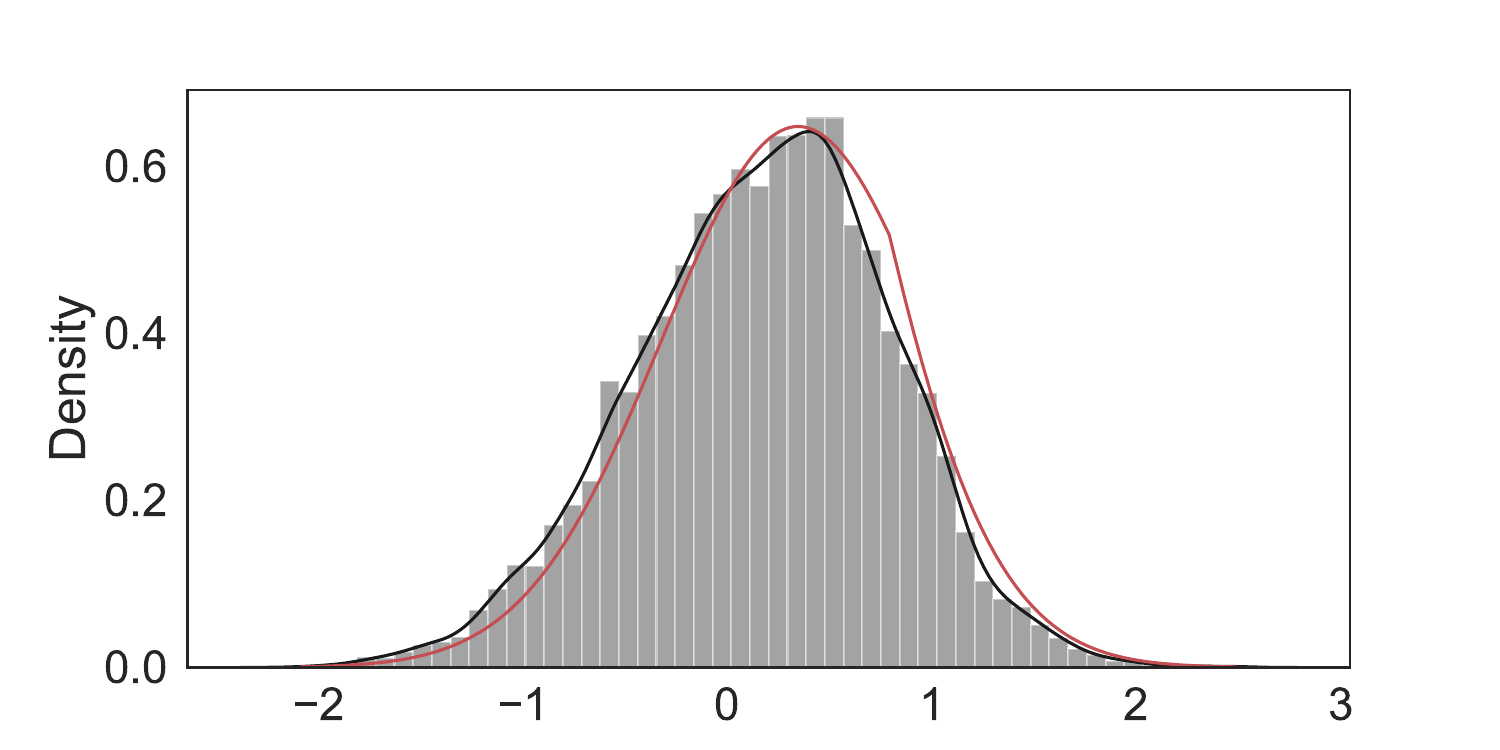} &

        \includegraphics[width=0.3\linewidth]{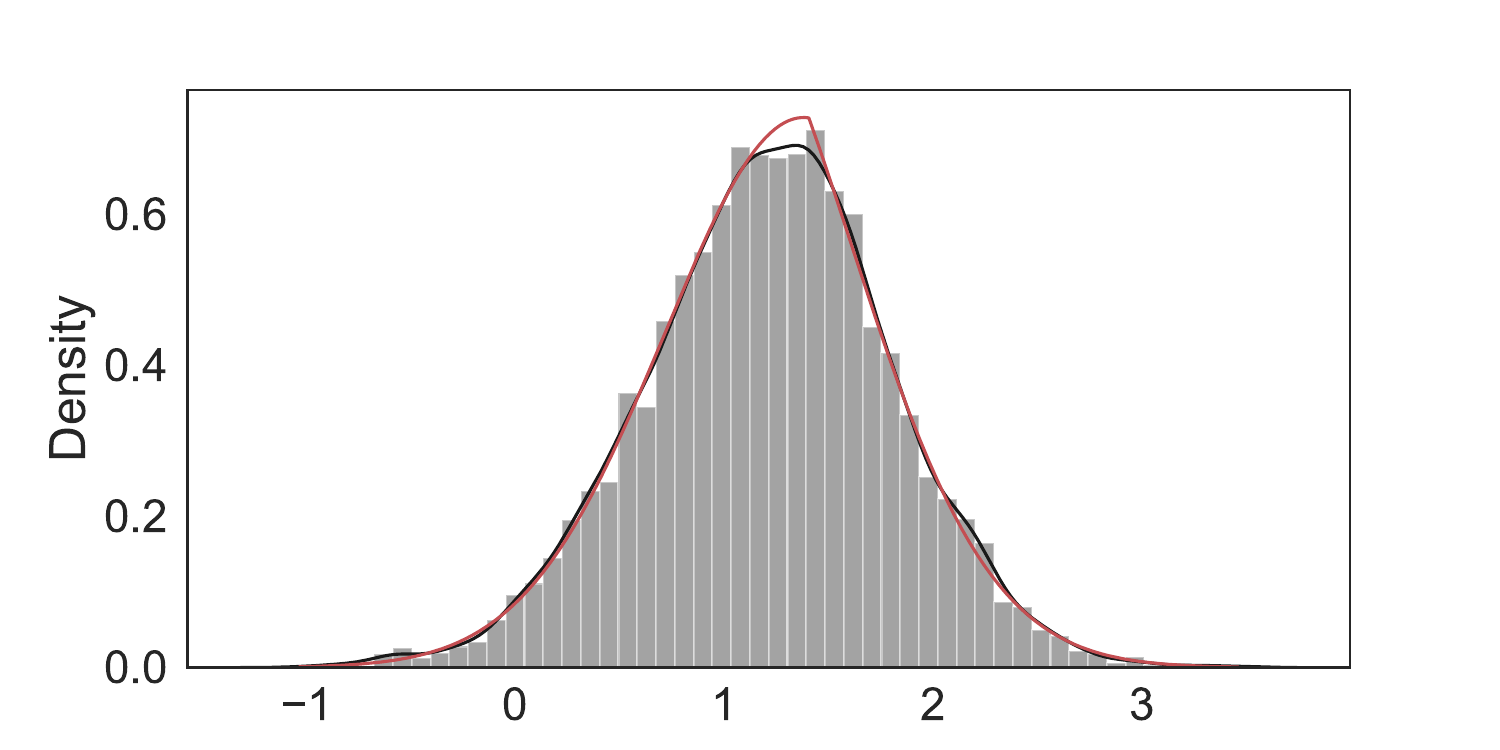} \\~\\

$\bm z_{-2}=(0.17,0.22)$ & $\bm z_{-2}=(0.24,0.79)$ & $\bm z_{-2}=(1.37,1.40)$ \\~\\
   & b) Case 2: $z_\star >0$ and $z_\star \neq z_1$ &\\~\\
    
        \includegraphics[width=0.3\linewidth]{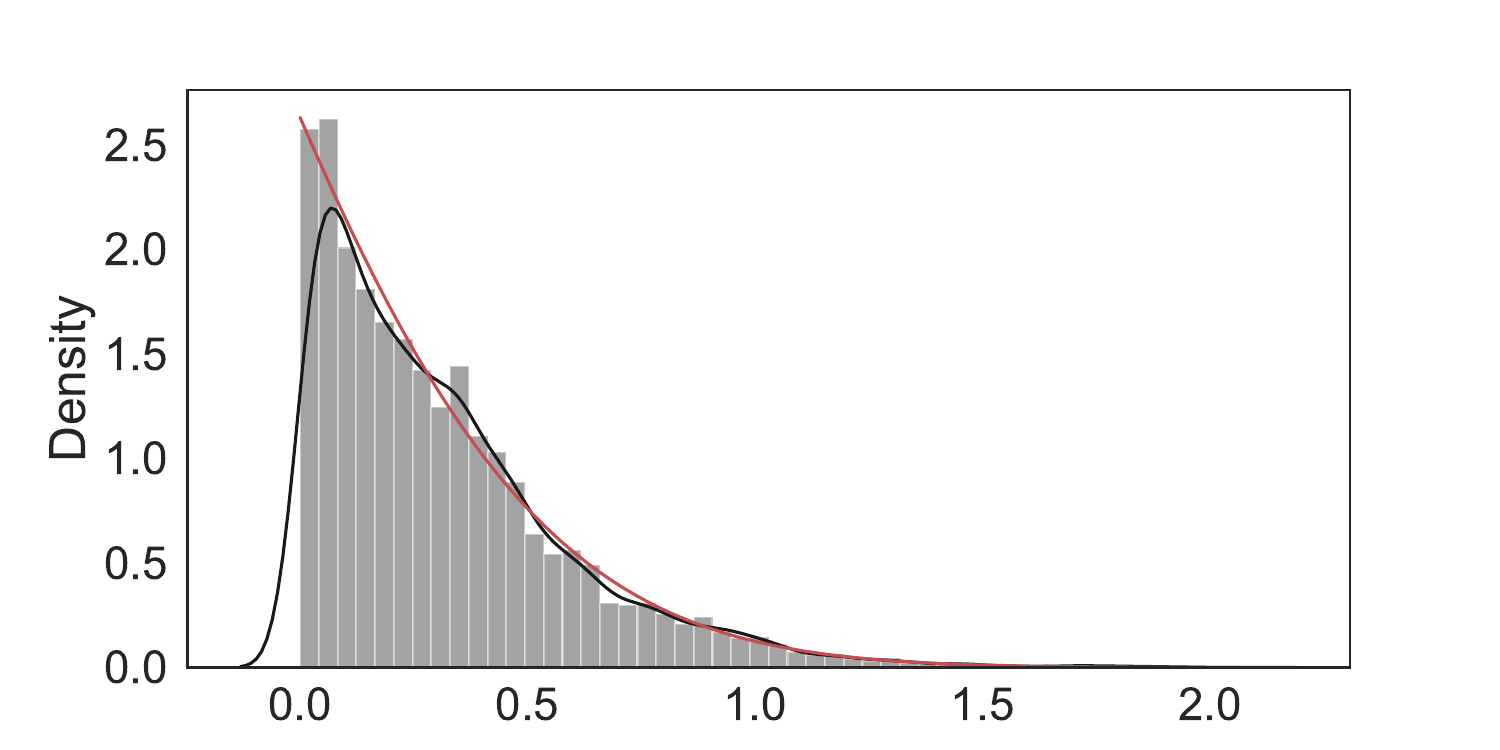} &

        \includegraphics[width=0.3\linewidth]{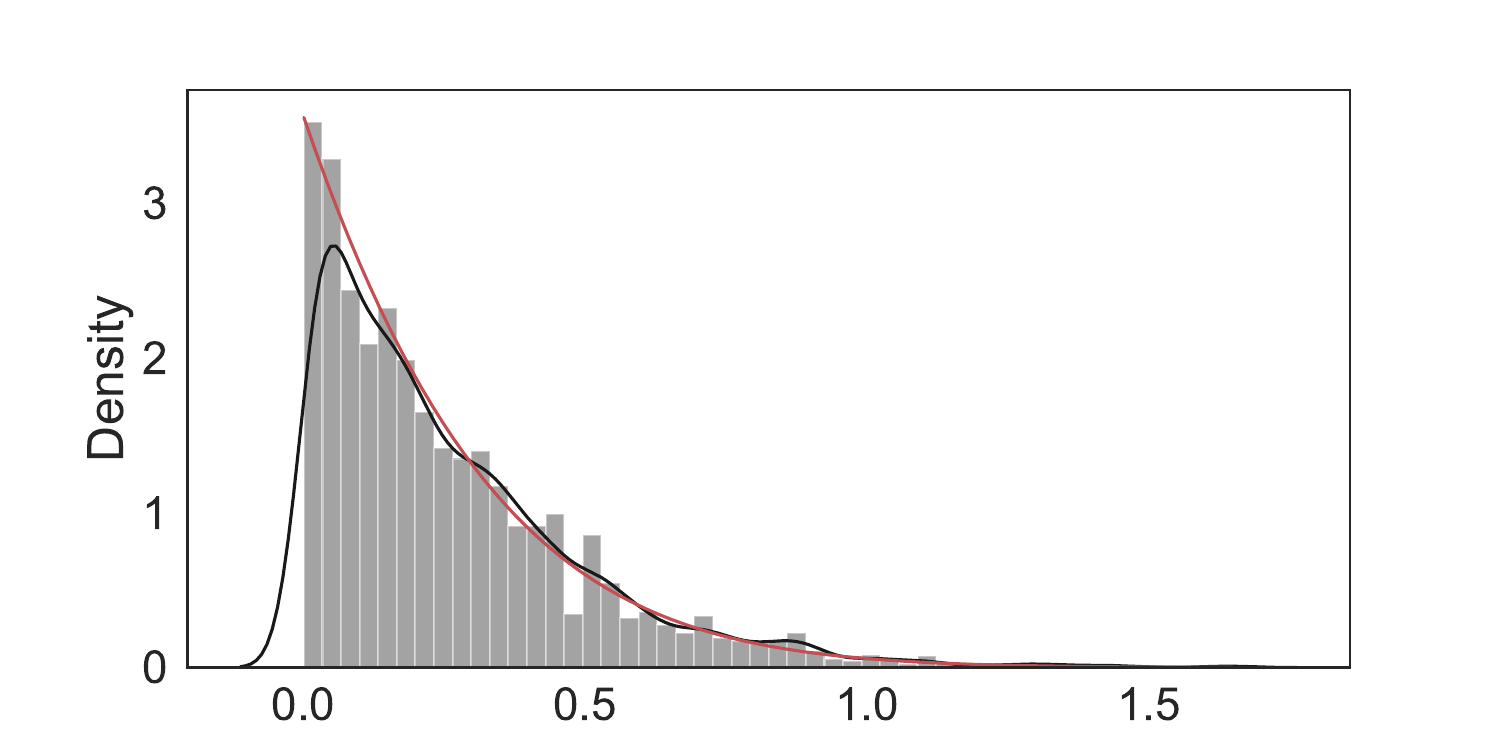} &

        \includegraphics[width=0.3\linewidth]{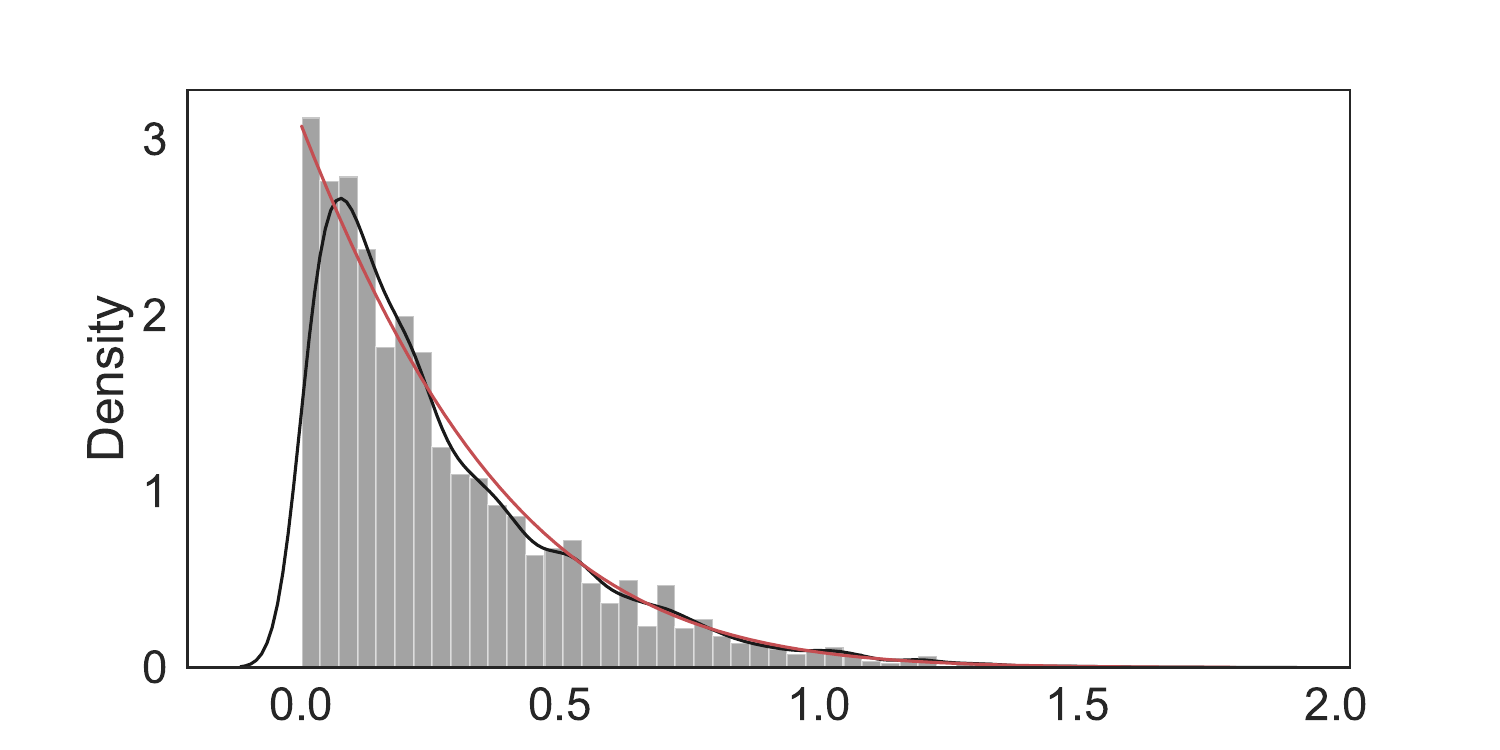}

    \\~\\
    $\bm z_{-2}=(-0.42,-0.35)$ & $\bm z_{-2}=(-0.86,-1.00)$ & $\bm z_{-2}=(-0.67,-0.49)$ \\~\\
  &  c) Case 3: $z_\star \leq0$ &
    \end{tabular}
    \caption{Histograms of simulated samples $Z_2\mid \bm Z_{-2}= \bm z_{-2}$ of size $m=10,000$. Kernel density estimates (curves in black) and  and theoretical densities of the original sample (curves in red). Figures a) correspond to Case 1 with $z_\star >0$ and $z_\star=z_1$, Figures b) to Case 2 with $z_\star >0$ and $z_\star \neq z_1$  and Figures c) to Case 3 with $z_{\star}<0$.}
    \label{fig:CondSimParametric} 
\end{figure}

\section{Illustrations on simulated data}\label{sec:DataSynthe}

As motivated in the introduction, the main goal of the MGP-based simulation approaches is to address the problem of scarcity of available observations in extreme regions. These regions will take different forms depending on the application of interest. 

In the following, we illustrate the two simulation algorithms derived in Section~\ref{sec:algos} on a numerical setting inspired by the field of finance and more specifically financial returns (details are given below). This simulation framework is presented in Section~\ref{subsec:NS}.  The joint MGP simulation algorithm (Algorithm~\ref{alg:JointMGP}) is then applied in Section~\ref{subsec:TRMynthe} in order to estimate the three TRMs presented in Section~\ref{sec:TRM} from the simulated data. Finally, in Section~\ref{subsec:ShocksEst}, the conditional MGP simulation algorithm (Algorithm~\ref{algo:CondAlgo}) is applied to estimate financial shocks, defined as a specific conditional expectation.

\subsection{Simulation framework}\label{subsec:NS}

We consider a 3-dimensional random vector $\bm X =(X_1,X_2,X_3)$ with  marginal Student's $t$-distribution with a relatively low degree of freedom $\nu$. This choice of distribution for the marginals is chosen  so as to mimic financial returns, which are typically heavy tailed. Indeed, the tail of the Student's $t$-distribution is governed by $1/\nu$. So, the smaller $\nu$ the heavier the tail. 

To fully characterise the joint distribution of $\bm X$, we need to define the dependence structure in addition to the marginal distribution. Sklar's theorem \citep{sklar1959fonctions} states that given the continuous joint distribution $\bm F$ of a random vector $\bm X$ with marginal cumulative distribution functions $F_1,F_2,F_3$, there exists a unique copula $\mathcal{C}$ such that 
\begin{align*}
    \bm F(x_1,x_2,x_3) = \mathcal{C}\left(F_1(x_1),F_2(x_2),F_3(x_3)\right). 
\end{align*}

Copulas are powerful tools for modelling the dependence structure of a random vector separately from the marginal distributions. 

Let us also recall that an underlying assumption of our simulation framework is that the components of $\bm X$ are asymptotically dependent. To ensure that this hypothesis is satisfied, we consider the Gumbel copula \citep{nelsen2006introduction} to obtain dependent extremes in the upper tail. In the $3$-dimensional case, the Gumbel copula is defined as follows

\begin{align}\label{eq:copGumb}
    \mathcal{C}(\bm y) := \exp\left(- \left( \sum_{i=1}^3\left[-\log(y_i)\right]^{\theta
}\right)^{1/\theta}\right),
\end{align}
where $\theta \geq 1$ is the copula parameter. The larger $\theta$, the stronger the asymptotic dependence structure between the components of $\bm X$. Consequently, the pair $\left(\bm\nu,\theta\right)$, where $\bm\nu=(\nu_1,\nu_2,\nu_3)$ represents the vector of degrees of freedom associated with each component of $\bm X$ and $\theta$ parameterise the structure of dependence through Equation~\eqref{eq:copGumb}, fully characterises the joint distribution of $\bm X$.

The numerical experiments, unless otherwise specified, are performed on simulated data sets $\mathcal D\in \mathbb{R}^{1500\times 3}$, with $\nu_1 = 2$,  $\nu_2=3$, $\nu_3=2.5$. From this parametric framework, we can derive the theoretical values of the TRMs and infer quantities describing the tail of some conditional distribution, that will be used as benchmarks in the following sections.

As previously mentioned, Algorithms~\ref{alg:JointMGP} and~\ref{algo:CondAlgo} have to run on standard MGP vectors. To this end, the vectors $\bm X$ are transformed to standard MGP vectors $\bm Z$, following the procedure presented in Section~\ref{sec:MGP}. Conversely, once the algorithms have been run on $\bm Z$, the resulting simulated sample outputs of the algorithms $\widetilde{\bm Z}$ are projected back into the original space through 
\[
\widetilde{X}_j = F_j^{-1}\left(1-e^{\widetilde{Z}_j+u^E_j}\right), \mbox{ for } j\in\{1,\ldots,d\}
\] 
where $F^{-1}_j$ is the  quantile function associated with the $j$-th risk factor for $j=1,\ldots,d$.  $F^{-1}_j$ can be the quantile function of  any  parametric distribution that has been fitted to the data,  in our simulation framework, that is the Student's $t$-distribution quantile distribution.  Thus,  any quantity to be estimated on the $\bm X$-scale can be estimated on the $\widetilde{\bm X}$-samples. 

For comparison purpose, the estimation procedure is performed on the following three different samples.
\begin{itemize}
    \item {\bf Original sample} $\mathcal{D}$ generated using the parametric framework above, denoted {\it Orig};
    \item {\bf Simulated sample} $\mathcal{\widetilde{D}}$ obtained using the joint simulation algorithm (Algorithm~\ref{alg:JointMGP}), denoted {\it Simu};
    \item {\bf Extended sample} $\mathcal{D} \cup \mathcal{\widetilde{D}}$ corresponding to the concatenation of the original and simulated samples, denoted {\it Ext}.
\end{itemize}

\subsection{Tail-related risk metrics estimation through the joint simulation of multivariate extremes}\label{subsec:TRMynthe}

In this section, we investigate the performance of our joint simulation procedure (Algorithm~\ref{alg:JointMGP}) in estimating TRMs, as defined in Equations~\eqref{eq:ES}, \eqref{eq:MES} and \eqref{eq:DCTE}, with respect to the several parameters. Particular attention is paid to the strength of the asymptotic dependence, which can be controlled through the copula parameter $\theta$ and the level $\alpha$ (see Supplementary material, Section 1).

In the following, TRMs estimates are compared to their respective theoretical benchmark values (see Supplementary Material, Section 2). For the sake of brevity, we only considered the estimation procedure for the component $X_1$. However, it is important to note that it could also be performed for the other components, $X_2$ and $X_3$.

In order to run the joint simulation algorithm (Algorithm~\ref{alg:JointMGP}), it is necessary to specify the size of the simulated samples $m$. The simulation procedure was run with $m=10,000$ which yielded stable results while exhibiting a reasonable computational cost. Further results regarding the performance of the simulation algorithm with respect to the size $m$ can be found in the Supplementary Material (Section 3). 

The estimation of the $\VaR$ is of paramount importance in estimating the TRM, as all TRMs are defined from the $\VaR$ (see Section~\ref{sec:TRM}).   However, this is not the focus of our current investigation. The performance of our approach is evaluated using the theoretical $\VaR$, that is to say, the $\VaR$ computed as a Student's $t$-quantile with the corresponding degrees of freedom. Other techniques for estimating the $\VaR$  have been investigated, including an empirical approach and the estimation of the $\VaR$ as a univariate extreme value distribution quantile. The results of these analyses can be found in the Supplementary Material (Section 4). In particular, it can be seen that when the theoretical $\VaR$ is unknown, which is often the case in real-data applications, the estimation of the $\VaR$ using a univariate extreme value approach appears to be a good candidate.

Given that the $\VaR$ estimation technique and $m$ are fixed, we can now proceed to investigate the performance of our approach with respect to $\alpha$ and $\theta$. Moreover, since the estimation errors are sensitive to the sample $\mathcal{D}$, it is necessary to  ascertain that the approach performance is not specific to the original sample $\mathcal{D}$. To this end, 50 distinct original samples $\mathcal{D}$ were simulated (of sample size $n$ and with the same joint distribution as outlined in Section~\ref{subsec:NS}). Then, for each original sample $\mathcal{D}$, a set of 50 simulated samples $\widetilde{\mathcal{D}}$ of size $m=10,000$ was generated following the joint simulation algorithm (Algorithm~\ref{alg:JointMGP}). The TRMs are estimated on each of 50 original, simulated and extended samples, along with the relative errors on these TRMs approximations.  

Figure~\ref{fig:AlphaThetaTRM} and Table~\ref{tab:ImpactThetaTRM} summarise the results for TRMs estimations at different levels $\alpha \in \{0.9975,0.999,0.9997\}$ and for different copula parameters $\theta \in \{1.3,2.6,7.3\}$. Boxplots of the distributions of the relative errors are presented, showing that the relative errors for the TRMs estimations on the simulated samples {\it Simu} and on the extended samples {\it Ext} are smaller and present less dispersion  than those made on the original samples {\it Orig}.  It is also important to ascertain that the quality of TRMs do not deteriorate when $\alpha$ gets closer to 1. Figure~\ref{fig:AlphaThetaTRM} shows TRMs estimations at level $\alpha \in \{0.9975,0.999,0.9997\}$, one may observe that the estimation on {\it Simu} and {\it Ext} samples induce fairly low approximation errors regardless of the level $\alpha$. We notice that the joint simulation of multivariate extremes using Algorithm~\ref{alg:JointMGP} improves not only the estimation of $\MES$ and $\DCTE$ (defined with the joint distribution of $\bm X$) but also the estimation of $\ES$ even though it only evolves the marginal distribution of $X_j$.
\begin{figure}[h]
\begin{tabular}{ccc}
    \includegraphics[width=0.3\textwidth]{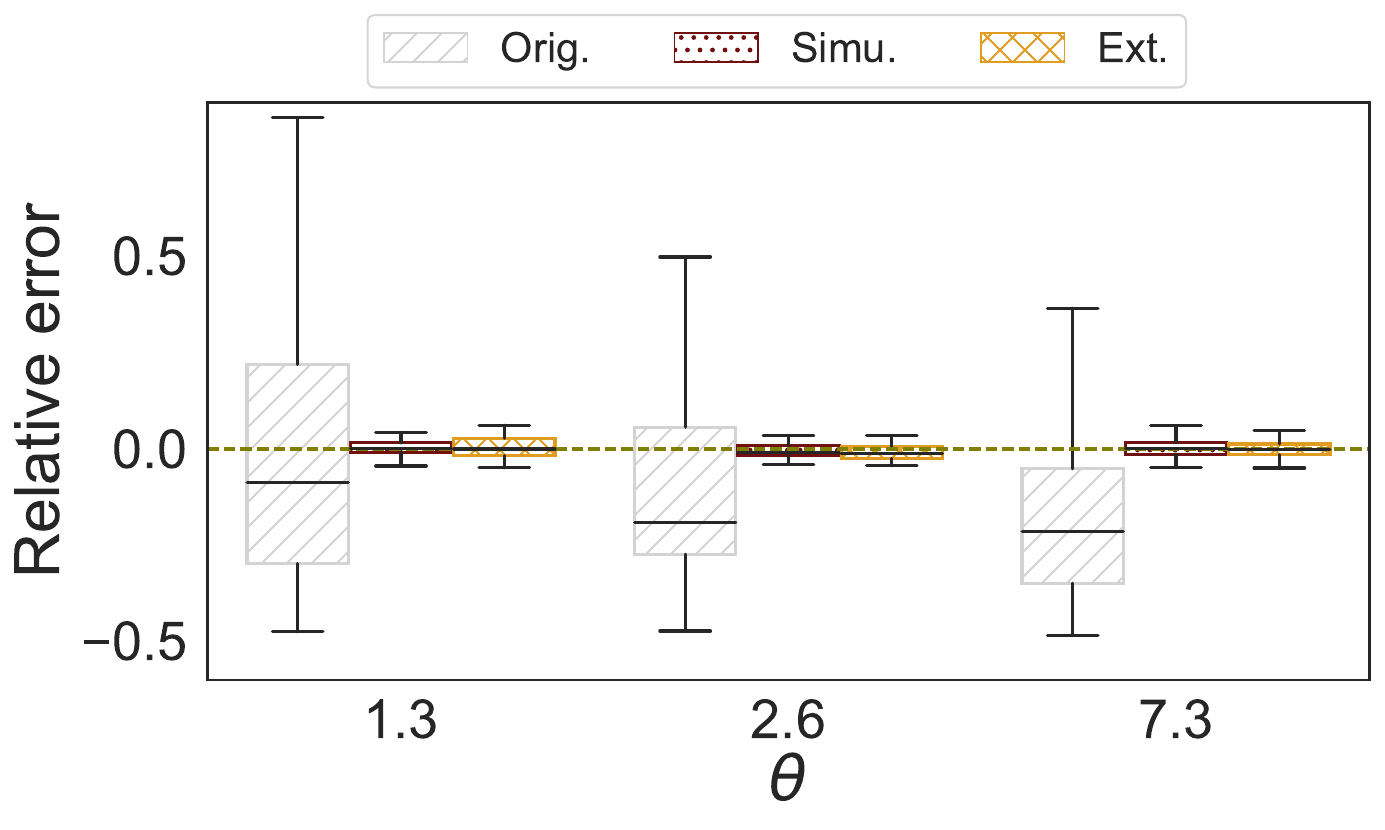} &  \includegraphics[width=0.3\textwidth]{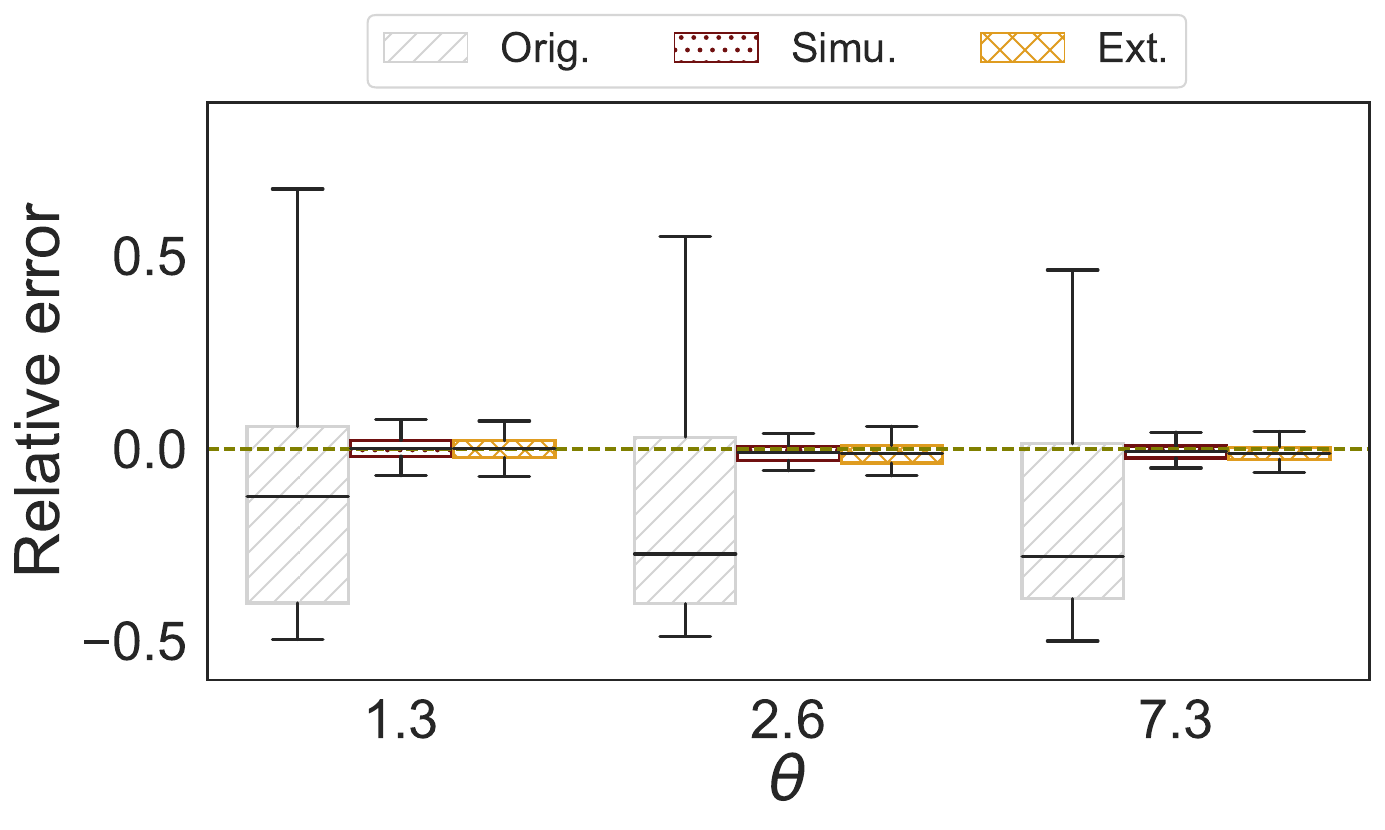} & \includegraphics[width=0.3\textwidth]{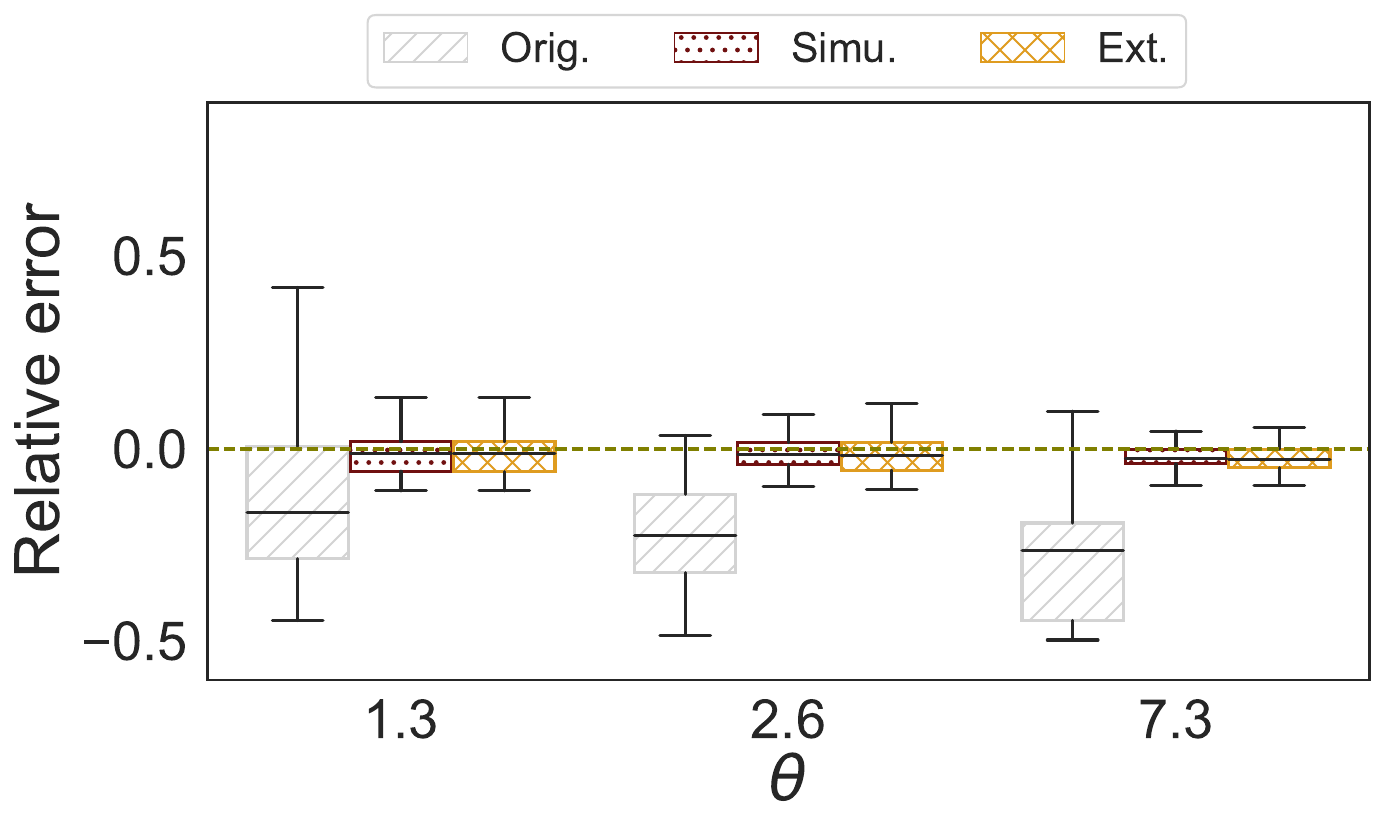} \\~\\
    a) $\ES_{0.9975}$ & b) $\ES_{0.999}$ & c) $\ES_{0.9997}$ \\~\\
 
    \includegraphics[width=0.3\textwidth]{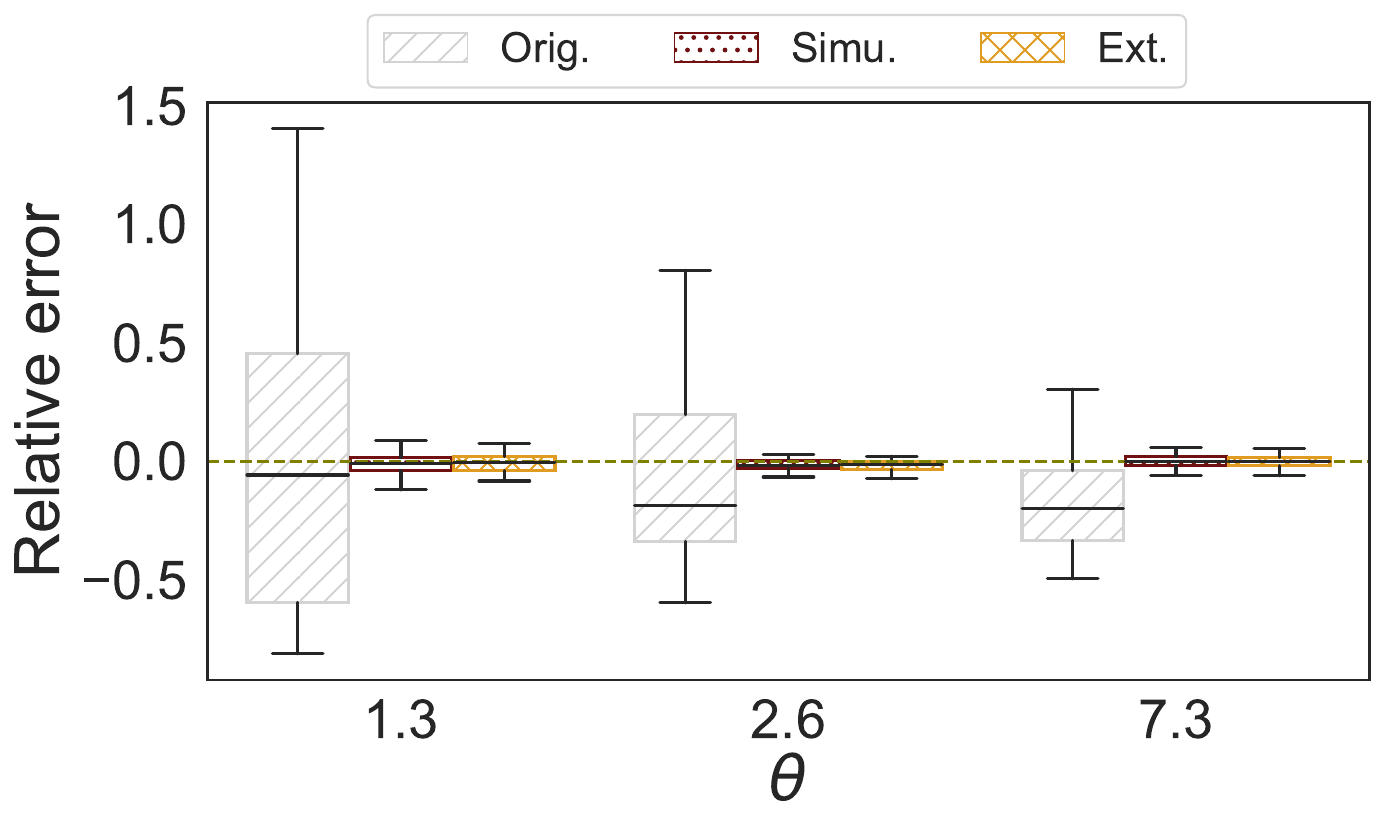} &   \includegraphics[width=0.3\textwidth]{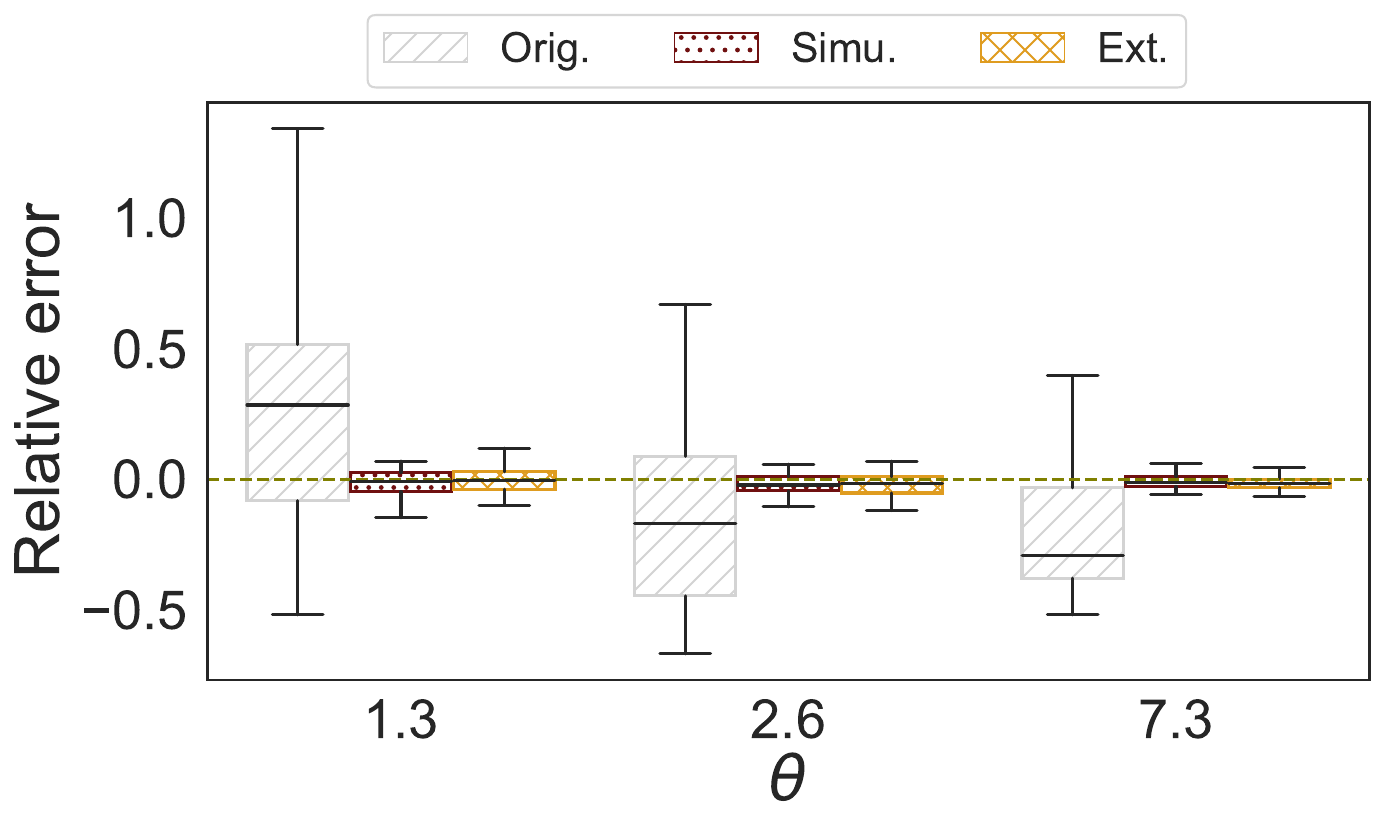} &    \includegraphics[width=0.3\textwidth]{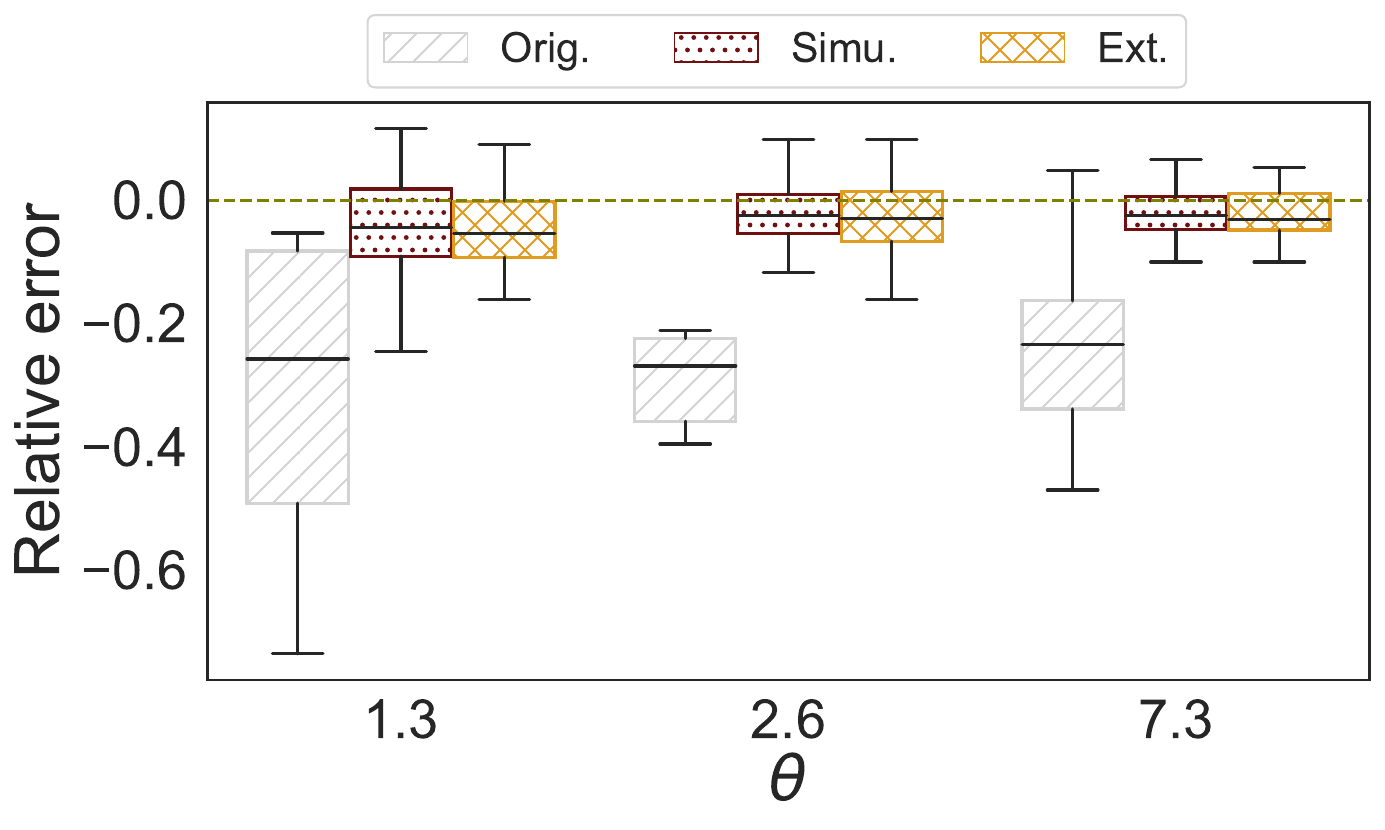}\\~\\
    d) $\MES_{0.9975}$ & e) $\MES_{0.999}$ & f) $\MES_{0.9997}$ \\~\\ 

    \includegraphics[width=0.3\textwidth]{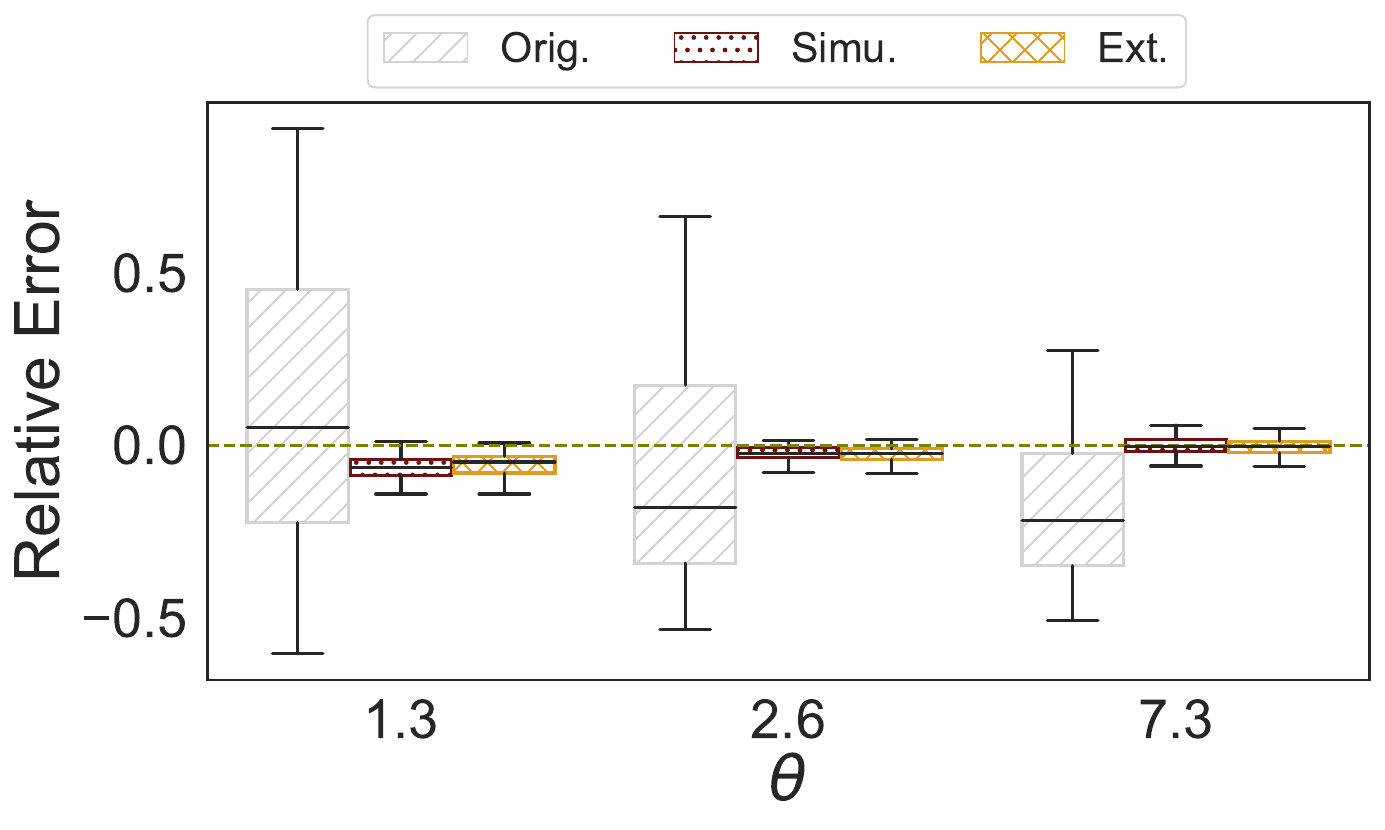} &  \includegraphics[width=0.3\textwidth]{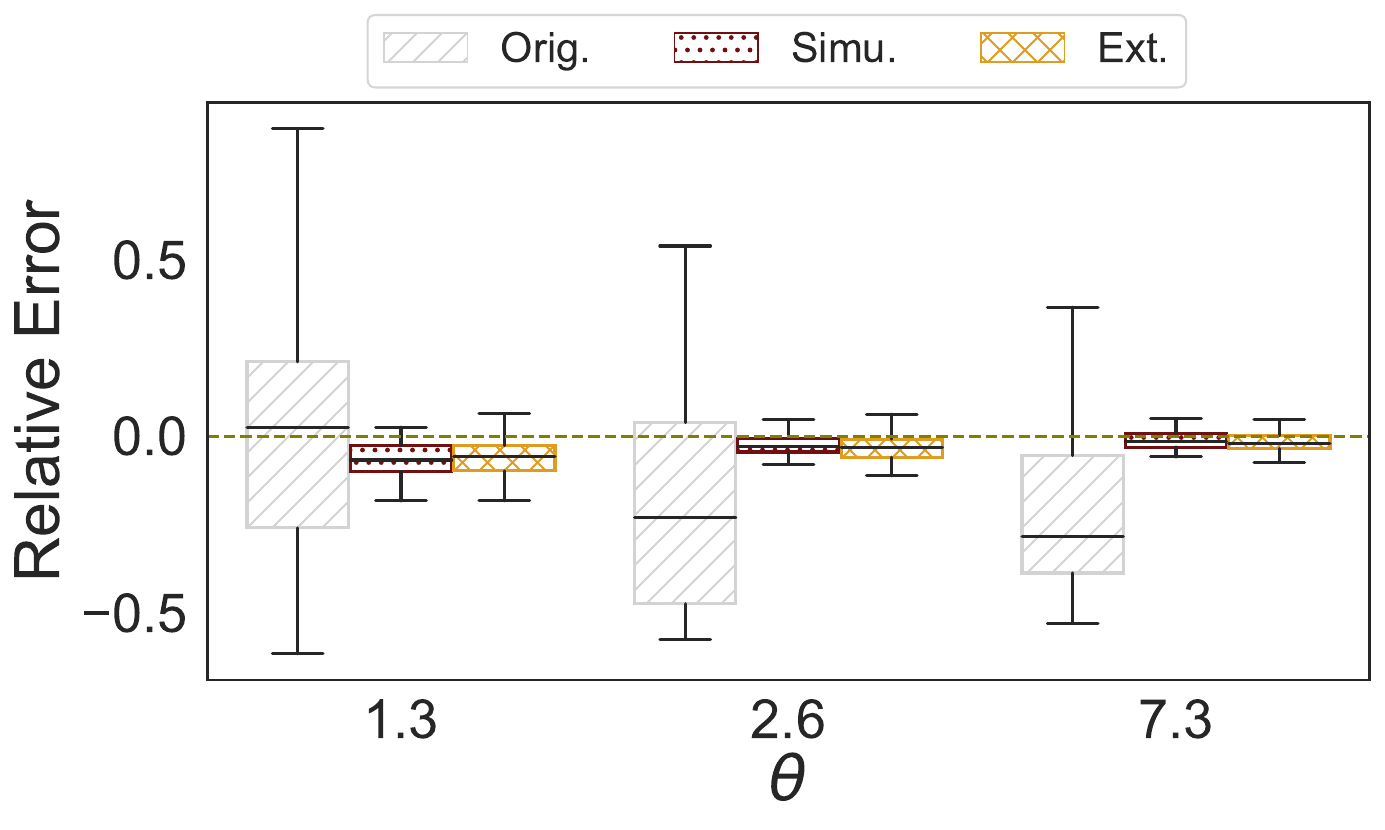} & 
     \includegraphics[width=0.3\textwidth]{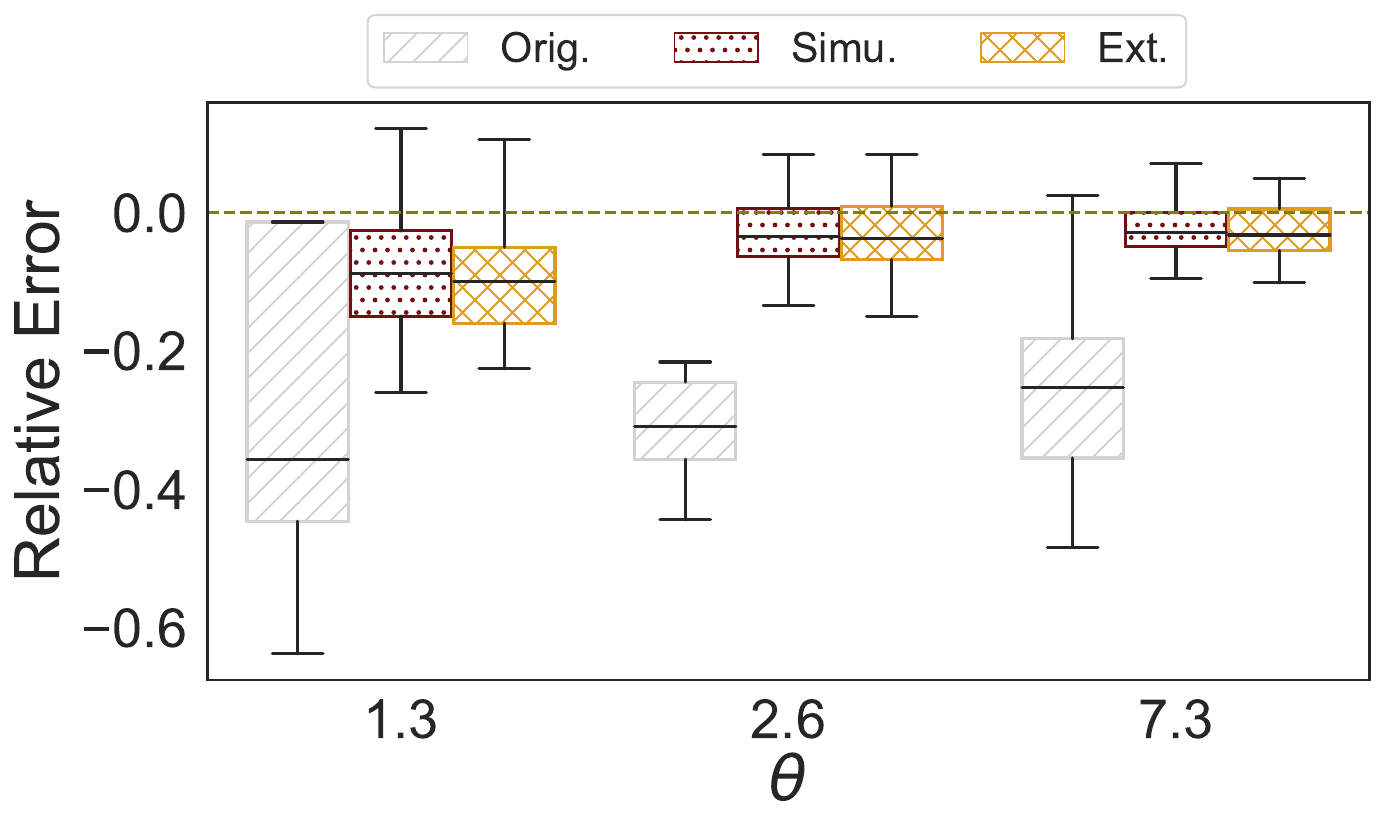}\\~\\
    g) $\DCTE_{0.9975}$ & h) $\DCTE_{0.999}$ & i) $\DCTE_{0.9997}$\\
    \end{tabular}
  \caption{Distribution of relative approximation errors of the estimations of TRMs on 50 original samples {\it (grey oblique lines)}, 50 simulated samples {\it (red dots)} and 50 extended samples {\it (yellow grid)} for the $\ES_\alpha$ (Figures a), b), c)), the $\MES_\alpha$ (Figures d), e), f)) and the $\DCTE_\alpha$ (Figures g), h), i)) at varying level $\alpha \in\{0.9975,0.999,0.9997\}$ with respect to copula parameter $\theta \in \{1.3, 2.6, 7.3\}$. On each graph, the dashed blue horizontal line corresponds to the associated theoretical value.}
  \label{fig:AlphaThetaTRM}
\end{figure}

\begin{table}[h]
\begin{center}
\begin{tabular}{c}
\begin{tabular}{c|rr|rr|rr}
\multirow{2}{*}{$\alpha$} & \multicolumn{2}{c|}{$\ES_\alpha$}  & \multicolumn{2}{c|}{$\MES_\alpha$}                                     
& \multicolumn{2}{c}{$\DCTE_\alpha$}     \\
\cline{2-7} & \multicolumn{1}{c|}{{\it Orig}}   & \multicolumn{1}{c|}{{\it Simu}} & \multicolumn{1}{c|}{{\it Orig}}        & \multicolumn{1}{c|}{{\it Simu}} & \multicolumn{1}{c|}{{\it Orig}}  & \multicolumn{1}{c}{{\it Simu}} \\
 \hline

0.9975  & \multicolumn{1}{r|}{3.7 (1.6)} & 87.4 (5.2)   & \multicolumn{1}{r|}{0.82 (0.9)} & 33.9 (2.2)  & \multicolumn{1}{r|}{1.3 (1.3)} & 42.9 (3.3) \\
0.999  & \multicolumn{1}{r|}{1.7 (1.4)} & 35.0 (2.3)    & \multicolumn{1}{r|}{0.4 (0.6)} & 13.5 (1.1)    & \multicolumn{1}{r|}{0.5 (0.7)}  & 17.2 (1.4) \\ 
0.9997   & \multicolumn{1}{r|}{0.6 (0.7)} & 10.4 (0.8)     & \multicolumn{1}{r|}{0.2 (0.49)} & 4.0 (0.4)   & \multicolumn{1}{r|}{0.2 (0.5)} &5.1 (0.5)               
\end{tabular}\\~\\

a) $\theta = 1.3$ \\~\\

\begin{tabular}{c|rr|rr|rr}
\multirow{2}{*}{$\alpha$} & \multicolumn{2}{c|}{$\ES_\alpha$}   & \multicolumn{2}{c|}{$\MES_\alpha$}    & \multicolumn{2}{c}{$\DCTE_\alpha$}  \\
\cline{2-7} & \multicolumn{1}{c|}{{\it Orig}} & \multicolumn{1}{c|}{{\it Simu}} & \multicolumn{1}{c|}{{\it Orig}}   & \multicolumn{1}{c|}{{\it Simu}} & \multicolumn{1}{c|}{{\it Orig}}    & \multicolumn{1}{c}{{\it Simu}} \\ 
 \hline

0.9975     & \multicolumn{1}{r|}{3.9 (2.4)} & 114.3 (6.6)   & \multicolumn{1}{r|}{2.3 (1.6)}  & 74.1 (3.9)                & \multicolumn{1}{r|}{2.7 (1.7)} & 83.2 (4.6) \\ 
\hline
0.999   & \multicolumn{1}{r|}{1.6 (1.2)} & 46.0 (2.8)    & \multicolumn{1}{r|}{0.9 (0.9)} & 29.8 (1.7)    & \multicolumn{1}{r|}{1.0 (0.9)} & 33.5 (2.1)  \\ 
\hline
0.9997   & \multicolumn{1}{r|}{0.4 (0.6)} & 13.6 (0.7)  & \multicolumn{1}{r|}{0.2 (0.5)} & 8.8 (0.53)  & \multicolumn{1}{r|}{0.3 (0.5)} & 10.0 (0.6)  
\end{tabular}\\~\\
b) $\theta=2.6$ \\~\\
\begin{tabular}{c|rr|rr|rr}
\multirow{2}{*}{$\alpha$} & \multicolumn{2}{c|}{$\ES_\alpha$}   & \multicolumn{2}{c|}{$\MES_\alpha$}   & \multicolumn{2}{c}{$\DCTE_\alpha$}   \\ 
\cline{2-7} & \multicolumn{1}{c|}{{\it Orig}}      & \multicolumn{1}{c|}{{\it Simu}} & \multicolumn{1}{c|}{{\it Orig}}    & \multicolumn{1}{c|}{{\it Simu}} & \multicolumn{1}{c|}{{\it Orig}}  & \multicolumn{1}{c}{{\it Simu}} \\ 
\hline
0.9975  & \multicolumn{1}{r|}{4.1 (2.1)} & 144.8 (10.0)  & \multicolumn{1}{r|}{3.4 (1.9)} & 126.0 (8.5)   & \multicolumn{1}{r|}{3.5 (2.0)}  & 131.0 (8.9)\\
\hline
0.999   & \multicolumn{1}{r|}{1.6 (1.4)} & 57.7 (4.1)  & \multicolumn{1}{r|}{1.5 (1.3)} & 50.2 (3.38)   & \multicolumn{1}{r|}{1.5 (1.4)} & 52.3 (3.6)  \\ 
\hline
0.9997    & \multicolumn{1}{r|}{0.4 (0.6)} & 17.3 (10.0)   & \multicolumn{1}{r|}{0.3 (0.5)} & 15.1 (1.1)    & \multicolumn{1}{r|}{0.3 (0.5)}  & 15.7 (1.2) \\
\end{tabular}\\~\\
c) $\theta=7.3$
\end{tabular}
\caption{Mean values and standard deviations in brackets of the number of observations on which the TRMs are estimated on the original samples ({\it Orig}) and on the simulated samples ({\it Simu}) for different values of $\alpha \in\{0.9975,0.999,0.9997\}$ and different copula parameters a) $\theta=1.3$, b) $\theta=2.6$  and c) $\theta=1.3$.}
 \label{tab:ImpactThetaTRM}
 \end{center}
\end{table}

The accuracy of the TRMs estimates is attributable to the enlargement of the sample size on which TRMs are being estimated compared to the original sample size. Table~\ref{tab:ImpactThetaTRM} shows the mean value of the sample size on which the three TRMs are estimated when considering the original samples and simulated samples for different levels $\alpha \in \{0.9975,0.999,0.9997\}$ and for different values of copula parameters $\theta \in \{1.3, 2.6, 7.3\}$. Overall, it can be noted that for the considered levels of $\alpha$, the TRMs estimation on the original samples is conducted with limited data, that is, here with at most 5 observations. In contrast, our simulation approach allows for a larger sample size to estimate the various TRMs, extending the original observations to 100 observations in some cases. As expected, the number of points above the $\VaR$ increases when the asymptotic dependence is stronger.

\subsection{Conditional expectation estimation through the conditional simulation of multivariate extremes}
\label{subsec:ShocksEst}

In Section~\ref{subsec:TRMynthe}, we proposed a numerical illustration of our first Algorithm~\ref{alg:JointMGP}, the joint simulation algorithm, through the empirical estimation of TRMs. As previously stated in the introduction, the second algorithm~\ref{algo:CondAlgo} can be used to infer quantities describing the tail of some conditional distribution. Since the TRMs under investigation can be expressed as conditional expectations, one might be inclined to consider the conditional simulation algorithm for their estimation. However, accurate estimation of TRMs  with the conditional simulation algorithm (Algorithm~\ref{algo:CondAlgo}) is not possible. Indeed, for  a MGP vector $\bm Z$, the aforementioned algorithm generates samples of $Z_j \mid \bm Z_{-j}=\bm z_{-j}$ for $j=1,\ldots, d$. However, when considering the estimation of $\MES$ for instance, we are attempting  to estimate the expectation of $Z_{j}\mid\bm Z_{-j}\geq\bm z_{-j}$ (see Equation~\eqref{eq:MES}).

Nevertheless, the conditional simulation algorithm could be employed in other contexts.  For a vector of risk factors $\bm X$, we consider here the inference of the expectation of  risk factor $X_j$ given that the other risk factors $\bm X_{-j}$ are observed, that is $\bm X_{-j}=\bm x_{-j}$, defined as follows 
\begin{equation}\label{eq:shocks}
\mu(\bm x_{-j})=\mathbb{E}\left[X_j \mid \bm X_{-j}=\bm x_{-j}\right] . 
\end{equation} 
Algorithm~\ref{algo:CondAlgo},  the conditional simulation algorithm,  is illustrated on this specific application using the simulated data generated with the parametric framework specified in Section~\ref{subsec:NS}.
In this simulation framework, the conditional distribution of $X_j$ given $\bm X_{-j}=\bm x_{-j}$ can be derived explicitly, thereby enabling the derivation of $\mu(\bm x_{-j})$ as defined in Equation~\eqref{eq:shocks}. Hence, a benchmark value is available, and can be used to measure the accuracy of the empirical estimations using simulated samples generated with Algorithm~\ref{algo:CondAlgo}.

For each original sample $\mathcal{D}$, we consider the estimation of the conditional expectation  denoted by $\mu(\bm x_{-2})$ (see Equation~\eqref{eq:shocks}). This estimation is based on the conditioning observations $\bm x_{-2}$,  the original sample $\mathcal{D}$ and the generated conditional simulated samples of $X_2 \mid \bm X_{-2}= \bm x_{-2}$.  Once the vector $\bm X$ has been transformed into a MGP vector $\bm Z$, the conditioning observations are selected such that each conditional simulation case (see Section~\ref{sec:CondSim}),  $z_{\star}>0$,  $z_\star=z_1$ (Case 1), $z_{\star}>0$, $z_{\star}\neq z_1$ (Case 2) and $z_{\star}\leq 0$ (Case 3), is represented. For each $\bm z_{-2}$, 10 simulated samples $Z_2 \mid \bm Z_{-2}= \bm z_{-2}$ of size $m=10,000$ are generated, and $\mu(\bm x_{-2})$ is estimated as the empirical mean of each simulated sample. The deviation of the  mean over these estimates from the reference value are computed for each conditioning observation $\bm x_{-2}$. Then, Figure~\ref{fig:pointplot} depicts the distribution of the mean absolute error across 50 original samples $\mathcal{D}$ for each conditional simulation case, with the conditioning observations being increasingly extreme. In other words, the selected conditioning observations are in the bulk of the distribution for Figure~\ref{fig:pointplot} a), in the tail of the distribution for Figure~\ref{fig:pointplot} b), and even further in the tail in Figure~\ref{fig:pointplot} c).  For comparison purposes, a linear regression model with $X_2$ as the response variable and $\bm X_{-2}$ as the covariates is fitted. It is evident that the conditional simulation errors are of smaller magnitude than those induced by linear regression, which result in substantial errors. It should be noted that while errors exhibit large amplitude as the conditioning observation becomes extreme with linear regression, the situation is reversed when the conditional simulation is considered. The lowest errors are observed for the most extreme conditioning observations, illustrated in  Figure~\ref{fig:pointplot} c). 

\begin{figure}[h]
\centering
\begin{tabular}{cc}
  \centering
  \includegraphics[width=.45\textwidth]{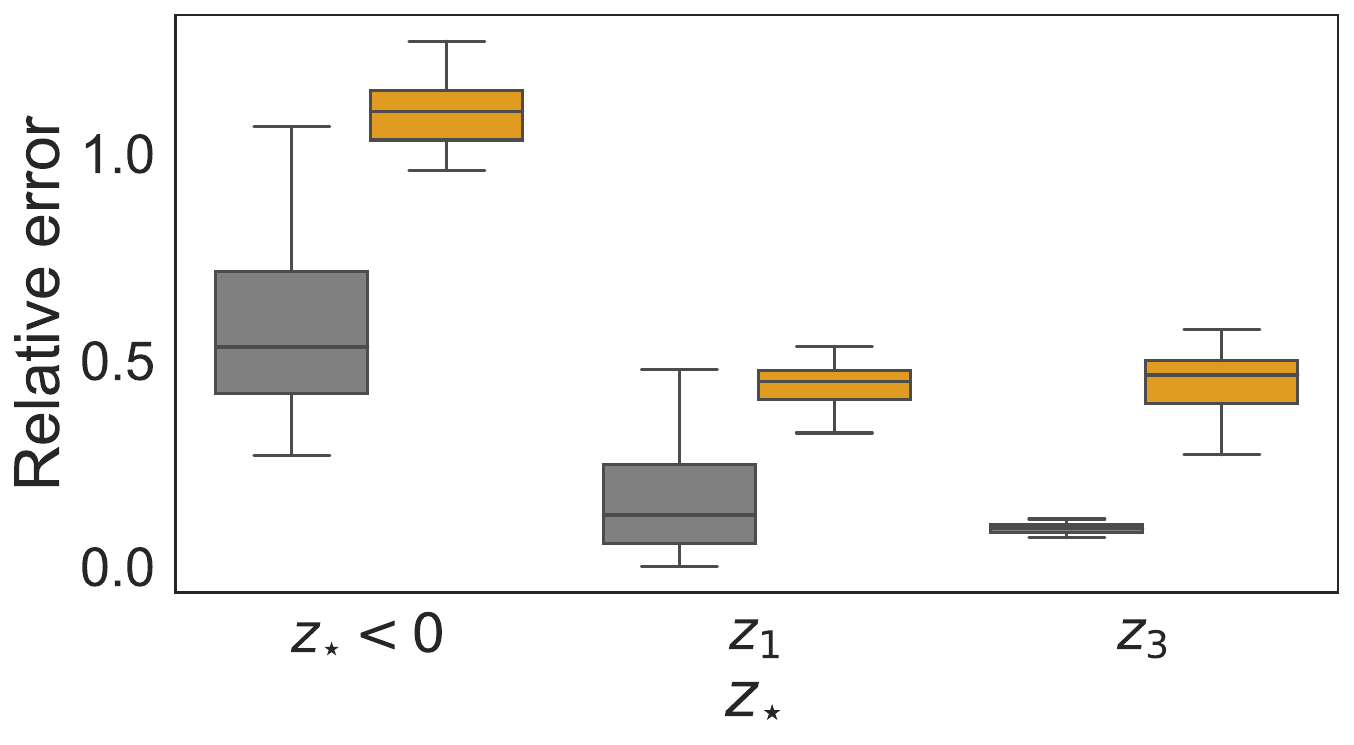} & \includegraphics[width=.45\textwidth]{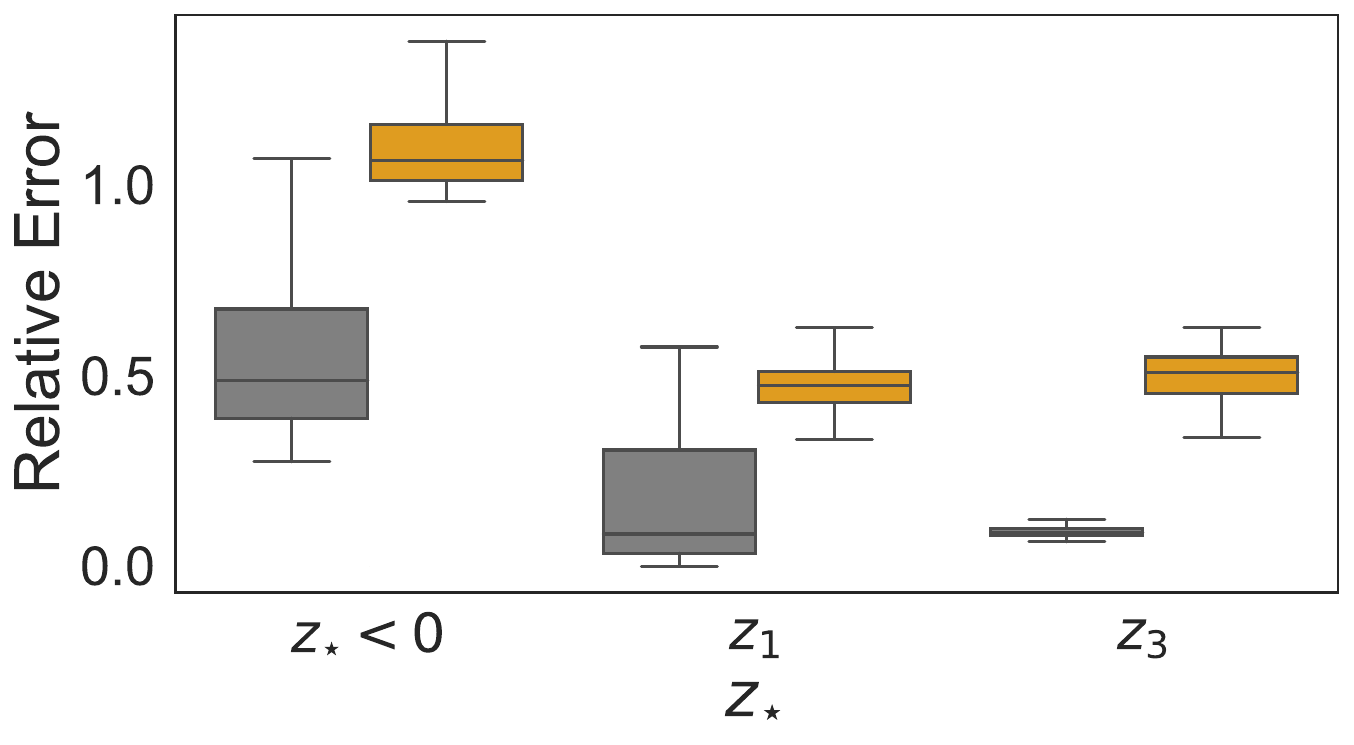} \\
  a) $\alpha=0.975$ & b) $\alpha=0.99$  \\~\\
 \multicolumn{2}{c}{\includegraphics[width=.45\textwidth]{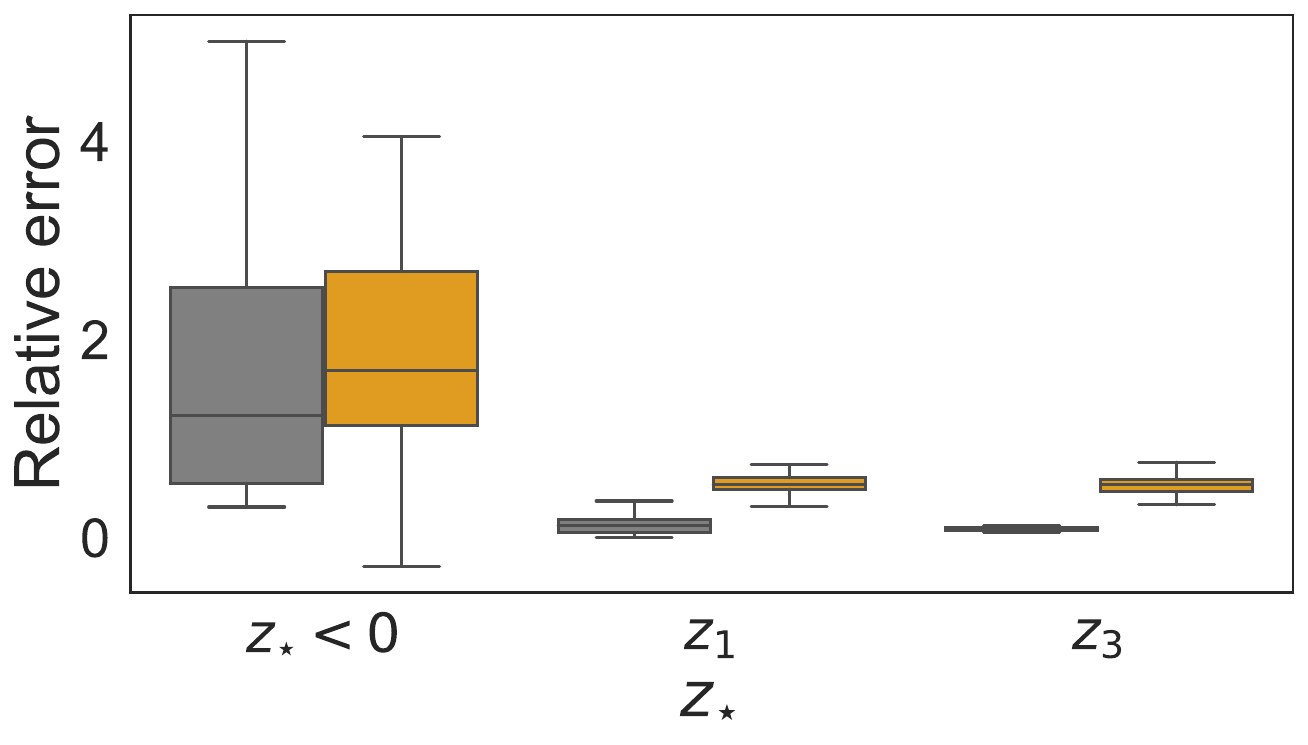}} \\
 \multicolumn{2}{c}{c) $\alpha\rightarrow1$ (i.e. the maximum)}\\
\end{tabular}
    \caption{Distribution of relative errors of the deviations from the reference value of $\mathbb{E}\left[X_2 \mid\bm X_{-2} = \bm x_{-2}\right]$ with respect to each conditional simulation case, using conditional simulation {\it (dark grey boxplot)} and linear regression {\it (light orange boxplot)} for increasingly extreme conditioning observations, that is $\bm x_{-2}$ chosen as three different $\alpha$-quantile of all-observations.}
  
\label{fig:pointplot}
\end{figure}

\section{Illustrations on real financial data}\label{sec:DataReal}

In the previous section, the two non-parametric simulation algorithms~\ref{alg:JointMGP} and~\ref{algo:CondAlgo} were illustrated on simulated data. These illustrations have demonstrated the advantages of these procedures for the estimation of TRMs and the inference of quantities involving the tail of some conditional distribution. This section presents a more practical illustration based on real financial data.

These data consist of weekly negative returns on  stocks from three UK banks: HSBC, Lloyds and RBS. The corresponding time series are denoted $X_1$, $X_2$ and $X_3$ respectively. The sample size of each time series is $n=470$, with temporal resolution spanning from 29/10/2007 to 17/10/2016. The data used in this study were initially extracted from Yahoo Finance, and considered in \citep{kiriliouk2019peaks}.

As in Section~\ref{sec:DataSynthe}, both algorithms are applied to this real data set $\bm X =(X_1,X_2,X_3)$ in two stages.  Firstly, Algorithm~\ref{alg:JointMGP} is employed in order to estimate TRMs (Section~\ref{subsec:TRMreal}). Secondly, Algorithm~\ref{algo:CondAlgo} is used to estimate the conditional expectation (Equation~\eqref{eq:shocks}, Section~\ref{subsec:ShockReal}).

Prior to this, and as done with the simulated data, the first step is to standardise the data to common margins (see Section~\ref{sec:MGP}).  In our case, we have chosen to fit a Student's $t$-distribution on each risk factor, which is often used with heavy-tailed financial data  \citep[see e.g.][]{mcneil2015quantitative}. However, other choices could have been considered. Goodness-of-fit plots can be found in the Supplementary Material (see Section 5) along with the estimated parameters of the fitted Student's $t$-distributions.

\subsection{Tail related risk metrics estimations}\label{subsec:TRMreal}

A set of $100$ new samples of weekly negative returns of size $m=10,000$ were generated using our joint simulation algorithm (Algorithm~\ref{alg:JointMGP}) in order to perform empirical estimation of the three TRMs of interest. The results are displayed in Table~\ref{tab:TRMRealData} along with the estimations on the original sample {\it Orig}, i.e. without increasing the amount of available data. The performance assessment conducted under the parametric framework has shown that TRMs estimates on the simulated samples {\it Simu}  were more precise than estimations on the extended sample {\it Ext}, hence, we exclusively discuss results from the simulated samples {\it Simu} in this section. 
 Table~\ref{tab:TRMRealData} shows that the joint simulation approach allows for the estimation of the TRMs that would be unfeasible if only the original sample was considered. This is to be expected, given that the objective of the simulation approach is to increase the number of available observations in extreme regions in order to enable the performance of empirical estimations in a reliable manner.

\begin{table}
    \centering
   \begin{tabular}{c|rr|rr|rr}
\multirow{2}{*}{Stock} & \multicolumn{2}{c|}{$\ES_{0.9975}$}  & \multicolumn{2}{c|}{$\MES_{0.9975}$}          & \multicolumn{2}{c}{$\DCTE_{0.9975}$}   \\
\cline{2-7} & \multicolumn{1}{c|}{{\it Orig}}  & \multicolumn{1}{c|}{{\it Simu}} & \multicolumn{1}{c|}{{\it Orig}}        & \multicolumn{1}{c|}{{\it Simu}} & \multicolumn{1}{c|}{{\it Orig}}    & \multicolumn{1}{c}{{\it Simu}} \\
\hline

HSBC & \multicolumn{1}{r|}{0.24 (-) } & 0.28 (0.015)    & \multicolumn{1}{c|}{NA} &0.29 (0.022)   & \multicolumn{1}{r|}{0.11 (-)} & 0.32 (0.026)     \\ 
 LL  & \multicolumn{1}{r|}{0.55 (-)} & 0.83 (0.078) & \multicolumn{1}{c|}{NA} & 0.91 (0.122)   & \multicolumn{1}{c|}{NA}  & 1.04 (0.150)   \\ 
 RBS  & \multicolumn{1}{r|}{0.62 (-)} & 0.56 (0.030)   & \multicolumn{1}{c|}{NA} & 0.62 (0.053)      & \multicolumn{1}{c|}{NA} &0.66 (0.058)              
\end{tabular}
\caption{Empirical mean estimates, and standard deviation in brackets, of the TRMs for the three negative returns of interest, using the original data {\it Orig} and using the simulation algorithm with $100$ generated samples of size $m=10,000$ each {\it Simu}. For each TRM, a threshold level $\alpha=0.9975$ is considered. NA meaning that there are no observations above the theoretical $\VaR$, so that the computation of the TRM was not possible.}
\label{tab:TRMRealData}
\end{table}

\subsection{Conditional expectation estimations} \label{subsec:ShockReal}

We now turn to the second application of our simulation framework, namely the estimation of the conditional expectation $\mu(\bm x_{-j}) = \mathbb{E}\left[X_j \mid \bm X_{-j}= \bm x_{-j}\right]$, for $j\in\{1,2,3\}$. For illustrative purposes, the conditioning observations $(\bm x_{-j})_{j\in\{1,2,3\}}$ are selected in such a way that they cover and illustrate the three possible cases of the conditional simulation algorithm (see Section~\ref{sec:CondSim}). However, in practice, if this approach is used, for instance, for shock inference, these values  would typically be provided by regulators. 

In order to simulate samples of $X_j \mid \bm X_{-j} = \bm x_{-j}$, Algorithm~\ref{algo:CondAlgo} considers the differences $\Delta^{j,q}$ with $q\neq j$ (see Equation~\eqref{eq:MGPEq}). The algorithm was presented with $q=1$ and thus $j\neq 1$. But for $j=1$, any $q\neq 1$ can be selected. In our case, to simulate $X_1 \mid \bm X_{-1} = \bm x_{-1}$, $q$ was chosen to be equal to 3. To illustrate the three conditional simulation cases identified in Section~\ref{sec:CondSim}, we consider values of the components of the conditioning vector $\bm x_{-j}$ as quantiles at different extreme levels of the marginal distribution of the $X_j$. For each conditional simulation case and each stock, we consider only a single conditioning observation vector, and we generate 100 samples of size $m=10,000$ using the conditioning simulation Algorithm~\ref{algo:CondAlgo}. Then $\mu(\bm x_{-j})$ is estimated as the empirical mean of each simulated sample with Algorithm~\ref{algo:CondAlgo}. The mean and the standard deviation over the 100 estimates are displayed in Table~\ref{tab:LevelEstReal}. As in the numerical illustration in Section~\ref{subsec:ShocksEst}, we compare the results obtained through our approach with those obtained through linear regression. The results are depicted in Table~\ref{tab:LevelEstReal}.

Table~\ref{tab:LevelEstReal} shows that while the shocks estimates for the HSBC negative returns using conditional simulation are comparable to those obtained with linear regression, the discrepancy between these two shock estimated is more pronounced for Llyods and RBS negative returns. These two negative returns exhibit a heavier tail than HSBC. However, this type of distribution is precisely the one where linear regression is not a suitable approach, and other methodologies should be considered. The results obtained under the simulation framework of Section~\ref{subsec:NS},  where the theoretical value was used as a reference, indicated that the approximation errors were relatively low, in contrast to those produced by linear regression. Consequently, for real data,  it might be preferable to use our shock estimates over those provided by linear regression. 

\begin{table}
\centering
\begin{tabular}{c}
\begin{tabular}{l|ccc}
 &  Case 1 ($z_{\star}>0, z_{\star}=z_q$)  &  Case 2 ($z_{\star}>0, z_{\star}\neq z_q$) & Case 3 ($z_{\star}\leq 0$) \\ 
 \hline
HSBC & 0.1526 ($10^{-3}$) & 0.1094 ($10^{-4}$) & 0.0436 ($10^{-4}$) \\
LLyods & 0.3031 ($10^{-3}$) & 0.0880 ($10^{-3}$) & 0.0793 ($10^{-3}$) \\
RBS  & 0.2421 ($10^{-3}$) & 0.3273 ($10^{-3}$) & 0.0951 ($10^{-4}$) 
\end{tabular}\\~\\
a) Conditional Simulation \\~\\
\begin{tabular}{l|ccc}
 & Case 1 ($z_{\star}>0, z_{\star}=z_q$) & Case 2 ($z_{\star}>0, z_{\star}\neq z_q$) & Case 3 ($z_{\star}\leq 0$) \\
 \hline
HSBC & 0.1481 & 0.1051 & 0.0187  \\
LLyods & 0.2042 & 0.1417 & 0.0295 \\
RBS & 0.1475 & 0.3931 & 0.0478 
\end{tabular}
\\~\\
  b) Linear Regression
\end{tabular}
\caption{Empirical mean estimates, and standard deviation in brackets when available, of the estimated shocks $\mu(\bm x_{-j}) = \mathbb{E}\left[X_j \mid \bm X_{-j}= \bm x_{-j}\right]$, $j\in\{1,2,3\}$, with $X_1$ (resp. $X_2$ and $X_3$) corresponding to negative returns from HSBC (resp. Lloyds and RBS). Estimations are performed using  our conditional simulation approach (Algorithm~\ref{algo:CondAlgo}) and b) linear regression. For both inference approaches are illustrated through the three cases identified for the conditional simulation approach.}
\label{tab:LevelEstReal}
\end{table}

\section{Conclusion and Perspectives}


One of the primary concerns when attempting to estimate risks at high levels is the scarcity of available data. In fact, data sparseness makes any estimation prone to instabilities and inconsistencies, thereby limiting the effectiveness of this estimate as a trustworthy risk metric. Another consequence of considering solely historical data  is the fact that the information contained in historical observations is incomplete. Essentially, this approach represents only past crisis information, implying that upcoming crises cannot exceed the severity of those that have already been experienced. 

The issue was addressed in this paper by the development of two non-parametric simulation approaches of multivariate extremes. Both algorithms are driven by multivariate extreme value theory. The first algorithm, referred to as the joint simulation algorithm, generates MGP vectors. The second algorithm, referred to as the conditional simulation algorithm, generates samples of a component of MGP vectors conditionally on the others components being extreme.  The aforementioned algorithms have enabled the expansion of the available data set, and the use of the newly generated observations has resulted in the accurate estimation of tail risk metrics.

The performance of each of the algorithms has been demonstrated through two potential applications on both simulated and real data. The first is the estimation of TRMs for the joint simulation algorithm, and the second one is the inference of quantities involving some conditional tail distribution for the conditional simulation algorithm.

The aim of this study was to enhance the quantity of observations in extreme regions. Following the conclusion of the parametric numerical experiment, it was established that even empirical methods applied to data simulated with our algorithms can yield accurate risk measurement estimates. Consequently, future works could focus on the combination of our simulation approaches with more sophisticated estimation techniques than empirical ones \citep[e.g.][]{padoan2023marginal, davison2023tail}.

\paragraph{Codes} All the codes and data are publicly available at \url{https://github.com/MadharNisrine/MultivariateExtremeSimulator.git}.

\section{Theoretical support}\label{sec:theo}

\subsection{Theoretical support for Algorithm~\ref{alg:JointMGP}}\label{subsec:theo:joint}
The following Proposition~\ref{l:jointSim} guarantees that the simulated samples through Algorithm~\ref{alg:JointMGP} are actually distributed according to a standard MGP distribution, for all $q \in \{1,\dots,d\}$, and in particular for $q=1$. 

\begin{proposition}\label{l:jointSim} For $j\in \{1,\ldots,d\}$, let $F$ be the distribution function of $\bm \Delta^{(j)}_1,\ldots,\bm \Delta^{(j)}_{n}$,  and $\widetilde F_{nm}$ the empirical distribution function of $\widetilde {\bm \Delta}^{(j)}_1, \ldots,\widetilde {\bm \Delta}^{(j)}_m$.
If $\widetilde F_{nm}$ converges in distribution to $F$, as $n$ and $m$ tend to infinity, then $\left(\widetilde{\bm Z}_{\ell}\right)_{1 \leq \ell \leq m}$ converges in distribution to a standard MGP vector distributed as  the sample $\left(\bm Z_{i}\right)_{1 \leq i\leq n}$. 
\end{proposition}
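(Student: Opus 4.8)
The plan is to realise each simulated vector $\widetilde{\bm Z}_\ell$ as the image of $(E_\ell,\widetilde{\bm\Delta}^{(1)}_\ell)$ under a single fixed measurable map, and then to invoke the continuous mapping theorem. Concretely, define $g:(0,\infty)\times\mathbb R^{d}\to\mathbb R^{d}$ by first setting $\delta^{r,s}:=\delta^{1,s}-\delta^{1,r}$ for all $r,s$ (so that $\delta^{1,1}=0$), and then
\[
g\big(e,\bm\delta^{(1)}\big)_j \;=\; e+\sum_{s=1,\,s\neq j}^{d}\delta^{j,s}\prod_{r=1,\,r\neq s}^{d}\mathbf 1_{\delta^{r,s}<0},\qquad j=1,\dots,d .
\]
By Representation~\eqref{eq:MGPEq} together with $\Delta^{r,s}=\Delta^{1,s}-\Delta^{1,r}$ we have $\bm Z=g(E,\bm\Delta^{(1)})$ with $E\sim\mathrm{Exp}(1)$ independent of $\bm\Delta^{(1)}$, and line~6 of Algorithm~\ref{alg:JointMGP} is exactly $\widetilde{\bm Z}_\ell=g(E_\ell,\widetilde{\bm\Delta}^{(1)}_\ell)$. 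Since $\bm\delta^{(1)}\mapsto(\delta^{r,s})_{r,s}$ is linear, the only discontinuities of $g$ lie on the finite union of hyperplanes $N=\bigcup_{r\neq s}\{\delta^{1,r}=\delta^{1,s}\}\subset\mathbb R^{d}$.

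Next I would check the weak convergence of the inputs. The variables $E_1,\dots,E_m$ are i.i.d. $\mathrm{Exp}(1)$ for every $n,m$, so they already have the target law; by hypothesis $\widetilde F_{nm}$ converges in distribution to $F$; and the two blocks are independent by construction (step~3 of Algorithm~\ref{alg:JointMGP}). Independence makes the characteristic functions factorise, whence the joint law of $(E_\ell,\widetilde{\bm\Delta}^{(1)}_\ell)$ converges in distribution to the product measure $\mathrm{Exp}(1)\otimes F$, which is precisely the law of $(E,\bm\Delta^{(1)})$. (If one prefers the conclusion at the level of the empirical measure of the whole simulated sample rather than of a single draw, the same argument goes through after replacing this convergence by weak convergence of $\frac1m\sum_{\ell=1}^{m}\delta_{(E_\ell,\widetilde{\bm\Delta}^{(1)}_\ell)}$, which follows from $\widetilde F_{nm}\Rightarrow F$, the law of large numbers for the $E_\ell$, and independence.)

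The decisive step is then the continuous mapping theorem: once the discontinuity set of $g$ is shown to be negligible for the limit law, $\widetilde{\bm Z}_\ell=g(E_\ell,\widetilde{\bm\Delta}^{(1)}_\ell)$ converges in distribution to $g(E,\bm\Delta^{(1)})=\bm Z$, and by \citet{rootzen2018multivariate2} this $\bm Z$ is a standard MGP vector with the same law as the $\bm Z_i$. To close the argument I would show
\[
\mathbb P\big((E,\bm\Delta^{(1)})\in(0,\infty)\times N\big)=\sum_{r\neq s}\mathbb P\big(\Delta^{1,r}=\Delta^{1,s}\big)=\sum_{r\neq s}\mathbb P\big(\Delta^{r,s}=0\big)=0,
\]
which holds because each pairwise difference $\Delta^{r,s}=T_r-T_s=Z_r-Z_s$ is a non-degenerate, absolutely continuous random variable — the very regularity already implicitly assumed when the density $f_{\bm\Delta^{(q)}}$ is used in Section~\ref{sec:CondSim}.

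I expect the main obstacle to be exactly this last verification, together with pinning down in which sense the inputs (and hence the outputs) are meant to converge: a single draw $\widetilde{\bm Z}_\ell$ versus the empirical distribution of the simulated sample $(\widetilde{\bm Z}_\ell)_{1\le\ell\le m}$. The deterministic post-processing $g$ in Algorithm~\ref{alg:JointMGP} is continuous everywhere except on the measure-zero set $N$, so as soon as $N$ is shown to be null for $F$, the result is an immediate consequence of the continuous mapping theorem applied to the stochastic representation~\eqref{eq:SMGPRep}–\eqref{eq:MGPEq}; all remaining steps are routine.
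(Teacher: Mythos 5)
Your argument is correct, but it takes a different route from the paper's. You factor line~6 of Algorithm~\ref{alg:JointMGP} through a fixed measurable map $g$, prove joint weak convergence of the inputs $(E_\ell,\widetilde{\bm\Delta}^{(1)}_\ell)$ to $\mathrm{Exp}(1)\otimes F$, and conclude by the continuous mapping theorem, the price being the verification that the discontinuity set $N$ (the hyperplanes where two differences tie, i.e.\ where the indicators in \eqref{eq:MGPEq} switch) is null under the limit law --- which holds under the absolute continuity of $\bm\Delta^{(q)}$ that the paper assumes implicitly through $f_{\bm\Delta^{(q)}}$. The paper instead computes the conditional c.d.f.\ of $\widetilde{\bm Z}$ given $\widetilde{\bm\Delta}^{(q)}$, integrates out the exponential analytically to get $1-\mathbb{E}\bigl[\min\bigl(1,\e^{-\min_j\{z_j-\mathcal D_j\}}\bigr)\bigr]$, and then uses Lemmas~\ref{prop:alternativeMin} and~\ref{prop:MinAlternative2} to rewrite the exponent as a difference of minima/maxima; this turns the integrand into a \emph{bounded continuous} function of the differences, so the hypothesis $\widetilde F_{nm}\Rightarrow F$ yields the limit directly by the Portmanteau theorem without any a.e.-continuity caveat, and the limit is recognised as the explicit MGP c.d.f.\ $1-\mathbb{E}\bigl[\min\bigl(1,\e^{\max(\bm T-\bm z)-\max(\bm T)}\bigr)\bigr]$ of \citet{rootzen2018multivariate2}. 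So your proof is shorter and more structural (it identifies the limit as $g(E,\bm\Delta^{(1)})\overset{d}{=}\bm Z$ via the representation \eqref{eq:SMGPRep} rather than via an explicit formula), but needs the no-ties condition to apply the continuous mapping theorem; the paper's computation buys an indicator-free continuous functional (hence a cleaner passage to the limit) and an explicit identification of the limiting distribution, at the cost of the two auxiliary lemmas and a longer calculation. Your remark on the ambiguity between convergence of a single draw and of the empirical measure of the simulated sample is pertinent: the paper's proof is likewise stated conditionally on the observed differences and interprets the conclusion at the level of the law of one simulated vector.
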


The two following technical lemmas are useful to prove Propostion~\ref{l:jointSim}. Their proofs are straightforward. 

\begin{lemma}\label{prop:alternativeMin}
    \begin{align*}
    \min_{j}\left(z_j - \mathcal{D}_j\right) = \min_j\left(z_1+\Delta^{q,1},z_2+\Delta^{q,2},\ldots,z_d+\Delta^{q,d}\right) - \min_j\left(\Delta^{q,1},\Delta^{q,2},\ldots,\Delta^{q,d}\right),
    \end{align*}
    where $\mathcal{D}_j = \sum_{k=1,k\neq j}^d \Delta^{j,k}\prod_{l=1,l\neq k}^d\mathbf{1}_{\Delta^{l,k}<0}$. 
\end{lemma}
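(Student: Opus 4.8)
The plan is to reduce everything to a single identity for $\mathcal{D}_j$, after which both sides of the claimed equality collapse to the same expression. Throughout I would fix the index $q$ appearing on the right-hand side and work entirely in the language of the differences $\bm\Delta^{(q)} = (\Delta^{q,1},\dots,\Delta^{q,d})$, exploiting the elementary relation $\Delta^{j,k} = \Delta^{q,k} - \Delta^{q,j}$, which follows from $\Delta^{j,k} = T_j - T_k$ in Equation~\eqref{eq:wk_w1} (equivalently, from the relation $\Delta^{r,s} = \Delta^{1,s} - \Delta^{1,r}$ used in Algorithm~\ref{alg:JointMGP}, written with base index $q$ instead of $1$).

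The key step is to simplify the indicator product defining $\mathcal{D}_j$. First I would rewrite each factor as $\mathbf{1}_{\Delta^{l,k}<0} = \mathbf{1}_{\Delta^{q,k}<\Delta^{q,l}}$, so that $\prod_{l\neq k}\mathbf{1}_{\Delta^{l,k}<0}$ is precisely the indicator that $\Delta^{q,k}$ is the strict minimum of the vector $\bm\Delta^{(q)}$. Since $\bm T$ (hence $\bm\Delta^{(q)}$) has a continuous distribution, the minimizer $k_0 \coloneqq \arg\min_\ell \Delta^{q,\ell}$ is almost surely unique, so exactly one product in the sum over $k$ is nonzero, namely the one at $k = k_0$. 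Distinguishing the cases $j \neq k_0$ (a single surviving term $\Delta^{j,k_0} = \Delta^{q,k_0} - \Delta^{q,j}$) and $j = k_0$ (the term $k=k_0$ is excluded from the sum, so $\mathcal{D}_j = 0$, which also equals $\Delta^{q,k_0} - \Delta^{q,j}$), I obtain in both cases the compact identity $\mathcal{D}_j = \min_\ell \Delta^{q,\ell} - \Delta^{q,j}$, valid almost surely.

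With this identity in hand the conclusion is immediate: substituting gives $z_j - \mathcal{D}_j = z_j + \Delta^{q,j} - \min_\ell\Delta^{q,\ell}$, and since $\min_\ell\Delta^{q,\ell}$ does not depend on $j$ it factors out of the outer minimum, yielding $\min_j(z_j - \mathcal{D}_j) = \min_j\left(z_j + \Delta^{q,j}\right) - \min_j\left(\Delta^{q,j}\right)$, which is exactly the right-hand side of the lemma.

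I expect the only genuine subtlety to be the treatment of ties in the argmin at the key step: the indicator-product identity for $\mathcal{D}_j$ can fail on the null event where two components of $\bm\Delta^{(q)}$ attain the minimum simultaneously, so the equality is an almost-sure one rather than a pointwise algebraic identity. This is harmless in the intended application, since Lemma~\ref{prop:alternativeMin} is invoked only inside the distributional argument of Proposition~\ref{l:jointSim}, where the underlying continuous distributions guarantee that such ties occur with probability zero.
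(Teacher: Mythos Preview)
Your argument is correct: the key identity $\mathcal{D}_j = \min_\ell \Delta^{q,\ell} - \Delta^{q,j}$ follows exactly as you describe, and the lemma is then immediate. The paper does not supply a proof, merely stating that it is ``straightforward'', so there is nothing to compare against; your approach is the natural one and your remark about ties being a null event is an appropriate caveat.
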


\begin{lemma}\label{prop:MinAlternative2}
    \begin{align*}
    &\min_j \left(z_1+\Delta^{q,1},z_2+\Delta^{q,2},\ldots,z_d+\Delta^{q,d}\right) - \min_k \left(\Delta^{q,1},\Delta^{q,2},\ldots,\Delta^{q,d}\right)\\ 
    & = - \max_j \left(T_1-z_1,T_2-z_2,\ldots,T_d-z_d\right) + \max_j \left(T_1,T_2,\ldots,T_d\right) \\
\end{align*}
\end{lemma}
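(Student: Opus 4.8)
The plan is to prove this purely by algebraic substitution, exploiting the defining relation $\Delta^{q,k}=T_q-T_k$ from Equation~\eqref{eq:wk_w1}. The crucial observation is that the conditioning index $q$ is fixed, so $T_q$ appears as an additive constant in every entry of both minima and can be pulled outside. First I would rewrite the two terms on the left-hand side separately. For the first minimum, substituting $\Delta^{q,j}=T_q-T_j$ gives
\[
\min_j\left(z_j+\Delta^{q,j}\right)=\min_j\left(z_j+T_q-T_j\right)=T_q+\min_j\left(z_j-T_j\right).
\]
For the second minimum, substituting $\Delta^{q,k}=T_q-T_k$ gives
\[
\min_k\left(\Delta^{q,k}\right)=\min_k\left(T_q-T_k\right)=T_q+\min_k\left(-T_k\right).
\]

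Next I would convert each minimum into a maximum using the elementary identity $\min_i(x_i)=-\max_i(-x_i)$, valid since both minima range over the same index set $\{1,\dots,d\}$. This turns $\min_j(z_j-T_j)$ into $-\max_j(T_j-z_j)$ and $\min_k(-T_k)$ into $-\max_k(T_k)$. Substituting these back, the left-hand side becomes
\[
\left[T_q-\max_j\left(T_j-z_j\right)\right]-\left[T_q-\max_k\left(T_k\right)\right]=-\max_j\left(T_j-z_j\right)+\max_k\left(T_k\right),
\]
which is exactly the stated right-hand side (noting $\max_j(T_j-z_j)=\max_j(T_1-z_1,\dots,T_d-z_d)$ and $\max_k(T_k)=\max_j(T_1,\dots,T_d)$).

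There is essentially no analytic obstacle here; the only points requiring care are bookkeeping rather than mathematics. I would make explicit that the relation $\Delta^{q,k}=T_q-T_k$ holds for the \emph{same} fixed $q$ used in both minima, so that the additive constant $T_q$ is identical in the two terms and cancels cleanly upon subtraction. I would also remark that the apparent difference in summation indices ($\min_j$ versus $\min_k$) in the statement is purely cosmetic, as both are minima over the full component set $\{1,\dots,d\}$, and that the $\min$-to-$\max$ conversion requires no assumptions on signs or distinctness of the $T_k$. This establishes the lemma.
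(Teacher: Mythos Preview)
Your proposal is correct and is exactly the kind of direct algebraic substitution the paper intends: the paper omits the proof of this lemma, stating only that it is ``straightforward,'' and your argument---substituting $\Delta^{q,k}=T_q-T_k$, pulling out the common additive constant $T_q$, and applying $\min_i(x_i)=-\max_i(-x_i)$---is the natural way to fill in that gap.
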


\begin{proof}
Proposition~\ref{l:jointSim} and Algorithm~\ref{alg:JointMGP} are written with $q=1$, but since the proof does not change with the value of $q$, we prove the lemma with a generic value $q=1,\ldots,d$. Let $\widetilde{\bm Z}=(Z_1,\ldots,Z_d)$ be a d-dimensional vector resulting from Algorithm~\ref{alg:JointMGP}. We wish to prove that $\widetilde{\bm Z}$ converges in distribution towards a standard MGP distribution. Then, $\widetilde{\bm Z}$ satisfies Equation~\eqref{eq:MGPEq}
\[
    \widetilde{Z}_j = E +  \sum_{k=1,k\neq j}^d \widetilde{\Delta}^{j,k} \prod_{\ell=1,\ell\neq k}^d \mathbf{1}_{\widetilde{\Delta}^{\ell,k}<0}, \mbox{ for all } j=1,\ldots, d,
\]
where $E$ is a standard exponential distribution and $\widetilde{\Delta}^{j,k}  = \widetilde{Z}_j - \widetilde{Z}_k$, for all $j,k=1,\ldots, d$. Recall also the notation :  for a fixed $q\in\{1,\dots,d\}$, the $q$-th vector of differences is defined as follows
\[
\widetilde{\bm\Delta}^{(q)}=\left(\widetilde{\Delta}^{q,1},\dots,\widetilde{\Delta}^{q,d}\right)\in\mathbb{R}^{d}.
\]

Then,  the multivariate distribution of $\widetilde{\bm Z}$ is given by  

\begin{align*}
    \mathbb{P}\left(\widetilde{\bm Z} \leq \bm z\right) &= \mathbb{P}\left(E \leq z_1 -\sum_{k=1,k\neq 1}^d \widetilde{\Delta}^{1,k} \prod_{l=1,l\neq k}^d \mathbf{1}_{\widetilde{\Delta}^{l,k}\leq0},\ldots,E\leq z_d -\sum_{k=1,k\neq d}^d \widetilde{\Delta}^{d,j} \prod_{l=1,l\neq k}^d \mathbf{1}_{ \widetilde{\Delta}^{l,k}\leq0}  \right)\\
    &=\mathbb{P}\left(E \leq \min_j\left\{z_j -\sum_{k=1,k\neq j}^d \widetilde{\Delta}^{j,k}\prod_{l=1,l\neq k}^d \mathbf{1}_{\widetilde{\Delta}^{l,k}\leq0}  \right\}\right)\\
    &= \int \mathbb{P}\left(\left.E \leq \min_j\left\{z_j -\sum_{k=1,k\neq j}^d \widetilde{\Delta}^{j,k}\prod_{l=1,l\neq k}^d \mathbf{1}_{\widetilde{\Delta}^{l,k}\leq0}  \right\} \right \vert \widetilde{\bm \Delta}^{(q)} = \widetilde{\bm \delta}^{(q)}\right) f_{\widetilde{\bm \Delta}^{(q)}}(\widetilde{\bm \delta}^{(q)}) \d \widetilde{\bm \delta}^{(q)} \\
    &= \int \mathbb{P}\left(E \leq \min_j\left\{z_j -\sum_{k=1,k\neq j}^d \widetilde{\delta}^{j,k} \prod_{l=1,l\neq k}^d \mathbf{1}_{\widetilde{\delta}^{l,k}\leq0}  \right\}\right) f_{\widetilde{\bm \Delta}^{(q)}}(\widetilde{\bm \delta}^{(q)}) \d \widetilde{\bm \delta}^{(q)}\\
     &= \int \left[1- \min\left( 1, \exp\left[-\min_j\left\{z_j -\sum_{k=1,k\neq j}^d \widetilde{\delta}^{j,k}\prod_{l=1,l\neq k}^d \mathbf{1}_{\widetilde{\delta}^{l,k}\leq0}  \right\}\right]\right)\right] f_{\widetilde{\bm \Delta}^{(q)}}(\widetilde{\bm \delta}^{(q)}) \d \widetilde{\bm \delta}^{(q)}\\
     &=1 - \mathbb{E}_{\widetilde{\bm \Delta}^{(q)}}\left[\min\left( 1, \exp\left[-\min_j\left\{z_j -\sum_{k=1,k\neq j}^d \widetilde{\Delta}^{j,k}\prod_{l=1,l\neq k}^d \mathbf{1}_{\widetilde{\Delta}^{l,k}\leq0}  \right\}\right]\right)\right]\\
     &= 1 - \mathbb{E}_{\widetilde{\bm \Delta}^{(q)}}\left[\min\left( 1, e^{-\min_j \left(z_1+\widetilde{\Delta}^{q,1},z_2+\widetilde{\Delta}^{q,2},\ldots,z_d+\widetilde{\Delta}^{q,d}\right) + \min_j\left(\widetilde{\Delta}^{q,1},\widetilde{\Delta}^{q,2},\ldots,\widetilde{\Delta}^{q,d}\right)  }\right)\right]
\end{align*}
Last equation arises from Lemma~\ref{prop:alternativeMin}. Then, assuming that the Portemanteau theorem holds in dimension $d>2$, 
\begin{align*}
    \mathbb{P}\left(\left.\widetilde{\bm Z} < \bm z\right\lvert \bm \Delta_1^{(q)}, \ldots,\bm \Delta_{n}^{(q)}\right) &\rightarrow 1 - \mathbb{E}_{\bm \Delta^{(q)}}\left[\min\left( 1, e^{-\min_j \left(z_1+\Delta^{q,1},z_2+\Delta^{q,2},\ldots,z_d+\Delta^{q,d}\right) + \min_j \left(\Delta^{q,1},\Delta^{q,2},\ldots,\Delta^{q ,d}\right)  }\right)\right]
\end{align*}
Now, noting that the term in the exponent could be written alternatively as described in Lemma~\ref{prop:MinAlternative2} injecting this result into the exponent term gives 

\begin{align*}
    \mathbb{P}\left(\left.\widetilde{\bm Z} < \bm z\right\lvert \bm \Delta_1^{(q)}, \ldots,\bm \Delta_{n}^{(q)}\right) &\rightarrow 1 - \mathbb{E}_{\bm \Delta^{(q)}}\left[\min\left( 1, e^{\max_j \left(T_1-z_1,T_2-z_2,\ldots,T_d-z_d\right) - \max_j \left(T_1,T_2,\ldots,T_d\right)}\right)\right]\\ 
    &=1 - \mathbb{E}_{\bm \Delta^{(q)}}\left[\min\left( 1, e^{\max\left(\bm T - \bm z\right) - \max \left(\bm T\right)}\right)\right],
\end{align*}
hence, one recovers the cumulative distribution function of a multivariate GP distributions as defined in \citep{rootzen2018multivariate2}. 
\end{proof}

\subsection{Theoretical support for Algorithm~\ref{algo:CondAlgo}}\label{subsec:theo:cond}

Proposition~\ref{prp:cond:deltaj} derives the density of the conditional distribution of $\Delta^{q,j}$ given $\bm Z_{-j}=\bm z_{-j}$ needed for Algorithm~\ref{algo:CondAlgo}. 

\begin{proposition}\label{prp:cond:deltaj}
    For $q\in \{1,\ldots,d\}$ with $q\neq j$, let $z_\star = \max \bm z_{-j}$. The conditional distribution of $\Delta^{q,j}$ given $\bm Z_{-j}=\bm z_{-j}$ is given by
\begin{enumerate}
    \item If $z_\star>0$ and $z_\star=z_q$, 
    \[
    f_{\Delta^{q,j} \mid \bm Z_{-j}=\bm z_{-j}}(\delta^{q,j} ) = \frac{1}{I_1(0) + I_2(0)} \left(\mathbf{1}_{\delta^{q,j}>0} + \e^{\delta^{q,j}}
         \mathbf{1}_{\delta^{q,j} \leq 0}\right)f_{\bm \Delta^{(q)}}\left(\bm{\delta}^{(q)} \right).
         \]
    \item If $z_\star>0$ and $z_\star\neq z_q$, denoting $\delta_\star \coloneqq z_q - z_\star$, 
    \[
    f_{\Delta^{q,j} \mid \bm Z_{-j}=\bm z_{-j}}(\delta^{q,j} ) =\frac{1}{I_1(z_q-z_\star) + I_2(z_q-z_\star)} \left( \e^{\delta^{q,j}}  \mathbf{1}_{\delta_{\star}>\delta^{q,j}} + \e^{\delta_{\star}}\mathbf{1}_{\delta_{\star}\leq\delta^{q,j}}\right) f_{\bm \Delta^{(q)}}\left(\bm{\delta}^{(q)} \right).
    \]
    \item If $z_\star\leq0$, 
\[
f_{\Delta^{q,j} \mid \bm Z_{-j}=\bm z_{-j}}(\delta^{q,j} ) = \frac{1}{I_1(z_q)} \e^{\delta^{q,j}}  f_{\bm \Delta^{(q)}}\left(\bm{\delta}^{(q)} \right) \mathbf{1}_{\delta^{q,j} <z_q}.
\]
\end{enumerate}
where \begin{itemize}
    \item  $\bm \delta^{(q)}=(\delta^{(q,1)},\dots,\delta^{(q,d)})$
    \item for any $x\in\mathbb{R}$, 
\[
I_1(x)\coloneqq \int_{-\infty}^x \e^{\zeta} f_{\bm \Delta^{(q)}}\left(\bar{\bm{\delta}}^{(q)} \right) \d \zeta \quad \text{ and } \quad
     I_2(x)\coloneqq \e^x\int_x^{\infty}  f_{\bm \Delta^{(q)}}\left(\bar{\bm{\delta}}^{(q)} \right) \d \zeta , 
 \]
\end{itemize}
where $\bar{\bm{\delta}}^{(q)}$ has the same components as $\bm{\delta}^{(q)}$ except for $\delta^{q,j}$ which is replaced by $\zeta= z_q - z_j$.
\end{proposition}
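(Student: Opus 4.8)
The plan is to reduce everything to the joint density of $\bm Z$. First I would observe that on the conditioning event $\{\bm Z_{-j}=\bm z_{-j}\}$ the coordinate $Z_q$ is fixed equal to $z_q$ (since $q\neq j$), so that $\Delta^{q,j}=Z_q-Z_j=z_q-Z_j$ is an affine function of $Z_j$ with Jacobian $1$. It therefore suffices to compute the conditional density of $Z_j$ given $\bm Z_{-j}=\bm z_{-j}$ and then substitute $z_j=z_q-\delta^{q,j}$. This I would do by forming the ratio $f_{\bm Z}(\bm z_{-j},z_j)\big/\!\int_{\mathbb R}f_{\bm Z}(\bm z_{-j},\zeta)\,\d\zeta$, assuming the conditioning value lies in the support so the denominator is positive, and evaluating it case by case.

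To get the joint density I would use the representation~\eqref{eq:SMGPRep}, $\bm Z=E\bm 1+\bm T-\max(\bm T)\bm 1$ with $E\sim\mathrm{Exp}(1)$ independent of $\bm T$, together with $\Delta^{q,k}=T_q-T_k$. Since $Z_k=E+\min_\ell\Delta^{q,\ell}-\Delta^{q,k}$, the vector $\bm Z$ depends on $(E,\bm T)$ only through $(E,\bm\Delta^{(q)})$, and (assuming $\bm T$, hence $\bm\Delta^{(q)}$, admits a density) the map $(E,(\Delta^{q,k})_{k\neq q})\mapsto\bm Z$ is a bijection onto $\{\bm z:\max_k z_k>0\}$ with inverse $\Delta^{q,k}=z_q-z_k$, $E=\max_k z_k$, and Jacobian of modulus $1$: it factors through the shift $z_q=e+\min_\ell\delta^{q,\ell}$ (Jacobian $1$) followed by the linear map $z_k=z_q-\delta^{q,k}$ (Jacobian $\pm1$). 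Using independence of $E$ and $\bm\Delta^{(q)}$ this yields
\[
f_{\bm Z}(\bm z)=\e^{-\max_k z_k}\,\mathbf 1_{\max_k z_k>0}\,f_{\bm\Delta^{(q)}}\bigl(\bm\delta^{(q)}(\bm z)\bigr),
\]
where $\bm\delta^{(q)}(\bm z)$ is the vector with $k$-th entry $z_q-z_k$; alternatively this formula can be quoted from \citet{rootzen2018multivariate2}. The delicate point is that $\min_\ell\delta^{q,\ell}$ is only piecewise linear, so the change of variables is justified only off the Lebesgue-null set of ties, which is harmless for densities.

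Finally I would substitute $z_j=z_q-\zeta$ in the ratio, so that $\zeta$ ranges over the values of $\Delta^{q,j}$ and the argument of $f_{\bm\Delta^{(q)}}$ becomes exactly the vector $\bar{\bm\delta}^{(q)}$, and I would resolve the factor $\e^{-\max(z_\star,\,z_q-\zeta)}\mathbf 1_{\max(z_\star,\,z_q-\zeta)>0}$ by splitting on $z_\star=\max\bm z_{-j}$. If $z_\star>0$ the indicator is identically $1$ and $\max(z_\star,z_q-\zeta)$ equals $z_\star$ for $\zeta\ge z_q-z_\star$ and $z_q-\zeta$ otherwise; splitting the normalising integral at $\zeta=z_q-z_\star$ gives precisely $\e^{-z_q}\bigl(I_1(z_q-z_\star)+I_2(z_q-z_\star)\bigr)$, and after the $\e^{-z_q}$ cancels with the one coming from the numerator and the shift $z_j\mapsto z_q-\delta^{q,j}$ is undone, this produces Case~2, with Case~1 the degenerate instance $z_\star=z_q$ (i.e. $z_q-z_\star=0$). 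If $z_\star\le 0$ the indicator becomes $\mathbf 1_{\zeta<z_q}$ and the max equals $z_q-\zeta$, so the normalising integral is $\e^{-z_q}I_1(z_q)$, giving Case~3. I expect the joint-density step to be the only real obstacle; the ensuing case analysis is routine bookkeeping (the only things to watch are the sign of $\delta_\star=z_q-z_\star$ and the rewriting $\e^{-z_\star}\mapsto\e^{z_q-z_\star}\e^{-z_q}$), and the split into Cases~1 and~2 is made purely for the sampling scheme (bootstrap versus rejection), not for the mathematics.
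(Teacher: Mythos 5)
Your proposal is correct and follows essentially the same route as the paper's proof: you form the ratio $f_{\bm Z}(\bm z)/f_{\bm Z_{-j}}(\bm z_{-j})$ with the joint density $\e^{-\max(\bm z)}f_{\bm \Delta^{(q)}}(\bm\delta^{(q)})\mathbf{1}_{\bm z\nleqslant \bm 0}$, integrate out $z_j$ via the substitution $\zeta=z_q-z_j$, and split on the sign of $z_\star$ and the location of the kink at $\zeta=z_q-z_\star$, exactly as in the paper. The only differences are cosmetic and sound: you justify the joint-density formula by a change of variables from $(E,\bm\Delta^{(q)})$ (the paper simply quotes it), and you treat Case~1 as the degenerate instance $\delta_\star=0$ of Case~2, which is mathematically equivalent to the paper's separate treatment.
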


\begin{proof}
The conditional distribution of $Z_j$  given $\bm Z_{-j} = \bm z_{-j}$ is given by  
\begin{equation*}
    f_{Z_j | \bm Z_{-j}= \bm z_{-j}}(z_j) =  f_{\bm Z}(\bm z)/f_{\bm Z_{-j}}(\bm z_{-j} ),  
\end{equation*}
where the joint distribution function of $\bm Z$ is given by 
\begin{align*}
    f_{\bm Z}(\bm z) = \mathrm{e}^{-\max\{\bm z\}} f_{\bm \Delta^{(q)}}\left(\bm{\delta}^{(q)} \right) \mathbf{1}_{\bm z \nleqslant 0}
\end{align*}
for a fixed $q\in \{1,\dots,d\}$  such that $q\neq j$.

We distinguish between two cases depending on the sign of $z_{\star}:=\max \bm z_{-j}$. 

\paragraph{If $z_{\star}>0$.} First, the marginal distribution of $\bm Z_{-j}$ is obtained by 
\begin{eqnarray*}
    f_{\bm Z_{-j}}(\bm z_{-j} ) &=& \int_\mathbb{R}  \e^{-\max\{\bm z\}} f_{\bm \Delta^{(q)}}\left(\bm{\delta}^{(q)} \right) \d z_j \\
    &=&\int_\mathbb{R} \e^{-\max\{z_1-z_q+\zeta,z_2-z_q+\zeta,\ldots, z_d - z_1 +\zeta\}-z_q+\zeta} f_{\bm \Delta^{(q)}}\left(\bar{\bm{\delta}}^{(q)} \right) \d \zeta, \\
\end{eqnarray*}
where $\bar{\bm{\delta}}^{(q)}$ has the same components as $\bm{\delta}^{(q)}$ except for $\delta^{q,j}$ which is replaced by $\zeta= z_q - z_j$.
The last integral is split in two integrals of  $f_{\bm Z}(\bm z)$ on $\mathbb{R}_{-}$ and $\mathbb{R}_{+}$, respectively. We now compute each integral separately. 

For the integral $\mathbb{R}_{+}$, let us note that when $\zeta=z_q-z_j>0$, then $z_\star >z_j$, thus\\ ${\max\{z_1-z_q+\zeta,z_2-z_q+\zeta,\ldots, z_d - z_q +\zeta\}+z_q-\zeta = \max(\bm z) = z_\star}$, which yields
 \begin{equation*}
   \int_0^{\infty} \e^{-\max\{z_1-z_q+\zeta,z_2-z_q+\zeta,\ldots, z_d - z_q +\zeta\}-z_q+\zeta} f_{\bm \Delta^{(q)}}\left(\bar{\bm{\delta}}^{(q)} \right) \d \zeta= \e^{-z_{\star}} \int_0^{\infty}   f_{\bm \Delta^{(q)}}\left(\bar{\bm{\delta}}^{(q)}\right) \d \zeta.
\end{equation*}

For the integral $\mathbb{R}_{-}$, we split the integral on the intervals $(-\infty, z_q -z_\star]$  and $[z_q -z_\star,0]$. On the interval $(-\infty, z_q -z_\star]$, $z_j\geq  z_\star$, so that $\max(\bm z) = z_j$ and on the interval $[z_q -z_\star,0]$, $z_j\leq  z_\star$, so that $\max(\bm z) = z_\star$. Thus, we obtain 
\begin{eqnarray*}
\lefteqn{\int_{-\infty}^0 \e^{-\max\{z_1-z_q+\zeta,z_2-z_q+\zeta,\ldots, z_d - z_q +\zeta\}-z_q+\zeta} f_{\bm \Delta^{(q)}}\left(\bar{\bm{\delta}}^{(q)} \right) \d \zeta}\\
&=& \left(\e^{-z_q}\int_{-\infty}^{z_q -z_\star} e^{\zeta}  f_{\bm \Delta^{(q)}}\left( \bar{\bm{\delta}}^{(q)}\right) \d \zeta + e^{-z_\star} \int_{z_q -z_\star}^0   f_{\bm \Delta^{(q)}}\left( \bar{\bm{\delta}}^{(q)}\right) \d \zeta \right)\mathbf{1}_{z_{\star}\neq z_q} \\
 &&+ \left(e^{-z_q}\int_{-\infty}^0 e^{\zeta} f_{\bm \Delta^{(q)}}\left( \bar{\bm{\delta}}^{(q)}\right)\d \zeta \right)\mathbf{1}_{z_{\star}=z_q} \, . 
\end{eqnarray*}

Hence, the marginal distribution of $\bm Z_{-j}$ is equal to  
\begin{enumerate}
\item If $z_{\star} >0$ and $z_{\star} = z_q $
\[
f_{\bm Z_{-j}}(\bm z_{-j} ) = \e^{-z_q}\left(\int_{-\infty}^0 \e^{\zeta} f_{\bm \Delta^{(q)}}\left(\bar{\bm{\delta}}^{(q)} \right) \d \zeta + \int_0^{\infty}   f_{\bm \Delta^{(q)}}\left(\bar{\bm{\delta}}^{(q)} \right) \d \zeta  \right) =  \e^{- z_q} \left(I_1(0) + I_2(0) \right) 
\]
\item If $z_{\star} >0$ and $z_{\star} \neq z_q$
\begin{align*}
f_{\bm Z_{-j}}(\bm z_{-j} ) &=\e^{-z_q}\left(\int_{-\infty}^{z_q -z_\star} \e^{\zeta}  f_{\bm \Delta^{(q)}}\left(\bar{\bm{\delta}}^{(q)} \right) \d \zeta + \e^{z_q-z_\star }  \int_{z_q -z_\star}^{\infty} f_{\bm \Delta^{(q)}}\left(\bar{\bm{\delta}}^{(q)} \right) \d \zeta\right)\\
& = \e^{-z_q} \left(I_1(z_q-z_\star) + I_2(z_q-z_\star) \right)
\end{align*}
\end{enumerate}
where we have denoted, for any $x \in \mathbb R$
\begin{equation*}
    I_1(x)= \int_{-\infty}^x \e^{\zeta} f_{\bm \Delta^{(q)}}\left(\bar{\bm{\delta}}^{(q)} \right) \d \zeta \quad \text{ and } \quad
     I_2(x)= \e^x\int_x^{\infty}  f_{\bm \Delta^{(q)}}\left(\bar{\bm{\delta}}^{(q)} \right) \d \zeta
\end{equation*}
The conditional distribution of $Z_j$ given $\bm Z_{-j} =\bm z_{-j}$ is then obtained as
\begin{enumerate}
\item If $z_{\star} >0$ and $z_{\star} = z_q $
\[
f_{Z_j | \bm Z_{-j} =  \bm z_{-j}}(z_j) =  \frac{1}{I_1(0) + I_2(0)} \left(\mathbf{1}_{z_q>z_j} + \e^{z_q-z_j}
         \mathbf{1}_{z_q\leq z_j}\right)f_{\bm \Delta^{(q)}}\left(\bm{\delta}^{(q)} \right)
\]
\item If $z_{\star} >0$ and $z_{\star} \neq z_q$
\[
f_{Z_j | \bm Z_{-j} =  \bm z_{-j}}(z_j) =  \frac{1}{I_1(z_q-z_\star) + I_2(z_q-z_\star)} \left( \e^{(z_q -z_j)}\mathbf{1}_{z_j>z_{\star}} + \e^{(z_q -z_{\star})}\mathbf{1}_{z_j\leq z_{\star}}\right) f_{\bm \Delta^{(q)}}\left(\bm{\delta}^{(q)} \right)
\]
\end{enumerate}
Yielding, 
\begin{enumerate}
\item If $z_{\star} >0$ and $z_{\star} = z_q $
\[
f_{\Delta^{q,j} | \bm Z_{-j}=\bm z_{-j}}(\delta^{q,j} )  =   \frac{1}{I_1(0) + I_2(0)} \left(\mathbf{1}_{\delta^{q,j}>0} + \e^{\delta^{q,j}}
         \mathbf{1}_{\delta^{q,j} \leq 0}\right)f_{\bm \Delta^{(q)}}\left(\bm{\delta}^{(q)} \right)
\]
\item If $z_{\star} >0$ and $z_{\star} \neq z_q$
\[
f_{\Delta^{q,j} | \bm Z_{-j}=\bm z_{-j}}(\delta^{q,j} )  =    \frac{1}{I_1(z_q-z_\star) + I_2(z_q-z_\star)} \left( \e^{\delta^{q,j}}\mathbf{1}_{\delta_{\star}>\delta^{q,j}} + \e^{\delta_{\star}}\mathbf{1}_{\delta_{\star}\leq\delta^{q,j}}\right) f_{\bm \Delta^{(q)}}\left(\bm{\delta}^{(q)} \right)
\]
\end{enumerate}
where $\delta_\star = z_q - z_\star$.

Note that while the conditional distribution  does not depend on $\bm z_{-j}$ when $z_{\star}=z_q$, it does depend on $\bm z_{-j}$ when $z_{\star}\neq z_q$. 

\paragraph{If  $ z_{\star} \leq 0$ (Case 3).} The joint distribution is given by
\begin{equation*}
    f_{\bm Z}(\bm z) =  e^{-\max\{\bm z\}} f_{\bm \Delta^{(q)}}\left(\bm{\delta}^{(q)} \right) \mathbf{1}_{z_j >0}.
\end{equation*}
Thus, the marginal distribution of $\bm Z_{-j}$ is obtained via
\begin{equation*}
    f_{\bm Z_{-j}}(\bm z_{-j} )=  \e^{-z_q}\int_{- \infty}^{z_q} \e^{\zeta}f_{\bm \Delta^{(q)}}({\bar{\bm{\delta}}^{(q)}}) \d \zeta,
\end{equation*} the conditional distribution  is then given by 
\begin{equation*}
    f_{Z_j | \bm Z_{-j} =\bm z_{-j}}(z_j) = \frac{1}{I_1(z_q)} \e^{z_q - z_j}  f_{\bm \Delta^{(q)}}\left(\bm{\delta}^{(q)} \right) \mathbf{1}_{z_j>0} \, .
\end{equation*}
Finally, 
\begin{equation*}
f_{\Delta^{q,j} | \bm Z_{-j}=\bm z_{-j}}(\delta^{q,j} ) = \frac{1}{I_1(z_q)} \e^{\delta^{q,j}}  f_{\bm \Delta^{(q)}}\left(\bm{\delta}^{(q)} \right) \mathbf{1}_{\delta_j <z_q} \, .
\end{equation*} 
 \end{proof}

Proposition~\ref{prop:RS} guarantees that the use of the rejection sample in a multivariate context. 

\begin{proposition}\label{prop:RS}
    Let $V$ be a random variable with density $f$ and let $\bm Z$ be a $d$-dimensional random vector of density $g$ from which simulation of samples is known. Assuming that there exits a function $R$ in $\mathbb R^{d-1}$ such that $\int (R(\bm z_{-j}))^{-1}  \d \bm z_{-j} < \infty$ and $z_j$-independent.  Then, for all $v \in \mathbb R$,
\begin{equation*}
    \mathbb{P}\left(V < v\right) = \mathbb{P}\left(Z_j < v \lvert U < \frac{f(Z_j)}{R(\bm Z_{-j})g(\bm Z)}\right) \, ,
\end{equation*}
where $U$ is a uniform random variable on $[0,1]$.
\end{proposition}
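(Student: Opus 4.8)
The plan is to carry out the classical rejection-sampling verification, paying attention to the fact that the proposal variable is the full $d$-vector $\bm Z\sim g$ whereas only its $j$-th coordinate is retained. First I would introduce $U\sim\mathrm{Unif}[0,1]$ independent of $\bm Z$, set the acceptance event $A\coloneqq\bigl\{U<f(Z_j)/(R(\bm Z_{-j})g(\bm Z))\bigr\}$, and record the (implicit) envelope hypothesis that $f(z_j)\le R(\bm z_{-j})\,g(\bm z)$ on the support of $g$, so that the ratio inside $A$ lies in $[0,1]$ almost surely and hence $\mathbb P(A\mid\bm Z=\bm z)$ equals that ratio. Since $(\bm Z,U)$ has joint density $g(\bm z)\mathbf 1_{[0,1]}(u)$, integrating out $u$ first yields, for every $v\in\mathbb R$,
\[
\mathbb P\bigl(Z_j<v,\,A\bigr)=\int_{\{z_j<v\}}\Biggl(\int_{\mathbb R^{d-1}}\frac{f(z_j)}{R(\bm z_{-j})\,g(\bm z)}\,g(\bm z)\,\d\bm z_{-j}\Biggr)\d z_j .
\]

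The next step is purely algebraic: cancel $g(\bm z)$ and use that $R$ does not depend on $z_j$, so the inner integral equals $f(z_j)\,c$ with $c\coloneqq\int_{\mathbb R^{d-1}}R(\bm z_{-j})^{-1}\,\d\bm z_{-j}<\infty$ a finite constant by hypothesis. Hence $\mathbb P(Z_j<v,A)=c\int_{-\infty}^v f(z_j)\,\d z_j=c\,\mathbb P(V<v)$; letting $v\to+\infty$ identifies the normalising constant, $\mathbb P(A)=c>0$, and therefore
\[
\mathbb P\bigl(Z_j<v\mid A\bigr)=\frac{\mathbb P(Z_j<v,A)}{\mathbb P(A)}=\mathbb P(V<v),
\]
which is exactly the assertion of the proposition.

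The computation itself is short; what needs care are the standing conditions under which it is legitimate, and I would spell them out explicitly in the proof. First, a support/domination condition — $f(z_j)>0\Rightarrow g(z_j,\bm z_{-j})>0$ together with $f(z_j)\le R(\bm z_{-j})g(\bm z)$ — is what makes $f(Z_j)/(R(\bm Z_{-j})g(\bm Z))$ a genuine conditional acceptance probability and ensures that cancelling $g$ introduces no $0/0$ ambiguity; integrating the domination inequality over $z_j$ also forces $c\le 1$, so that $\mathbb P(A)$ is a valid probability. Second, interchanging the $u$-integral with the $(z_j,\bm z_{-j})$-integral, and subsequently splitting the latter, are justified by Tonelli's theorem since the integrand is nonnegative. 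The one genuinely multivariate ingredient — and the reason the statement introduces an auxiliary envelope function $R$ rather than a constant — is precisely that $R$ depends on $\bm z_{-j}$ only: this is what lets the $\bm z_{-j}$-integral collapse to the constant $c$ and drop out of the conditional law of $Z_j$, leaving exactly the density $f$. I expect this structural observation, rather than any estimate, to be the crux of the argument.
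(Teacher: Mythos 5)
Your proof is correct and follows essentially the same route as the paper's: condition on $\bm Z=\bm z$ so the acceptance probability becomes $f(z_j)/(R(\bm z_{-j})g(\bm z))$, cancel $g$, and use that $R$ depends only on $\bm z_{-j}$ so the $\bm z_{-j}$-integral collapses to a finite constant that cancels between numerator and denominator. Your explicit mention of the envelope condition $f(z_j)\le R(\bm z_{-j})g(\bm z)$ and of Tonelli is a welcome clarification of hypotheses the paper leaves implicit, but it does not change the argument.
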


\begin{proof} 
Let $v \in \mathbb R$. 
\begin{align*}
    \mathbb{P}\left(Z_j < v \mid U < \frac{f(Z_j)}{R(\bm Z_{-j})g(\bm Z)}\right)
    &= \frac{\mathbb{P}\left(Z_j < v , U < \frac{f(Z_j)}{R(\bm Z_{-j})g(\bm Z)}\right)}{\mathbb{P}\left(U < \frac{f(Z_j)}{R(\bm Z_{-j})g(\bm Z)}\right)}\\
    &=\frac{\int \mathbb{P}\left(Z_j < v , U < \frac{f(Z_j)}{R(\bm Z_{-j})g(\bm Z)}\mid \bm Z= \bm z\right) g(\bm z) \d \bm z}{\int \mathbb{P}\left(U < \frac{f(Z_j)}{R(\bm Z_{-j})g(\bm Z)}\mid \bm Z= \bm z\right) g(\bm z) \d \bm z} \\
    &= \frac{\int \mathbf{1}_{z_j < v}  \frac{f(z_j)}{R(\bm z_{-j})g(\bm z)} g(\bm z) \d \bm z}{\int \frac{f(z_j)}{R(\bm z_{-j})g(\bm z)} g(\bm z)  \d \bm z}\\
    & = \frac{\int_{-\infty}^v \mathbf{1}_{z_j < v}  f(z_j) \left(\int (R(\bm z_{-j}))^{-1}  \d \bm z_{-j}\right) \d z_j}{ \int  f(z_j) \left(\int (R(\bm z_{-j}))^{-1}  \d \bm z_{-j}\right) \d z_j}\, .
\end{align*}
Assuming that $ \int (R(\bm z_{-j}))^{-1}  \d \bm z_{-j} < \infty$ and $z_j$-independent, it follows that 
\begin{equation*}
     \mathbb{P}\left(Z_j < v \lvert U < \frac{f(Z_j)}{R(\bm Z_{-j})g(\bm Z)}\right) = \frac{\int (R(\bm z_{-j}))^{-1}  \d \bm z_{-j} \int_{-\infty}^v  f(z_j) \d z_j}{  \int (R(\bm z_{-j}))^{-1}  \d \bm z_{-j} \int f(z_j)\d z_j}=\int_{-\infty}^v  f(z_j) \d z_j = \mathbb{P}\left( V \leq v \right) \, . 
\end{equation*}
\end{proof}

Now, we need to specify the rejection constant $R(\bm z_{-j})$ is our context and verify that \\  ${\int (R(\bm z_{-j}))^{-1}  \d \bm z_{-j} < \infty}$ and that it is $z_j$-independent. $R(\bm z_{-j})$ is defined as follows
\[
R(\bm z_{-j}) = R(\bm \delta_{-j})= \sup_{\delta^{q,j}}\frac{f_{\Delta^{q,j}|\bm Z_{-j}=\bm z_{-j}}\left(\delta^{q,j} \right)}{f_{\bm \Delta^{(q)}}(\bm \delta^{(q)})}
\]
where $\delta^{q,j} = z_q - z_{j}$. Thanks to Proposition~\ref{prp:cond:deltaj}, we obtained that 
\begin{equation}\label{eq:choice:R}
R(\bm z_{-j}) = 
\begin{cases}
I_1(0) + I_2(0) & \text{ if $z_\star>0$ and $z_\star=z_q$} \\
I_1(z_1 +z_{\star}) + I_2(z_1 +z_{\star}) & \text{ if $z_\star>0$ and $z_\star \neq z_q$} \\
I_1(z_q)& \text{ if $z_\star\leq 0$ and $z_j>0$}
\end{cases}
\end{equation}

\begin{proposition}\label{prop:RS2}
    $R(\bm z_{-j})$ chosen as in Equation~\eqref{eq:choice:R} verifies ${\int (R(\bm z_{-j}))^{-1}  \d \bm z_{-j} < \infty}$ and that it is $z_j$-independent. 
\end{proposition}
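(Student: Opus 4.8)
The proposition asserts two things about the rejection constant $R(\bm z_{-j})$ of Equation~\eqref{eq:choice:R}: that it does not involve the component $z_j$ being reconstructed, and that $\int (R(\bm z_{-j}))^{-1}\,\d\bm z_{-j}<\infty$. I would establish the first by direct inspection of the formulas and the second by a domination argument, proceeding case by case.

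\textbf{Independence of $z_j$.} This is read off from the definitions of $I_1$ and $I_2$. In $I_1(x)=\int_{-\infty}^{x}\e^{\zeta}f_{\bm\Delta^{(q)}}(\bar{\bm\delta}^{(q)})\,\d\zeta$ and $I_2(x)=\e^{x}\int_{x}^{\infty}f_{\bm\Delta^{(q)}}(\bar{\bm\delta}^{(q)})\,\d\zeta$ the vector $\bar{\bm\delta}^{(q)}$ differs from $\bm\delta^{(q)}$ only in its $j$-th entry, which is set equal to the dummy variable $\zeta$; thus $z_j$ is integrated out, and $I_1(x),I_2(x)$ are functions of $x$ together with the remaining entries $\delta^{q,k}=z_q-z_k$, $k\neq j$, that is, of $\bm z_{-j}$ only. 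Since in the three branches of Equation~\eqref{eq:choice:R} the argument $x$ is one of $0$, $z_q-z_\star$, $z_q$ — all functions of $\bm z_{-j}$ because $q\neq j$ (so $z_q$ is a coordinate of $\bm z_{-j}$) and $z_\star=\max\bm z_{-j}$ — the quantity $R(\bm z_{-j})$ depends on $\bm z_{-j}$ alone.

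\textbf{Finiteness of the integral.} The starting point is the computation already carried out inside the proof of Proposition~\ref{prp:cond:deltaj}, from which $R(\bm z_{-j})$ can be recognised as the supremum over $\delta^{q,j}$ of the explicit, bounded, piecewise-exponential ratio $f_{\Delta^{q,j}\mid\bm Z_{-j}=\bm z_{-j}}(\delta^{q,j})/f_{\bm\Delta^{(q)}}(\bm\delta^{(q)})$; in particular $f_{\bm Z_{-j}}(\bm z_{-j})=\e^{-z_q}R(\bm z_{-j})$ in each case, which ties $R$ to an honest probability density. Evaluating that ratio at the value of $\delta^{q,j}$ that maximises the piecewise factor in each branch produces a bound of the form $(R(\bm z_{-j}))^{-1}\le N(\bm z_{-j})/\big(\sup_{\delta^{q,j}}g(\delta^{q,j})\big)$, where $N$ denotes the relevant normalising integral $I_1+I_2$ or $I_1$ and $g$ the piecewise factor. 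I would then change variables through $\bm z_{-j}\mapsto\big(z_q,(\delta^{q,k})_{k\neq j}\big)$ — a unit-Jacobian map carrying each case-region onto a product set — integrate out the increments $(\delta^{q,k})_{k\neq j}$ (the increment-integrals reduce to sub-marginal densities of $\bm\Delta^{(q)}_{-j}$, hence of total mass at most one), and bound the remaining one-dimensional integral in $z_q$.

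\textbf{Where the difficulty lies.} The last step is the crux, and it is genuinely different across the three cases. When $z_\star\le0$ (Case~3) the bound collapses, after the change of variables, to a sub-probability of the event $\{\max\bm Z_{-j}\le0\}$, which is at most one, so I would dispose of that case first. When $z_\star>0$ (Cases~1 and~2) one must in addition control the range of $z_q$ and verify that the normalising integrals $I_1(x)+I_2(x)$ stay bounded away from zero along the region of integration — this is precisely where the distinction between $z_\star=z_q$ and $z_\star\neq z_q$ enters, since the conditional law of $\Delta^{q,j}$ degenerates at the endpoints of its support. Keeping track of which factors are truly constant in $z_j$ versus varying with $\bm z_{-j}$, and assembling the three cases into a single finite bound, is the main obstacle; the $z_j$-independence and the change of variables themselves are routine.
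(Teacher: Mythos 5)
Your treatment of the $z_j$-independence is fine, and your Case~3 argument goes through (at least under the supremum reading of $R$), but there is a genuine gap exactly where you park ``the main obstacle'': Cases~1 and~2 are never established, and they carry the whole content of the proposition. Moreover, the route you sketch cannot close them as stated. After your (correct) unit-Jacobian change of variables $\bm z_{-j}\mapsto\bigl(z_q,(\delta^{q,k})_{k\neq j,q}\bigr)$, the quantity you propose to integrate, $N(\bm z_{-j})/\sup_{\delta^{q,j}}g(\delta^{q,j})$, is in Cases~1 and~2 a function of the increments alone: $I_1(0)+I_2(0)$ depends only on $(\delta^{q,k})_{k\neq j,q}$, and in Case~2 the argument $z_q-z_\star=\min\bigl(0,\min_{k\neq j,q}\delta^{q,k}\bigr)$ is likewise increment-determined. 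The case regions, however, are unbounded in the $z_q$-direction (Case~1 is $\{z_q>0\}\cap\{\delta^{q,k}\geq 0,\ k\neq j,q\}$, and in Case~2 $z_q$ ranges over a half-line unbounded above), so the ``remaining one-dimensional integral in $z_q$'' that you defer is an integral of a quantity constant in $z_q$ over an unbounded range; no control of the range of $z_q$, and certainly no lower bound on the normalisers, will make it finite. Indeed the worry that $I_1+I_2$ must ``stay bounded away from zero'' points the wrong way: with $(R)^{-1}=N/\sup g$ you need an \emph{upper} bound on $N$. This confusion stems from a reciprocal ambiguity you inherit from Equation~\eqref{eq:choice:R} and never resolve: you simultaneously take $R$ to be the supremum of the ratio (which in Case~1 gives $R=1/\bigl(I_1(0)+I_2(0)\bigr)$) and assert $f_{\bm Z_{-j}}(\bm z_{-j})=\e^{-z_q}R(\bm z_{-j})$ (which gives $R=I_1(0)+I_2(0)$); you must fix one convention before it is even clear what has to be integrated.

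For comparison, the paper's own proof never introduces your coordinates: it bounds $I_1$ and $I_2$ by $\int_{\mathbb{R}}f_{\bm \Delta^{(q)}}\bigl(\bar{\bm\delta}^{(q)}\bigr)\d\zeta$ (using $\e^{\zeta}\leq 1$ on the negative half-line) to conclude that $R$ is well defined and finite, and then asserts that applying the same bound ``iteratively on each component of $\bm z_{-j}$'' controls the outer integral by the total mass of $f_{\bm\Delta^{(q)}}$, i.e.\ by $1$ --- in effect treating all outer variables as difference coordinates. Your change of variables is more honest precisely because it exposes the leftover translation variable $z_q$ that this componentwise iteration does not account for; but having exposed it, you must actually deal with it (for instance by reinterpreting the integral in Proposition~\ref{prop:RS} as taken against the proposal density, or by restricting the domain of integration), not merely record it as a difficulty. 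As written, the finiteness claim in Cases~1 and~2 is not proved.
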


\begin{proof}
Consider the first case where $\bm z_{\star}= z_q>0$. Then, 
\begin{equation*}
    \int (R(\bm z_{-j}))^{-1}  \d \bm z_{-j} =\int \left(\int_{-\infty}^0 \e^{\zeta} f_{\bm \Delta^{(q)}}\left(\bar{\bm{\delta}}^{(q)} \right) \d \zeta + \int_0^{\infty}   f_{\bm \Delta^{(q)}}\left(\bar{\bm{\delta}}^{(q)} \right) \d \zeta \right) \d \bm z_{-j}   \, .
\end{equation*}

Knowing that $f_{\bm \Delta^{(q)}}(\bar{\bm{\delta}}^{(q)})$ is a probability density function, 
\begin{align}\label{eq:fDelta_yk}
    \int_0^{\infty} f_{\bm \Delta^{(q)}}(\bar{\bm{\delta}}^{(q)}) \d \zeta < \int_{\mathbb{R}} f_{\bm \Delta^{(q)}}(\bar{\bm{\delta}}^{(q)}) \d \zeta < +\infty \, .
\end{align} 
Similarly ,
\begin{align}\label{eq:expfDelta_yk}
    \int_{- \infty}^0 \e^{\zeta}f_{\bm \Delta^{(q)}}(\bar{\bm{\delta}}^{(q)}) \d \zeta&=\int_{0}^{+\infty} \e^{-\xi}f_{\bm \Delta^{(q)}}(\bar{\delta}^{q,1},\ldots,-\xi,\ldots,\bar{\delta}^{q,d}) \d \xi \nonumber\\ 
    &< \int_{0}^{+\infty} f_{\bm \Delta^{(q)}}(\bar{\delta}^{q,1},\ldots,-\xi,\ldots,\bar{\delta}^{q,d}) \d \xi\nonumber\\ 
    &<\int_{\mathbb{R}} f_{\bm \Delta^{(q)}}(\bar{\delta}^{q,1},\ldots,-\xi,\ldots,\bar{\delta}^{q,d}) \d \xi < +\infty
\end{align}

Inequalities~\eqref{eq:fDelta_yk} and~\eqref{eq:expfDelta_yk} show that $R(\bm z_{-j})$ is well defined and finite.  Now, we show that  ${\int (R(\bm z_{-j}))^{-1}  \d \bm z_{-j} < \infty}$, which can be shown through the application of this same argument iteratively on each component of $\bm z_{-j} $, which gives 
\begin{align*}
    \int_{\mathbb{R}}\ldots \int_{\mathbb{R}} \int_{0}^{+\infty}f_{\bm \Delta^{(q)}}(\bar{\bm{\delta}}^{(q)}) \d \zeta \d z_d \ldots \d z_1 < \int_{\mathbb{R}}\cdots \int_{\mathbb{R}} \int_{\mathbb{R}}f_{\bm \Delta^{(q)}}(\bar{\bm{\delta}}^{(q)}) \d u \d z_d \ldots \d z_1 =1.
\end{align*}

These results held true for the remaining cases. 
\end{proof}

\bibliographystyle{chicago}%
\bibliography{mybibsimu}

\end{document}